\documentclass[aps,prl,reprint,superscriptaddress,nofootinbib,floatfix,longbibliography]{revtex4-2}
\usepackage{amsmath,amssymb,mathtools,bm,amsthm}
\usepackage{graphicx,tikz}
\usetikzlibrary{arrows.meta,positioning,calc}
\usepackage{microtype}
\usepackage{enumitem}
\usepackage{braket}
\usepackage{xcolor}
\usepackage{hyperref}
\hypersetup{colorlinks=true,citecolor=blue!45!black,linkcolor=blue!45!black,urlcolor=blue!45!black}
\newcommand{\F}{\mathbb F}
\newcommand{\E}{\mathbb E}
\newcommand{\Cliff}{\mathrm{Cliff}}
\newcommand{\Tr}{\operatorname{Tr}}
\newcommand{\Span}{\operatorname{span}}
\newcommand{\rank}{\operatorname{rank}}
\newcommand{\RM}{\mathrm{RM}}
\newcommand{\CTP}{\mathcal C}
\newcommand{\supp}{\operatorname{supp}}

\newcommand{\CPW}{w_{\mathrm{CP}}}
\newcommand{\snull}{\nu}
\newcommand{\cL}{\mathcal L}
\newcommand{\Dtr}{D_{\rm tr}}
\newcommand{\prlsection}[1]{{\em #1.---}}

\newcommand{\SMLong}{Supplementary Material}

\newcommand{\SM}{SM}

\newtheorem{theorem}{Theorem}
\newtheorem{lemma}[theorem]{Lemma}
\newtheorem{proposition}[theorem]{Proposition}
\newtheorem{corollary}[theorem]{Corollary}
\begin{document}

\title{Efficient Learning of Clifford-Scrambled Product States}
\author{Tobias Haug}
\email{tobias.haug@u.nus.edu}
\affiliation{Quantum Research Center, Technology Innovation Institute, Abu Dhabi, UAE}

\begin{abstract}
Clifford circuits acting on product magic states provide a compact ansatz exhibiting extensive magic, volume-law entanglement, and even classically hard sampling under standard complexity assumptions.
Here, we efficiently recover its hidden product subsystems from two-copy Bell sampling. 
Quadratic relations between Bell samples determine the irreducible blocks after removing Pauli stabilizers, and binary linear algebra constructs a Clifford disentangler.
For logarithmic-size blocks, approximate learning of the full state is efficient whenever the state remains inverse-polynomially separated from acquiring additional Pauli stabilizers. 
We efficiently learn $n$-qubit states prepared by random $T$-doped Clifford circuits with $T$-gate density below one via $O(n^2)$ Bell samples, and disentangle hidden product blocks in Clifford-augmented matrix product states.
For unitary learning, we exactly learn $T$-depth-one circuits using $O(n^2)$ queries.
Finally, we rule out pseudorandom states and unitaries with a product bipartition hidden by Clifford circuits.
\end{abstract}

\maketitle

\let\oldaddcontentsline\addcontentsline%
\renewcommand{\addcontentsline}[3]{}%

Learning a quantum state from measurements is a central task for characterizing quantum devices. Full tomography requires resources that grow exponentially with the number of qubits, which makes assumptions about the state essential~\cite{HaahEtAl2017}. Even a short preparation circuit, although sufficient for sample efficiency, does not generally make reconstruction computationally efficient~\cite{ZhaoEtAl2024Learning}. 
However, structural assumptions can make learning efficient~\cite{HuangEtAl2024Shallow,ArunachalamEtAl2023Phase}:
for example, matrix-product states (MPSs) admit efficient tomography when their bond dimensions and reconstruction conditioning are favorable~\cite{CramerEtAl2010}. Stabilizer states provide a different route: they can have volume-law entanglement, but their Pauli symmetries can be learned efficiently~\cite{montanaro2017learning}. Algorithms for states prepared with few non-Clifford gates extend this approach beyond stabilizers, using either collective or single-copy measurements~\cite{LeoneEtAl2024,ChiaLaiLin2024,GrewalEtAl2025}.

Recent results further allow agnostic learning of states with low stabilizer nullity~\cite{ChenGongYeZhang2025} and tomography controlled by stabilizer extent~\cite{ArunachalamDutt2026Extent}. %
They leave open a different regime: a state may contain many independent magic resources, each small, while having maximal stabilizer nullity. Which structure permits efficient learning when both entanglement and nonstabilizerness, or magic, are extensive?

Clifford circuits acting on independent magic states give a concrete setting for this question. We consider pure states prepared by applying a global Clifford circuit to independent blocks of qubits. Each block has a simple local description, while the Clifford can spread its degrees of freedom across the entire system. %

Even for single-qubit inputs, these states can have extensive magic and volume-law entanglement. 
In particular, the family $C|T\rangle^{\otimes n}$, with $C$ Clifford and $|T\rangle=(|0\rangle+e^{i\pi/4}|1\rangle)/\sqrt2$, contains output distributions that are classically hard to sample to multiplicative error if the polynomial hierarchy does not collapse; additive-error hardness requires additional average-case conjectures~\cite{YoganathanEtAl2019}. 
Such states also form the basis of universal quantum computing: combining them with measurements and Clifford feed-forward enables universal quantum algorithms.

Clifford-augmented MPSs~\cite{QianHuangQin2024,LamiHaugDeNardis2025} and stabilizer tensor networks~\cite{masot2024stabilizer} combine a Clifford frame with a compact description of the remaining quantum state. Single-qubit products give the bond-dimension-one case, while products of entangled blocks allow larger residual bond dimensions. Such representations improve simulations of ground states, time evolution~\cite{QianHuangQin2024,MelloEtAl2025TDVP,QianHuangQin2025TDVP}, and $T$-doped Clifford circuits~\cite{FuxEtAl2025}, however existing disentangling methods use heuristic optimization and circuit-aware algebraic methods on classical  descriptions%
~\cite{FuxEtAl2025,LiuClark2026,MasotLlimaEtAl2026}. 

When a target Clifford and its input product states are supplied, efficient protocols can certify Clifford-augmented products~\cite{AbdulSaterEtAl2025}. 
Learning instead requires inferring both the Clifford frame and the input states. 
Without Clifford scrambling, state hidden-subgroup algorithms can recover unknown partitions of physical qubits~\cite{BoulandEtAl2025}, including an implementation with two-copy Bell measurements and an extension to product unitaries through Choi states~\cite{HinscheEtAl2026}. Global Cliffords can scramble the physical product cuts, so recovering the independent subsystems requires finding the Clifford frame as well as the partition.
While two-copy measurements already give product tests~\cite{HarrowMontanaro2013} and Bell-sampling protocols for Pauli information and nonlinear state properties~\cite{montanaro2017learning,hangleiter2024bell}, it has remained open how to recover the hidden product structure of Clifford-scrambled states. %

Here, we show that two-copy Bell sampling can efficiently recover the hidden product structure of Clifford-scrambled states. 
In particular, quadratic equations satisfied by Bell samples determine every hidden product cut once Pauli stabilizers are removed. 
These equations yield a unique symplectic decomposition into irreducible blocks, recovered in time polynomial in system size and number of samples. 
For logarithmic block size and an inverse-polynomial quadratic support gap, this gives efficient full tomography of the state. 
Even when the quadratic gap becomes arbitrarily small, we can  approximately learn the state as long as it stays inverse-polynomially away from acquiring additional Pauli stabilizers. 
Applied to process learning of unknown $T$-depth-one unitaries, we recover an exact circuit implementation with probability at least $1-\delta$ using $O(n^2+\log\delta^{-1})$ queries. %
We also efficiently identify Clifford-scrambled products of $T$-states and the output states of random $T$-doped Clifford circuits at $T$-gate density below one. More generally, our algorithm can construct Clifford-augmented MPS representations when the residual state is a product of logarithmic-size blocks.
Finally, we rule out pseudorandom states and unitaries with a product bipartition hidden by Clifford circuits. 

\begin{figure}[t]
\centering
\resizebox{0.99\columnwidth}{!}{
\begin{tikzpicture}[x=1cm,y=1cm,>=Latex,font=\small,
 box/.style={draw,rounded corners=2pt,inner sep=4pt},
 wire/.style={line width=.65pt},
 lab/.style={font=\footnotesize,align=center}]
 \node[anchor=west,font=\bfseries\footnotesize] at (0,2.2) {(a) Clifford-scrambled product state};
 \node[box,fill=orange!12] (phi1) at (.65,1.62) {$\phi_1$};
 \node[box,fill=orange!12] (phi2) at (.65,1.03) {$\phi_2$};
 \node[box,fill=orange!12] (phi3) at (.65,.44) {$\phi_3$};
 \node[box,fill=blue!10,minimum width=1.1cm,minimum height=1.8cm] (cliff) at (2.2,1.03) {$C$};
 \foreach \p in {phi1,phi2,phi3}
   \draw[wire] (\p.east)--(cliff.west |- \p.east);
 \node[box,fill=blue!5,minimum width=1.25cm,minimum height=.6cm] (psi) at (4.13,1.03) {$|\psi\rangle$};
 \draw[->,wire] (cliff.east)--(psi.west);
 \node[anchor=west,font=\bfseries\footnotesize] at (0,-.18) {(b) Learning from Bell samples};
 \node[box,fill=blue!5,lab,minimum width=1.2cm] (bell) at (.73,-.87) {two-copy\\Bell samples};
 \node[box,fill=gray!9,lab,minimum width=1.75cm,right=.35cm of bell] (aff) {remove Pauli\\stabilizers};
 \node[box,fill=green!8,lab,minimum width=1.9cm,right=.35cm of aff] (quad) {quadratic equations\\give product cuts};
 \node[box,fill=orange!12,lab] (product) at (2.86,-1.80) {$D^\dagger|\psi\rangle=\bigotimes_\alpha|\phi_\alpha\rangle$};
 \draw[->,wire] (bell.east)--(aff.west);
 \draw[->,wire] (aff.east)--(quad.west);
 \draw[->,wire] (quad.south)|-(product.east);
\end{tikzpicture}
}
\caption{Learning product subsystems hidden by a Clifford circuit. 
(a) The unknown Clifford $C$ spreads the input blocks across the system and can generate volume-law entanglement. 
(b) We measure pairs of copies in the Bell basis. Affine equations identify Pauli stabilizers, and quadratic equations on the residual samples determine the hidden subsystems. Applying the learned Clifford decoder $D^\dagger$ to fresh copies disentangles the subsystems, followed by tomography of each block. %
}
\label{fig:learner}
\end{figure}

\prlsection{Clifford-scrambled states}
Label Hermitian $n$-qubit Paulis by $v=(x,z)\in V=\F_2^{2n}$, with
$W_v=\bigotimes_{j=1}^n i^{x_jz_j}X_j^{x_j}Z_j^{z_j}$.
The labels $(0,0),(1,0),(0,1),(1,1)$ on one qubit represent $I,X,Z,Y$, respectively. Label arithmetic is over $\F_2$; the phase in $W_v$ uses ordinary integer exponents. Pauli commutation $W_uW_v=(-1)^{[u,v]}W_vW_u$ is encoded by the symplectic form $[u,v]=u^TJv$ with $J=\bigl(\begin{smallmatrix}0&I\\I&0\end{smallmatrix}\bigr)$.
A Clifford unitary $C$ maps Pauli operators  to Pauli operators under
conjugation
\begin{equation}
 C W_a C^\dagger=\pm W_{F_Ca},
 \qquad F_C^TJF_C=J,
\end{equation}
where the binary symplectic matrix $F_C$ preserves Pauli commutation relations.
The $n$-qubit Clifford group, denoted by $\Cliff_n$,
is generated by Hadamard, phase, and CNOT gates, up to global phases.

A Pauli stabilizer of $|\psi\rangle$ is a signed Pauli operator
$P=\pm W_a$ satisfying $P|\psi\rangle=|\psi\rangle$. Ignoring these
signs, the stabilizer labels form the binary subspace
\begin{equation}
 \mathsf S_\psi
 =\{a:|\langle\psi|W_a|\psi\rangle|=1\}.
\end{equation}
All stabilizers commute. A pure stabilizer state has $n$ independent
Pauli stabilizers, which uniquely specify the state together with
their signs. A general pure state may have fewer; we call it
stabilizer-free if it has no nonidentity Pauli stabilizer.

We define the class of $n$-qubit Clifford-scrambled product states with blocks of at most $k$ qubits as
\begin{equation}
 \CTP_{n,k}=\left\{C\bigotimes_{\alpha=1}^{r}|\phi_\alpha\rangle_{A_\alpha}:
 C\in\Cliff_n,\ \max_\alpha|A_\alpha|\le k\right\},
 \label{eq:class}
\end{equation}
where $\{A_\alpha\}_{\alpha=1}^r$ partitions the qubits into $r$ blocks and each $|\phi_\alpha\rangle$ is a pure block state. The Clifford $C$, partition, and block states are all unknown.

To quantify the smallest blocks needed to describe a state in any Clifford frame, we define the \emph{Clifford-product width},
\begin{equation}
 \CPW(\psi)=\min\{k:\psi\in\CTP_{n,k}\}.
 \label{eq:cp-width}
\end{equation}
It is the minimum achievable largest block size over all Clifford transformations and partitions. %

\prlsection{Quadratic equations encode product cuts}
To recover a Clifford-hidden subsystem, we first express product structure as a constraint on Bell samples. Measuring all corresponding qubit pairs of two identical copies in the Bell basis produces one Bell sample: the complete $2n$-bit label drawn from $p_\psi(v)=2^{-n}|\langle\psi^*|W_v|\psi\rangle|^2$~\cite{montanaro2017learning,hangleiter2024bell}. 

Bell measurements uncover the product structure of a given state. In particular, swapping the two copies on a subset $A$ has Bell eigenvalue $(-1)^{q_A(v)}$, where $q_A(v)=\sum_{i\in A}x_i z_i$. Its expectation gives the reduced-state purity, so
\begin{equation}
 \Pr_{v\sim p_\psi}[q_A(v)=1]=\frac{1-\Tr\rho_A^2}{2}.
 \label{eq:purity}
\end{equation}
Thus a pure state is a tensor product across $A|\bar A$ precisely when every label with nonzero probability satisfies $q_A=0$, where $\bar A$ is the complement of $A$. The full swap supplies one relation common to all pure states, $q=q_{[n]}=0$.

A Clifford relabels Bell samples by $g_C(v)=F_Cv+t_C$. This map is \emph{affine}: it combines a linear change of binary coordinates with a fixed shift $t_C$. Its linear part preserves Pauli commutation, and the full map preserves $q$. This relabeling preserves quadratic degree even for subsystems spread across all physical qubits.

We first identify \emph{affine equations}: parity checks $h(v)=c+[a,v]$ on the bits of each Bell sample, with prescribed parity $c\in\F_2$. They detect Pauli stabilizers because
\begin{equation}
 \E(-1)^{h(v)}=(-1)^{c+q(a)}\langle W_a\rangle^2.
 \label{eq:affine}
\end{equation}
The left side equals one precisely when $h(v)=0$ on every label in the Bell support. For $a\ne0$, this requires $|\langle W_a\rangle|=1$ and $c=q(a)$, so $\pm W_a$ stabilizes the state. Bell samples determine these labels; a separate measurement determines the stabilizer signs.

Established stabilizer-learning methods can recover the labels~\cite{montanaro2017learning,GrewalEtAl2025,HinscheEtAl2026} and a Clifford separating $n-\snull$ fixed qubits, where $\snull=n-\dim\mathsf S_\psi$ is the stabilizer nullity~\cite{beverland2020lower}. Relabeling the samples and deleting their fixed-qubit coordinates leaves Bell samples from the residual $\snull$-qubit state. This removal is essential: multiplying an exact affine equation by a linear function also gives a vanishing quadratic, but supplies no new independent subsystem (see \SMLong{} (\SM{})~\ref{sm:bell}). %

For the remaining stabilizer-free state, let $K_\psi$ contain all Boolean functions of degree at most two that vanish on the complete Bell support. 
To identify the subsystem associated with an equation $f\in K_\psi$,
we first extract its quadratic coefficients. The combination
$f(u+v)+f(u)+f(v)+f(0)$ cancels the constant and linear terms,
leaving a bilinear form represented by the matrix $B_f$:
\begin{equation}
 u^T B_f v=f(u+v)+f(u)+f(v)+f(0),\quad E_f=JB_f.
 \label{eq:polar}
\end{equation}
This operation is called polarization. The map $E_f$ is
self-adjoint with respect to the symplectic form,
$[E_fu,v]=[u,E_fv]$. 
We show below that, for exact equations of a stabilizer-free state, it is also a projector whose image identifies the Pauli-label space of an independent
subsystem.
For a physical block $A$, $E_{q_A}$ projects onto its Pauli phase space. A Clifford conjugates this projector by $F_C$. 
The key insight is that, after removing Pauli stabilizers,
every quadratic equation satisfied by all possible Bell samples
expresses invariance under a partial swap in some Clifford frame,
and therefore identifies a product cut.
\begin{theorem}[Quadratic classification]\label{thm:structure}
For a stabilizer-free pure state, every $E_f$, $f\in K_\psi$, is idempotent. These maps form a commutative Boolean algebra whose minimal nonzero projectors define unique irreducible symplectic subspaces, up to ordering. In that frame, $K_\psi$ is exactly the span of the block-swap equations.
\end{theorem}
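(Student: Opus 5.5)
The plan is to translate each quadratic equation into an operator identity on two copies and then use stabilizer-freeness to force it to be a partial swap. Write a general degree-two $f(v)=c+\ell(v)+\sum_{i\le j}\beta_{ij}v_iv_j$. For $v\in\F_2^{2n}$ the Bell projector is $\Pi_v=2^{-n}\sum$-type combinations of $W_u\otimes W_u^*$ with characters $(-1)^{[u,v]}$. Consequently the diagonal-in-Bell-basis operator $O_f=\sum_v(-1)^{f(v)}\Pi_v$ is a (Fourier) combination of $W_u\otimes W_u^*$ with coefficients $2^{-2n}\sum_v(-1)^{f(v)+[u,v]}$.

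The condition $f\in K_\psi$ is equivalent to $O_f|\psi\rangle|\psi^*\rangle=|\psi\rangle|\psi^*\rangle$, i.e. $\langle O_f\rangle=1$ by Eq.~(\ref{eq:purity})-type reasoning. Since $O_f$ has eigenvalues $\pm1$, this is the same as saying the state is a $+1$ eigenvector. A quadratic character has Fourier transform supported on an affine coset $a_0+\ker$-shift of the radical $R_f=\ker B_f$, with flat magnitude. So $O_f=2^{-r/2}\sum_{u\in a_0+R_f^{\perp_J}\text{-type coset}}\pm W_u\otimes W_u^*$, where $r=\rank B_f$.

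First reduce to the homogeneous case. Taking the expectation and using $\langle W_u\otimes W_u^*\rangle=\langle W_u\rangle^2$, the bound $\sum_u\langle W_u\rangle^2$ over a coset is at most its Parseval maximum. If the coset is not a subspace (nonzero shift $a_0$), or if linear terms do not match $q$, equality forces some $|\langle W_a\rangle|=1$ with $a\ne0$, which is excluded. Thus for stabilizer-free $\psi$, $O_f$ is, up to sign conventions, $\prod$-free: $O_f=2^{-\dim L/ 1}\sum_{u\in L}W_u\otimes W_u^*$ restricted appropriately. This is exactly the swap operator on a symplectic subspace $L=\operatorname{im}E_f$ when $L$ is nondegenerate, since the swap on a subsystem with Pauli space $L$ is $2^{-\dim L/2}\sum_{u\in L}W_u\otimes W_u^\dagger$.

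The main obstacle is showing $E_f^2=E_f$, equivalently that the support subspace is symplectic (nondegenerate) and that $f$ coincides with $q$ composed with the projection. My approach is to consider $\langle O_f\rangle=1$ as an equality case of a Cauchy–Schwarz/uncertainty inequality. For isotropic directions in $\operatorname{im}E_f\cap(\operatorname{im}E_f)^{\perp}$, saturating the bound forces the corresponding Paulis to have unit expectation, contradicting stabilizer-freeness. I would verify this by restricting to the radical of $[\cdot,\cdot]|_L$ and computing the partial trace explicitly. Once $L$ is symplectic, a Clifford maps it to a qubit subset $A$ and $f$ to $q_A$. Then $\langle\mathrm{SWAP}_A\rangle=1$ gives purity $1$ and hence a product cut.

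For the algebra structure, note that if $f,g\in K_\psi$ then $f+g\in K_\psi$, since $K_\psi$ is a vector space. Its polarization $E_f+E_g$ must also be idempotent, which gives $E_fE_g+E_gE_f=0$. Also $E_fE_g$ corresponds to the cut on the intersection of two product cuts, which is again a product cut of a pure state. Hence products lie in the set, and the $E$'s commute and form a Boolean algebra. Its atoms give pairwise $J$-orthogonal symplectic subspaces summing to $V$. Uniqueness follows from atoms of a finite Boolean algebra being unique. Every $f$ is a sum of atom-swap equations, because its projector is a sum of atoms and $f$ equals $q\circ E_f$ modulo the constant fixed by $q=0$.
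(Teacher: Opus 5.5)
\textbf{There is a genuine gap at the central step, $E_f^2=E_f$.} Your plan only examines isotropic directions in $L=\operatorname{im}E_f$, so it aims at nondegeneracy of $L$. You then assert that once $L$ is symplectic, a Clifford sends $f$ to $q_A$. That implication is false. Nondegeneracy only gives $\ker E_f\cap\operatorname{im}E_f=0$, and a symplectically self-adjoint $E_f$ can still act on its image as a nontrivial invertible map.

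For instance, take two qubits with coordinates $(x_1,x_2,z_1,z_2)$ and $f=x_1z_2+x_2z_1+x_2z_2$. Then $E_f(x,z)=(Mx,Mz)$ with $M=\begin{pmatrix}0&1\\1&1\end{pmatrix}$, so $E_f^2=E_f+I$. Both $E_f$ and $I+E_f=E_{f+q}$ are invertible, so every nonzero polarization in $\operatorname{span}\{q,f\}$ has nondegenerate image. Yet $f$ is not a partial swap plus affine terms.

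Ruling out such $f$ requires showing they cannot vanish on the Bell support of a stabilizer-free state. Your proposal has no mechanism for this; the ``Cauchy--Schwarz/uncertainty inequality'' is never specified. The paper's key idea is the conjugation identity $Z_f(W_v\otimes I)Z_f\propto W_{(I+E_f)v}\otimes W^*_{E_fv}$, which yields $R_\psi(v)=R_\psi(E_fv)\,R_\psi((I+E_f)v)$. Take a label with $(E_f^2+E_f)v\neq0$ that maximizes $R_\psi$. Both $E_fv$ and $(I+E_f)v$ are nonzero labels with $0<R_\psi<1$, so each has strictly larger correlation than $v$, and at least one still violates idempotence, a contradiction. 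Finally, the labels with $R_\psi>0$ span $V$ because $\psi$ has no stabilizer, so $E_f^2=E_f$ everywhere.

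Your ``homogeneous reduction'' is also unsupported. The only Parseval identity available, $\sum_{u\in V}R_\psi(u)^2=2^n$, bounds your coset expansion only by $\langle O_f\rangle\le 2^{n-r/2}$ with $r=\rank B_f$. Since this is at least $1$, there is no equality case to exploit. In the paper the affine part is removed only after idempotence is established. In the adapted frame $f=q_A+c+[a,\cdot]$, and the bound $|\Tr[(\rho\otimes\rho)\mathrm{SWAP}_A(W_a\otimes W_a^*)]|\le\Tr\rho_A^2$ forces a pure cut. Hence $q_A$ is exact, so $f+q_A$ is an exact affine equation and therefore zero.

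A smaller point: the measured state is $|\psi\rangle^{\otimes2}$, not $|\psi\rangle|\psi^*\rangle$, so $\langle W_u\otimes W_u^*\rangle=(-1)^{q(u)}\langle W_u\rangle^2$.

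Your Boolean-algebra paragraph is essentially the paper's argument once idempotence is available. Two pieces are still missing there. Closure under products needs the short purity argument that two commuting product cuts of a pure state factor over all four joint eigenspaces. The spanning claim needs injectivity of $f\mapsto E_f$ on $K_\psi$, which follows from the absence of nonzero exact affine equations.
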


We call these blocks \emph{irreducible}: no Clifford acting within a block can split its state into a product on smaller subsystems.

The proof uses the constraint that an exact Bell equation imposes on Pauli correlations. The Bell-diagonal involution $Z_f=\sum_v(-1)^{f(v)}|B_v\rangle\langle B_v|$ satisfies $Z_f|\psi\rangle^{\otimes2}=|\psi\rangle^{\otimes2}$: every Bell component with nonzero amplitude has $f(v)=0$ and hence eigenvalue $+1$. Here, $|B_v\rangle=(I\otimes W_v)|\Phi\rangle$ and
$|\Phi\rangle=2^{-n/2}\sum_j|j,j\rangle$. Conjugating $W_a\otimes I$ by $Z_f$ gives, up to phase, $W_{(I+E_f)a}\otimes W_{E_fa}^*$. Taking expectations yields
\begin{equation}
 R_\psi(a)=R_\psi(E_fa)R_\psi((I+E_f)a),
 \quad R_\psi(a)=|\langle W_a\rangle|.
 \label{eq:rigidity}
\end{equation}
Suppose $E_f^2a\ne E_fa$ for a label with nonzero expectation, and choose such a label maximizing $R_\psi(a)$. Both labels on the right-hand side of Eq.~\eqref{eq:rigidity} are nonzero, and at least one also violates idempotence. Since a stabilizer-free state has $R_\psi(b)<1$ for every nonzero $b$, both have larger expectation than $a$, a contradiction. The labels with nonzero expectation span $V$, which gives $E_f^2=E_f$. Applying this result to $E_f+E_g$ also gives $E_fE_g=E_gE_f$.

An idempotent separates $V$ into symplectically orthogonal image and kernel. In the corresponding Clifford frame, $f$ is a partial-swap equation plus an affine term. Bounding the resulting expectation of the swap combined with Pauli operators by $\Tr\rho_A^2$ forces the purity to equal one. Thus $f$ identifies a product cut. Intersecting the compatible cuts gives the irreducible blocks and proves uniqueness. The quadratic kernel therefore determines all possible hidden cuts. We give the full proof in the \SM{}~\ref{sm:classification}.

\prlsection{Learning the hidden subsystems}
Theorem~\ref{thm:structure} shows that the irreducible decomposition attains the Clifford-product width, including the fixed qubits as single-qubit blocks. Recovering this decomposition therefore identifies the smallest possible largest block in any Clifford frame.

To recover the equations from finitely many copies, we must resolve how often a false equation is violated. Write $\RM(j,2n)$ for Boolean functions of degree at most $j$ on $\F_2^{2n}$ and $K_\psi^{(j)}$ for the exact kernel. Define the support gaps
\begin{equation}
\gamma_j(\psi)=\min_{f\in\RM(j,2n)\setminus K_\psi^{(j)}}\Pr_{p_\psi}[f(v)=1],
 \qquad j=1,2.
 \label{eq:gaps}
\end{equation}
Each gap is the smallest probability of observing a violation of
an equation $f(v)=0$ that does not hold exactly on the Bell
distribution. A small gap therefore requires many samples to
distinguish an exact equation from one that is only rarely violated.
The affine gap $\gamma_1$ controls stabilizer recovery, while
the quadratic gap $\gamma_2$ controls recovery of the hidden
product structure. 
Since affine polynomials are included among
quadratic polynomials, $\gamma_1\ge\gamma_2$, so a lower bound on
$\gamma_2$ also guarantees efficient stabilizer recovery.

We evaluate all monomials of degree at most two on the Bell samples and compute their binary nullspace. This retains exactly the equations consistent with every sample. There are $D_2(n)=1+2n+\binom{2n}{2}=O(n^2)$ monomials, so a union bound over $2^{D_2(n)}$ candidate equations controls the probability of retaining a false one.

\begin{theorem}[Exact product structure recovery]\label{thm:exact}
For $\psi\in\CTP_{n,k}$ with $\gamma_2(\psi)\ge\gamma$, 
a Clifford decoder and the irreducible blocks of the stabilizer-free residual are recovered with probability at least $1-\delta$ using
\begin{equation}
 M=O\!\left[\gamma^{-1}(n^2+\log\delta^{-1})\right]
 \label{eq:exact-cost}
\end{equation}
Bell samples and classical time polynomial in $n$ and $M$, with the largest recovered block size $\CPW(\psi)$.
\end{theorem}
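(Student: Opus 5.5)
The plan has three parts: a sampling argument, a reduction to the stabilizer-free residual, and a linear-algebraic reconstruction based on Theorem~\ref{thm:structure}.

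\emph{Sampling step.} Draw $M$ Bell samples $v^{(1)},\dots,v^{(M)}\sim p_\psi$. Form the $M\times D_2(n)$ evaluation matrix of all monomials of degree at most two. Its binary nullspace $\hat K$ is exactly the set of $f\in\RM(2,2n)$ that vanish on every sample. Every $f\in K^{(2)}_\psi$ vanishes on the whole support, so $K^{(2)}_\psi\subseteq\hat K$ always holds. A false $f\notin K^{(2)}_\psi$ satisfies $\Pr[f(v)=1]\ge\gamma_2\ge\gamma$, so it survives all $M$ samples with probability at most $(1-\gamma)^M\le e^{-\gamma M}$. A union bound over the at most $2^{D_2(n)}$ polynomials gives $\Pr[\hat K\ne K^{(2)}_\psi]\le 2^{D_2(n)}e^{-\gamma M}$. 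This is at most $\delta$ once $M=O[\gamma^{-1}(n^2+\log\delta^{-1})]$, which is Eq.~\eqref{eq:exact-cost}. On this event we hold the exact degree-two kernel, and every later step is deterministic. The affine part $\hat K\cap\RM(1,2n)$ is then exactly $K^{(1)}_\psi$, since $\gamma_1\ge\gamma_2$.

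\emph{Removing stabilizers.} From the affine equations $c+[a,v]$ we read off $\mathsf S_\psi$ via Eq.~\eqref{eq:affine}. Using symplectic Gram--Schmidt over $\F_2$, we build a Clifford $C_1$ whose symplectic matrix maps $\mathsf S_\psi$ onto the $Z$-labels of the first $n-\snull$ qubits; this takes polynomial time by standard constructions. The decoder is only needed up to Pauli corrections here, and signs are fixed later by a separate measurement. Relabelling samples by $g_{C_1^\dagger}$ is affine, so it maps quadratic kernels to quadratic kernels without changing degree.

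The key step is to identify the residual kernel without fresh samples. Every element of $K_\psi$ restricted to the residual coordinates (fixed-qubit coordinates set to their forced values) is the transported image of some element of $K^{(2)}_\psi$. Conversely, any quadratic in the residual coordinates alone that vanishes on the residual support lifts to an element of $K^{(2)}_\psi$. Hence the residual kernel $K_{\psi'}$ is obtained by linear algebra: intersect the transported $\hat K$ with polynomials in the residual variables, after reducing modulo the affine equations. For this I would invoke the product structure $C_1^\dagger|\psi\rangle=|s\rangle\otimes|\psi'\rangle$, whose Bell support factorizes as a product.

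\emph{Extracting the blocks.} For a basis $f_1,\dots,f_m$ of $K_{\psi'}$, compute $E_{f_i}=JB_{f_i}$ by polarization, Eq.~\eqref{eq:polar}. By Theorem~\ref{thm:structure} these are commuting idempotents generating a Boolean algebra. Its minimal nonzero projectors are found by simultaneously splitting $V'$: start from $\{V'\}$ and refine each current subspace $W$ into $E_{f_i}W$ and $(I+E_{f_i})W$, discarding zeros. After $m=O(n^2)$ rounds this leaves symplectically orthogonal subspaces $V_1,\dots,V_r$ with $V'=\bigoplus V_\alpha$, each of even dimension $2k_\alpha$ and nondegenerate. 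Choosing a symplectic basis of each $V_\alpha$ gives $F_2$ and hence a Clifford $C_2$ with $C_2^\dagger|\psi'\rangle=\bigotimes_\alpha|\phi_\alpha\rangle$. Theorem~\ref{thm:structure} guarantees that these blocks are the unique irreducible ones. Together with the $n-\snull$ fixed single qubits, they attain the minimum over all frames, so the largest block is $\CPW(\psi)$; this is the remark made after Theorem~\ref{thm:structure}. The decoder is $D=C_1(I\otimes C_2)$.

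\emph{Main obstacle.} The hardest part is the residual-kernel identification above. Products of affine equations with linear functions create spurious quadratics that vanish on the support, and these must be quotiented out. One must show that after fixing the stabilized coordinates these quadratics become affine or trivial, so they contribute no new projectors. I would first try to prove that the map $K^{(2)}_\psi\to K_{\psi'}$ given by substitution is surjective and that its kernel is exactly the ideal generated by the affine equations in degree two.
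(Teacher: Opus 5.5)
Your proposal is correct and follows the paper's route. The paper uses the same uniform union bound to make the degree-two empirical kernel exact, the same stabilizer removal by affine relabeling, and the same polarization and joint-eigenspace splitting justified by Theorem~\ref{thm:structure}. Your residual-kernel step, intersecting the transported kernel with residual-only quadratics, is the paper's Lemma~\ref{sm:inheritance} and gives the same subspace as recomputing the nullspace on the relabeled, projected samples in Algorithm~S1. Consequently the ``main obstacle'' you flag is not needed: the product Bell support plus the lift you already describe (extend independently of the deleted coordinates, pull back through $g_C$) give $K_{\psi'}$ exactly, so there is nothing to prove about the spurious quadratics or the degree-two ideal generated by the affine equations.
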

The algorithm first removes the affine kernel and then polarizes the remaining quadratic equations.
The joint eigenspaces of these commuting projectors give the hidden subsystems. Choosing a symplectic basis in each subsystem specifies the decoder's Clifford tableau, avoiding a search over Clifford circuits. The same Bell samples suffice for both stages because every equation after projection lifts to a quadratic equation on the original labels. The residual symplectic subspaces are unique; local Clifford bases, Pauli phases, and block order remain free choices.

Applying the decoder to fresh copies allows us to perform pure-state tomography on all blocks in parallel to global trace-distance error $\varepsilon$ using copies  %
\begin{equation}
O\!\left(r2^k\varepsilon^{-2}
\log(r/\delta)\right).
\label{eq:tomo}
\end{equation}
We obtain this bound by allocating the global infidelity budget among the blocks and applying efficient pure-state tomography with standard confidence amplification~\cite[Appendix~C]{HaahKothariODonnellTang2023}. The width bounds each local Hilbert-space dimension by $2^{\CPW(\psi)}$. Thus logarithmic width and an inverse-polynomial gap give a polynomial-time learner whose output is a Clifford circuit and pure block states in the promised class.

For example, every state $C|T\rangle^{\otimes n}$ has width one and maximal stabilizer nullity $n$. Nevertheless, $\gamma_2=1/16$ for $n\ge2$ (\SM{}~\ref{sm:qubitgaps}), so $O(n^2)$ Bell samples recover its structure at constant confidence for any Clifford $C$. 
This example extends to output states of random Clifford circuits doped with $T$ gates~\cite{haferkamp2023efficient,leone2021quantum}. Their product structure below the disentangling threshold~\cite{FuxEtAl2025,LiuClark2026} enables exact state identification. We make the ensemble and success probability explicit in the following corollary.

\begin{corollary}[Learning random $T$-doped Clifford states]
\label{cor:random-t-learning}
Starting from $|0\rangle^{\otimes n}$, apply $t<n$ independent uniform global Cliffords, each followed by a single $T$ gate. From copies of the output, our algorithm recovers a Clifford $\widehat D$ such that
$|\psi\rangle=\widehat D(|T\rangle^{\otimes t}\otimes|0\rangle^{\otimes(n-t)})$
up to global phase, with probability at least $1-2^{t-n}-\delta$ over the circuit and measurements. For $0<\delta<1$, it uses
\begin{equation}
 O\!\left(n^2+\log\delta^{-1}\right)
 \label{eq:random-t-learning-cost}
\end{equation}
copies and time polynomial in $n,\log\delta^{-1}$.
\end{corollary}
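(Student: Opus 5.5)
We show that, with probability at least $1-2^{t-n}$ over the random Cliffords, the output is Clifford-equivalent to $|T\rangle^{\otimes t}\otimes|0\rangle^{\otimes(n-t)}$. We then invoke Theorem~\ref{thm:exact} with the constant gap $\gamma_2=1/16$ computed for $T$-state products, and finish with a sign measurement.

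\emph{Step 1 (structure of the circuit).} We track the stabilizer group inductively. After $j$ rounds, suppose the state is $D_j(|T\rangle^{\otimes j}\otimes|0\rangle^{\otimes(n-j)})$, so its stabilizer group $\mathsf S_j$ has dimension $n-j$. Absorb the next uniform Clifford into the frame. The $T$ gate on qubit~1 then acts on $D'(|T\rangle^{\otimes j}\otimes|0\rangle^{\otimes(n-j)})$, where $D'$ is a uniformly random Clifford, since the Clifford group is a group and uniformity is preserved under multiplication.

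The key event is that some stabilizer of the current state acts as $Z$ or $I$ on qubit~1 while some other stabilizer anticommutes with $Z_1$. If so, we can choose a stabilizer generator basis in which exactly one generator $g$ anticommutes with $Z_1$, and all others commute with $Z_1$. Those others also commute with $T_1$ up to a Clifford-diagonal correction, because $T$ commutes with $Z$. Cleaner still: the failure event is that $Z_1$ itself lies in $\pm\mathsf S$, or more generally that the restriction map behaves badly. Let me instead use the cleaner criterion. If there is a stabilizer $P$ with $\{P,Z_1\}\ne0$, we can apply a Clifford $U$ fixing $Z_1$ with $UPU^\dagger=X_1$. This $U$ maps the stabilizer group to one containing $X_1$, and the other generators can be chosen to commute with $X_1$ and $Z_1$. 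Then $T_1$ acts on a fresh $|+\rangle$ qubit, which is a $|T\rangle$ up to a Clifford, and the remaining $n-j-1$ stabilizers act off qubit~1. This yields the induction with one more $T$ block.

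The bad event is that every stabilizer commutes with $Z_1$. Since $D'$ is uniform, the image of $\mathsf S$ is a uniformly random isotropic subspace of dimension $n-j$. The probability that it lies in the symplectic complement of $e_{Z_1}$ is at most $2^{-(n-j)}$, by a direct count or because a uniformly random nonzero stabilizer-like vector avoids a hyperplane. Summing over $j=0,\dots,t-1$ gives $\sum_j 2^{-(n-j)}\le 2^{t-n}$. This counting step is the main obstacle, and I would verify it carefully via transitivity of the symplectic group on isotropic subspaces.

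\emph{Step 2 (learning).} On the good event, $\psi\in\CTP_{n,1}$ with residual $C|T\rangle^{\otimes t}$. The \SM{}~\ref{sm:qubitgaps} computation gives $\gamma_2=1/16$, provided $t\ge2$; the cases $t\le1$ reduce to stabilizer learning with $\gamma_1$ a constant. Theorem~\ref{thm:exact} then returns the decoder and $t$ single-qubit blocks using $O(n^2+\log\delta^{-1})$ samples, each costing two copies. Each block is single-qubit and stabilizer-free, and its state is fixed by the promise to be Clifford-equivalent to $|T\rangle$. Since the single-qubit Clifford orbit of $|T\rangle$ is finite (8 states up to phase), we identify it with a constant number of additional copies per block. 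Equivalently, we measure the block's $X,Y,Z$ expectations $\pm1/\sqrt2,0$ to constant precision, which costs $O(\log(n/\delta))$ in parallel.

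Finally, we determine the signs of the stabilizer qubits and absorb single-qubit Cliffords into $\widehat D$. A union bound over the randomness of the circuit and of the measurements gives total failure probability $2^{t-n}+\delta$.
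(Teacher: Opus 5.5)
Your approach is essentially the paper's. You use the same good event (the rotated $Z_1$ axis anticommutes with some remaining stabilizer) and the same per-step failure probability $(2^{n+j}-1)/(4^n-1)\le 2^{j-n}$, which sums to at most $2^{t-n}$. You then apply Theorem~\ref{thm:exact} with a constant gap and identify the decoded magic qubits exactly from a finite Clifford orbit.

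Your physical-frame step is dual to the paper's decoded-frame step. You use a Clifford fixing $Z_1$ that sends an anticommuting stabilizer to $X_1$, so that $T$ acts on a $|\pm\rangle$ qubit. The paper instead brings $P=D_j^\dagger C_{j+1}^\dagger Z_1C_{j+1}D_j$ to $X_i\otimes Q$ on a $|0\rangle$ qubit $i$ and removes the entanglement with the controlled-Pauli $V_Q$. The paper's version exhibits $D_{j+1}$ explicitly. In yours, you still need to show that the leftover state $|\psi'\rangle$ on the other $n-1$ qubits is again Clifford-equivalent to $|T\rangle^{\otimes j}\otimes|0\rangle^{\otimes(n-j-1)}$. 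This follows from the code-quotient argument of Lemma~\ref{sm:quotient}, but without it the induction is not closed. Also, your condition should read $\{P,Z_1\}=0$, meaning anticommutation; as written it says the opposite.

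The step that does not go through as written is the gap bound in Step~2. Theorem~\ref{thm:exact} needs a lower bound on $\gamma_2$ of the full $n$-qubit output $D(|T\rangle^{\otimes t}\otimes|0\rangle^{\otimes(n-t)})$. Its union bound runs over all quadratics on $\F_2^{2n}$, including those that involve the stabilized coordinates. Proposition~\ref{sm:qubitgap} covers only products of nonstabilizer qubits. Lemma~\ref{sm:inheritance} gives $\gamma_2(\text{residual})\ge\gamma_2(\psi)$, which is the wrong direction for you.

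The two gaps genuinely differ. For $t=1$ the residual gap is $1/4$. Yet in decoded coordinates, take $i$ the magic qubit and $j$ a fixed qubit: the quadratic $z_iz_j$ is violated with probability $\tfrac14\cdot\tfrac12=\tfrac18$.

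The paper closes this by applying the finite-support induction of Proposition~\ref{sm:qubitgap} also to the two-point supports of the $|0\rangle$ qubits, whose minimum probability is $1/2$. This gives $\gamma_2=1/4,1/8,1/16$ for $t=0,1,\ge2$. Hence $\gamma_2\ge1/16$ uniformly, and your separate treatment of $t\le1$ becomes unnecessary. Running the quadratic stage on fresh samples, so that the residual gap applies directly, would also work.

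Two smaller points. The single-qubit Clifford orbit of $|T\rangle$ has 12 elements, the signed permutations of $(1/\sqrt2,1/\sqrt2,0)$, not 8; this does not affect the argument. For the polynomial cost bound to hold on every circuit, including those outside the good event, abort whenever a recovered residual block has more than one qubit, as the paper does.
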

The circuit contribution $2^{t-n}$ to failure is exponentially small when $t\le(1-\eta)n$ for fixed $0<\eta<1$. After frame recovery, the decoded magic qubits lie in a finite single-qubit Clifford orbit, so constant-accuracy Pauli measurements identify them exactly (\SM{}~\ref{sm:random-t-learning}). 

More generally,  \SM{}~\ref{sm:qubitgaps} also considers products of $n\ge2$ nonstabilizer qubits: we find a gap $\gamma_2=\eta_{(1)}\eta_{(2)}$, where $\eta_i$ is the smallest positive Bell probability of qubit $i$ and $\eta_{(1)},\eta_{(2)}$ are the two smallest of these local minima. For independent Haar qubits, the typical gap scales as $n^{-2}$.

\prlsection{Approximation when the quadratic gap closes}
Exact reconstruction requires resolving rare violations that distinguish an exact product cut from a weakly entangled one. However, for approximate tomography we do not need to resolve entanglement below the target error. We make this statement precise by extending Theorem~\ref{thm:structure} to equations inferred from a finite sample. The resulting algorithm removes dependence on $\gamma_2$, while still depending on the affine gap (see \SM{}~\ref{sm:exact-learning})
\begin{equation}
 \gamma_1(\psi)=\frac{1-\max_{a\notin\mathsf S_\psi}R_\psi(a)^2}{2}.
 \label{eq:affinegap}
\end{equation}
This gap quantifies separation from a state with an additional Pauli stabilizer. It can stay finite even when the quadratic gap becomes arbitrarily small.

\begin{theorem}[Approximate learning]\label{thm:approx}
Let $\psi\in\CTP_{n,k}$ and $\gamma_1(\psi)\ge \gamma_\star>0$. A Clifford and pure block states of size at most $k$ can be learned to trace distance $\varepsilon$, with failure probability at most $\delta$, in time and copies polynomial in $n,2^k,\gamma_\star^{-1},\varepsilon^{-1},\log\delta^{-1}$, independently of $\gamma_2$.
\end{theorem}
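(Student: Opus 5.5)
The proof runs in two stages. First we remove the exact Pauli stabilizers using the affine gap. Then we replace exact quadratic equations by approximate ones: we keep those violated with frequency below a threshold $\tau$, and show that the product structure they suggest is correct up to trace-distance error $O(\varepsilon)$.

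\emph{Stage 1: stabilizers.} Since $\gamma_1(\psi)\ge\gamma_\star$, every affine function outside the exact affine kernel is violated with probability at least $\gamma_\star$. The union-bound argument behind Theorem~\ref{thm:exact}, restricted to the $2^{2n+1}$ affine functions, shows that $M_1=O(\gamma_\star^{-1}(n+\log\delta^{-1}))$ Bell samples recover $\mathsf S_\psi$ exactly. A separate Pauli measurement fixes the signs. A Clifford $D_0$ then maps the state to $|0\rangle^{\otimes(n-\snull)}\otimes|\psi'\rangle$, with $\psi'$ stabilizer-free and satisfying $R_{\psi'}(a)^2\le 1-2\gamma_\star$ for all $a\ne0$. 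Relabeling the samples by $g_{D_0}$ and deleting the fixed coordinates gives Bell samples of $\psi'$. By Eq.~\eqref{eq:rigidity} and Theorem~\ref{thm:structure}, the true blocks of $\psi'$ give equations in $K_{\psi'}$, so $\psi'$ is again a Clifford-scrambled product with blocks of size at most $k$.

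\emph{Stage 2: robust quadratic classification.} Draw $M_2$ samples and compute the empirical violation rate $\hat p(f)$ for every $f$ in $\RM(2,2\snull)$. Finding the full approximate kernel by brute force is exponential, so we instead run Gaussian elimination while tolerating a fraction $\tau$ of violated samples. Concretely, we find a large subspace $L$ of quadratics such that each $f\in L$ has $\hat p(f)\le\tau$. A cleaner route is to fit the projectors directly: we look for a symplectic decomposition $V'=\bigoplus_\alpha U_\alpha$ with $\dim U_\alpha\le 2k$ such that every block-swap quadratic $q_{U_\alpha}$, written in a symplectic basis of $U_\alpha$, has small empirical violation. By Eq.~\eqref{eq:purity}, $\Pr[q_{U_\alpha}=1]=(1-\Tr\rho_{U_\alpha}^2)/2$. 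Small violation for every block therefore means every decoded block is nearly pure. The standard product-approximation bound $1-F(\psi,\bigotimes_\alpha\phi_\alpha)\le\sum_\alpha(1-\Tr\rho_\alpha^2)$, with $\phi_\alpha$ the top eigenvector of each block, then gives trace distance $O(\sqrt{r\tau})$. Choosing $\tau=\Theta(\varepsilon^2/n)$ and $M_2=\mathrm{poly}(n,\varepsilon^{-1},\log\delta^{-1})$ makes the empirical estimates uniformly accurate over the $2^{O(n^2)}$ quadratics. Tomography on each block then follows Eq.~\eqref{eq:tomo} with $2^k$ dimensions.

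\emph{Main obstacle.} The hard step is \emph{finding} such a decomposition efficiently when near-exact equations exist that are not exact. We need a robust version of Theorem~\ref{thm:structure}: for a stabilizer-free state with $R(a)^2\le1-2\gamma_\star$, every $f$ with $\Pr[f=1]\le\tau$ should have $E_f$ close to an idempotent. This closeness should hold in the sense that $\Pr_{a\sim R^2}[E_f^2a\ne E_fa]$ is $O(\tau/\gamma_\star)$.

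I would prove this by perturbing the rigidity identity, Eq.~\eqref{eq:rigidity}. The operator $Z_f$ now satisfies $\langle Z_f\rangle\ge1-2\tau$, which gives $R(a)\approx R(E_fa)R((I+E_f)a)$ up to $O(\sqrt\tau)$. The affine gap then forces any label violating idempotence to lose a multiplicative factor $\sqrt{1-2\gamma_\star}$, and this loss must be paid by the $\sqrt\tau$ slack.

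The algorithm would proceed as follows. Take the empirical near-kernel from Gaussian elimination on samples with outliers, obtained by repeating on random subsamples of size $O(n^2)$ so that with good probability no violation occurs for $f$ with $\Pr[f=1]\le 1/\mathrm{poly}$. Polarize the surviving equations, round the resulting commuting near-projectors to an exact symplectic decomposition, and discard any block larger than $2k$ by merging it. The promise $\psi\in\CTP_{n,k}$ ensures that a decomposition with blocks of size at most $k$ exists and survives the procedure, because its swap equations are exact. I expect the rounding and merging analysis, where coarser-than-true blocks must still stay within size $k$, to be the most delicate point.
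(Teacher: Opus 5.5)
\emph{Stage 1 is fine; the genuine gap is in Stage 2.} Stage~1 matches the paper. One caveat: the residual lies in $\CTP_{\nu,k}$ by the stabilizer-removal lemma, which splits $\mathsf S_\psi$ blockwise and shows the residual is fixed up to a $\nu$-qubit Clifford. It does not follow from Eq.~\eqref{eq:rigidity}.

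In Stage~2 you correctly perturb the rigidity identity to $|R(v)-R(Ev)R((I+E)v)|\le4\sqrt{\tau_f}$. With the affine gap you then get the descent bound $R_*\le4\sqrt{\tau_f}/\gamma_\star$ for the most-correlated label violating idempotence. But you stop at approximate idempotence and defer to an unspecified rounding step. Your $R^2$-weighted version also does not follow from this pointwise bound alone, because there can be exponentially many violating labels.

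The missing idea is that the width-$k$ promise supplies a matching \emph{lower} bound on $R_*$. If $\Delta=E_f^2+E_f\ne0$, it is nonzero on some true block $V_i$. Then there is $0\ne b\in V_i$ such that every $v\in V_i$ with $[v,b]=1$ violates idempotence. Purity of that block gives $\sum_{[v,b]=1}R_i(v)^2=\tfrac{d_i}{2}(1-R_i(b)^2)\ge d_i\gamma_\star$, with $d_i=2^{|A_i|}\le2^k$, spread over $d_i^2/2$ labels. So some violating label has $R\ge\sqrt{2\gamma_\star/2^k}$. Comparing the two bounds shows that every quadratic with $\tau_f<\gamma_\star^3/(8\cdot2^k)$ has an \emph{exactly} idempotent polarization, so no rounding is ever needed.

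\emph{With exactness, the algorithm becomes simple.} You need no outlier-tolerant elimination, subsampling, or merging: take the ordinary nullspace $K_M$ of all $M=O[(n^2+\log\delta^{-1})/\tau]$ samples. It contains every exact equation for every sample realization. Hence every true block projector lies in the empirical polarization space and acts as a scalar on each joint eigenspace. The recovered blocks are therefore automatically \emph{finer} than the true blocks, so they have at most $k$ qubits. Coarser-than-true blocks never arise, and merging could only enlarge blocks anyway. Your subsampling with $O(n^2)$ samples would actually break the guarantee: the union bound then excludes only equations violated with constant probability.

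\emph{Your error analysis also has a gap.} The empirical polarization space need not be closed under products. So you get no small-violation guarantee for the block-swap quadratics of the recovered blocks, only purity at least $1-2\tau$ for the cuts $A_j$ defined by a basis of $K_M$. The paper passes from these cuts to the blocks through strong subadditivity, $S(A\cap B)\le S(A)+S(B)$. This bounds the total block entropy by $\nu^2[h_2(2\tau)+2\nu\tau]$, which is why the paper takes $\tau\sim\varepsilon^4/\nu^4$ rather than your $\varepsilon^2/n$.

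Finally, the decoded marginals are mixed, so the pure-state bound Eq.~\eqref{eq:tomo} does not apply directly. The paper instead estimates each marginal from local Pauli means and outputs its top eigenvector.
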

After recovering and removing the exact Pauli stabilizers, write $\tau_f(\psi):=\Pr_{v\sim p_\psi}[f(v)=1]$ for the violation probability of a candidate quadratic equation and choose a tolerance $\tau>0$. From $M=O[(n^2+\log\delta^{-1})/\tau]$ Bell samples, we obtain, with high probability, only equations $f(v)=0$ with $\tau_f(\psi)\le\tau$. Each such equation implies that the Pauli correlations satisfy Eq.~\eqref{eq:rigidity} within $4\sqrt{\tau}$. %
If the associated map fails to be idempotent, the block-size promise supplies a violating Pauli correlation of magnitude at least $\sqrt{2\gamma_\star/2^k}$. Comparing this lower bound with the upper bound from the approximate correlation identity gives
\begin{equation}
 \tau<\frac{\gamma_\star^3}{8\cdot2^k}\quad\Longrightarrow\quad
 E_f^2=E_f.
 \label{eq:robust-idem}
\end{equation}
Thus sufficiently accurate empirical equations still give exact, commuting projectors. Each joint eigenspace lies within one true block because the true block equations vanish on every sample. Each empirical basis equation identifies a cut with purity at least $1-2\tau$. We use strong subadditivity to bound the sum of the block entropies and obtain
\begin{equation}
 \Dtr(\psi,\text{recovered product set})^2
 \le n^2[h_2(2\tau)+2n\tau],
 \label{eq:approx-error}
\end{equation}
where $\Dtr$ is trace distance and $h_2$ is binary entropy. For logarithmic $k$ and inverse-polynomial $\gamma_\star$, an inverse-polynomial $\tau$ controls both this error and Eq.~\eqref{eq:robust-idem}. We then perform tomography on the slightly mixed block marginals and extract pure block states, giving the stated global approximation (\SM{}~\ref{sm:approx}).

\prlsection{Learning circuits of T-depth one}
We next learn an unknown unitary process as an explicit circuit. 
We consider unitaries with magic depth one, where all the magic gates are concentrated in a single layer~\cite{zhang2025classical}
\begin{equation}
 U=C_{\rm out}(T^{\otimes t}\otimes I)C_{\rm in},
 \qquad 0\le t\le n,
 \label{eq:t-depth-one}
\end{equation}
where both Clifford circuits and $t$ are unknown. 
Earlier methods achieve polynomial-time learning for $O(\log n)$ $T$ gates~\cite{LaiCheng2022,leone2024learning}. Here, we recover a coherent implementation correct on arbitrary inputs for any $t\le n$. This regime also differs from bounded-Clifford-extent tomography~\cite{DuttEtAl2026}.
For process learning~\cite{MansurogluEtAl2024,WadhwaEtAl2025}, we start with the normalized Choi state
\begin{equation}
 |J(U)\rangle=2^{-n/2}\sum_{x\in\{0,1\}^n}
 |x\rangle_R\otimes U|x\rangle_A
 \label{eq:choi-definition}
\end{equation}
with Clifford-product width $w_J(U):=\CPW(J(U))$. %
\begin{corollary}[Learning $T$-depth-one unitaries]
\label{cor:t-depth-one}
Given query access to an $n$-qubit unitary $U$ satisfying Eq.~\eqref{eq:t-depth-one}, a quantum algorithm returns Clifford circuits $\widehat C_{\rm in},\widehat C_{\rm out}$ and $t$ such that
\begin{equation}
 \widehat C_{\rm out}(T^{\otimes t}\otimes I)\widehat C_{\rm in}
 =e^{i\varphi}U
 \label{eq:t-depth-one-output}
\end{equation}
with probability at least $1-\delta$. The algorithm uses $O(n^2+\log\delta^{-1})$ queries and polynomial time.
\end{corollary}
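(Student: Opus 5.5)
We plan to apply Theorem~\ref{thm:exact} to the Choi state $|J(U)\rangle$. We then convert the recovered decoder into the two Clifford circuits, and fix the residual single-qubit freedom with a few extra Pauli measurements.

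\textbf{Structure of the Choi state.} By the transpose trick, $(C_{\rm in}^T\otimes I)|\Phi\rangle=(I\otimes C_{\rm in})|\Phi\rangle$, where $C_{\rm in}^T$ is again Clifford. Hence
\[
|J(U)\rangle=(C_{\rm in}^T\otimes C_{\rm out})(I_R\otimes T^{\otimes t}\otimes I)|\Phi\rangle_{RA}.
\]
The state $|\Phi\rangle$ factors into $n$ two-qubit Bell pairs. On the first $t$ pairs we get $(I\otimes T)|\Phi_2\rangle$, which is Clifford-equivalent to $|0\rangle\otimes|T\rangle$ via a CNOT. Therefore $J(U)=\widetilde C(|T\rangle^{\otimes t}\otimes|0\rangle^{\otimes(2n-t)})$ for a Clifford $\widetilde C$ on $2n$ qubits. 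This has width one, stabilizer nullity $t$, and a residual equal to $C'|T\rangle^{\otimes t}$.

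By the gap computation in SM~\ref{sm:qubitgaps}, $\gamma_2=1/16$ for residuals with $t\ge2$; the lifted quadratic gap on the full label space is also constant. The cases $t\le1$ need a separate check: for $t=1$ only the affine gap $\gamma_1$ matters, and $\gamma_1$ is constant, since $R_{|T\rangle}(X)=R_{|T\rangle}(Y)=1/\sqrt2$. Theorem~\ref{thm:exact} then gives, from $O(n^2+\log\delta^{-1})$ Bell samples of $J(U)$ (two queries each), a $2n$-qubit Clifford decoder $\widehat D$. Its output satisfies $\widehat D^\dagger|J\rangle=\bigotimes_j|\phi_j\rangle$, with $2n-t$ stabilizer qubits and $t$ single-qubit nonstabilizer blocks, and it reveals $t$.

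\textbf{Identifying the blocks exactly.} Each magic block is a single-qubit state $L_j|T\rangle$ with $L_j$ a single-qubit Clifford, up to phase. The orbit is finite (eight states), with pairwise distinguishable Pauli expectations $\langle X\rangle,\langle Y\rangle,\langle Z\rangle\in\{0,\pm1/\sqrt2\}$. Constant-accuracy Pauli estimates on fresh decoded copies, with $O(\log(t/\delta))$ copies by a union bound, therefore fix each $L_j$. The stabilizer qubits' signs are fixed similarly. We absorb everything into a Clifford $\widehat D'$ with $|J\rangle=\widehat D'(|T\rangle^{\otimes t}\otimes|0\rangle^{\otimes 2n-t})$ up to phase.

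\textbf{Extracting the circuit.} This is the main obstacle: turning a Clifford on $R\cup A$ into a factorized form. We write $|J\rangle=\widehat D''(I_R\otimes T^{\otimes t}\otimes I)|\Phi\rangle$, where $\widehat D''=\widehat D'\cdot(\text{CNOT layer})^{-1}$ is Clifford. Then $(I\otimes V)|\Phi\rangle=|J(U)\rangle$ where $V$ is not yet known to be factorizable. Instead we use the stabilizer structure directly.

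The stabilizer group of $J(U)$, with its signs known, together with the $t$ magic blocks, determines the symplectic action of $U$ on the Pauli group up to the $T$ layer. Concretely, $|J(U)\rangle$ is stabilized by $W_a^{*}\otimes UW_aU^\dagger$ whenever $UW_aU^\dagger$ is Pauli. The set of such $a$ is $F_{C_{\rm in}}^{-1}$ applied to Paulis commuting with $Z$ on the first $t$ qubits. From the learned stabilizer group we read off this subspace and the induced Clifford map on it. We then extend it to a full symplectic pair $(\widehat F_{\rm in},\widehat F_{\rm out})$ by choosing symplectic bases compatible with the magic blocks: the partners $X_j$ on magic qubits must map to the images read from the decoded block frames.

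This defines candidate $\widehat C_{\rm in},\widehat C_{\rm out}$. The residual ambiguity is diagonal Cliffords on the magic qubits commuting with $T$, together with Pauli corrections. We resolve it by checking that $\widehat C_{\rm out}(T^{\otimes t}\otimes I)\widehat C_{\rm in}$ yields Choi state $J(U)$. Two unitaries with equal Choi states agree up to phase, which gives Eq.~\eqref{eq:t-depth-one-output}. The total query count stays $O(n^2+\log\delta^{-1})$.
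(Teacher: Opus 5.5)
Your structural steps match the paper: the Choi factorization $|J(U)\rangle=\widetilde C(|T\rangle^{\otimes t}\otimes|0\rangle^{\otimes(2n-t)})$, the constant gap, and exact identification of the decoded magic qubits. One slip: the single-qubit Clifford orbit of $|T\rangle$ has twelve states, the signed permutations of $(1/\sqrt2,1/\sqrt2,0)$, not eight. You should also invoke uniqueness of the irreducible blocks to place the decoded qubits in that orbit.

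The genuine gap is in the circuit-extraction step. The signed stabilizer group of $J(U)$ only fixes the map $a\mapsto UW_aU^\dagger$ on $\mathcal A=\{a: UW_aU^\dagger \text{ is Pauli}\}$. That data is identical for every $U_D=C_{\rm out}(T^{\otimes t}D\otimes I)C_{\rm in}$, where $D$ is a diagonal Clifford on the magic qubits (e.g.\ $S_1$, $Z_1$, $\mathrm{CZ}_{12}$). The reason is that $C_{\rm in}W_aC_{\rm in}^\dagger$ is $Z$-type on those qubits for $a\in\mathcal A$, so $D$ commutes with it. Yet $U_D=UC_{\rm in}^\dagger(D\otimes I)C_{\rm in}$ is not proportional to $U$ unless $D$ is a scalar; already $TS=T^3$ and $TZ=T^5$. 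So the ``residual ambiguity'' you list is not a gauge freedom. It is exactly the information separating $U$ from at least $4^t$ other $T$-depth-one unitaries. ``Resolving it by checking'' the Choi state is therefore an exponential search unless you give a direct way to determine $D$, and your proposal does not.

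Your extension step does not supply that information either. For a partner label $c$ with $F_{\rm in}c=X_j$, $UW_cU^\dagger$ is not a Pauli, so there is no ``image'' to read off. What the decoded frame provides are two-register logical Paulis $GX_jG^\dagger$ and $GZ_jG^\dagger$, defined only modulo stabilizers. Moreover, the learned frame $G$ can differ from the preparer's $G_0$ by a Clifford $K$ fixing $|T\rangle^{\otimes t}\otimes|0\rangle^{\otimes(2n-t)}$. For instance, the local symmetry of $|T\rangle$ swaps $X\leftrightarrow Y$ and sends $Z\mapsto -Z$. You need an argument that the learned logicals still encode the true rotation axes with correct signs.

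The paper supplies both missing pieces. First, it shows that $K$ permutes the pairs $\{X_j,Y_j\}$, since these are the only Paulis attaining the maximal expectation $1/\sqrt2$ on $|T\rangle^{\otimes t}$. Hence $KZ_jK^\dagger=\pm Z_{\pi(j)}$ modulo stabilizers. Each learned axis $L_j=GZ_jG^\dagger$ therefore has a unique output-only representative $I_R\otimes Q_j$, found by binary elimination against the stabilizers. Second, it never reconstructs the partners. With $\mathcal R=\prod_j e^{-i\pi Q_j/8}$ and $e^{i\pi Z/8}|T\rangle\propto|+\rangle$, the Choi state of $\mathcal R^\dagger U$ becomes the known stabilizer state $G(|+\rangle^{\otimes t}\otimes|0\rangle^{\otimes(2n-t)})$. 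Its complete signed tableau determines a Clifford $C_*$, and then $U\propto A(T^{\otimes t}\otimes I)A^\dagger C_*$ with $AZ_jA^\dagger=Q_j$. The signs are carried by the signed $L_j$, so no diagonal-Clifford or sign ambiguity survives and no search is needed.
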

The identity $\mathrm{CNOT}_{R\to A}|J(T)\rangle=|T\rangle_R|0\rangle_A$ gives $w_J(U)=1$ and stabilizer nullity $t$. These Choi states have $\gamma_2\ge1/16$, so Theorem~\ref{thm:exact} recovers their product structure with the stated query cost. The decoded magic qubits belong to the finite single-qubit Clifford orbit of $|T\rangle$, allowing exact identification with high probability. We then find output-register representatives of their Pauli rotation axes by binary linear algebra. Removing these rotations leaves a stabilizer Choi state, which determines the remaining Clifford circuit (\SM{}~\ref{sm:t-depth-one}). %

For more general circuits of magic depth one with logarithmic-size magic blocks, the same reduction gives efficient Choi-state learning under our gap assumptions (\SM{}~\ref{sm:choi}). %

\prlsection{A hidden cut forbids pseudorandomness}
We now use our method to distinguish hidden product structure from Haar randomness. Pseudorandom states are efficiently generated ensembles that no polynomial-time observer with polynomially many copies can distinguish from Haar-random states~\cite{JiLiuSong2018}. We test for hidden structure by accepting whenever the empirical quadratic kernel has dimension greater than one. Ordinary binary elimination implements this rank test in $O(n^6)$ bit operations and $O(n^4)$ bits of memory for $O(n^2)$ Bell samples; no Clifford reconstruction is needed (\SM{}~\ref{sm:pseudorandomness}). Applying the test to Choi copies also distinguishes unitary ensembles.

\begin{corollary}\label{cor:prs}
An efficiently generated $n$-qubit state ensemble placing nonnegligible weight on $\CPW(\psi)<n$ is not a pseudorandom-state ensemble. Likewise, an efficiently generated $n$-qubit unitary ensemble placing nonnegligible weight on $w_J(U)<2n$ is not a pseudorandom-unitary ensemble. The distinguishers use $O(n^2)$ copies or queries and polynomial processing.
\end{corollary}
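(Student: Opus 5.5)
The distinguisher is a rank test. From $M=O(n^2)$ Bell samples, evaluate all $D_2(n)$ monomials of degree at most two, form the $M\times D_2(n)$ binary evaluation matrix, and accept if its nullspace has dimension greater than one. The function $f\equiv 0$ is not a monomial combination here; the relevant baseline is the full-swap equation $q=q_{[n]}$, which always lies in the nullspace. Completeness and soundness are handled separately, and the unitary case is reduced to the state case through Choi states.

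\emph{Completeness.} Take $\psi$ with $\CPW(\psi)<n$, so $\psi=C(|\phi_A\rangle\otimes|\phi_{\bar A}\rangle)$ with $0<|A|<n$. By Eq.~\eqref{eq:purity}, the relabelled partial-swap quadratic $q_A\circ g_C^{-1}$ vanishes on the whole Bell support. It has degree two, and it is linearly independent from $q$ because its polarization $F_C E_{q_A}F_C^{-1}$ is a proper nonzero projector, whereas $E_q=I$. Exact equations are never violated by samples, so on such $\psi$ the empirical kernel has dimension at least two with probability one. No gap assumption is needed, because only exact kernel elements must survive. Stabilizers do not spoil this: if $\psi$ has a Pauli stabilizer, the affine equation $c+[a,v]$ is also independent of $q$, so the test still accepts.

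\emph{Soundness for Haar states.} For Haar-random $\psi$, I would show that with probability $1-\mathrm{negl}(n)$ every $f\in\RM(2,2n)$ other than $0$ and $q$ satisfies $\Pr_{p_\psi}[f=1]\ge c$ for a constant $c$. Then the union bound over $2^{D_2(n)}$ candidates, as in Theorem~\ref{thm:exact}, shows that $O(n^2)$ samples eliminate all of them except with probability $2^{-\Omega(n^2)}$.

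To get the per-$f$ bound, write $\E(-1)^{f(v)}=\langle\psi|^{\otimes2}Z_f|\psi\rangle^{\otimes2}$. Since $Z_f$ is diagonal in the Bell basis, its Haar average is $\Tr[Z_f\Pi_{\rm sym}]/\Tr\Pi_{\rm sym}$. The quantity $\Tr[Z_f\Pi_{\rm sym}]=\tfrac12\sum_v(-1)^{f(v)}(1+(-1)^{q(v)})$ reduces to Fourier-like sums of the quadratic forms $f$ and $f+q$. For $f\notin\{0,q\}$ these are bounded in magnitude by $2^{2n}\cdot 2^{-1/2}$ or less, using the bias bound for nonconstant quadratic forms, so the average is at most a constant below one. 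Lévy concentration on the sphere (the Lipschitz constant is $O(1)$) then gives deviations of size $e^{-\Omega(2^n)}$. This is doubly exponential and easily beats the $2^{O(n^2)}$ union bound.

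This step is the one I expect to be the main obstacle. The hardest case is $f$ with a quadratic form of low rank relative to $q$, such as $f=q+q_B$ restricted to a single qubit, where the bias is only bounded by a constant. A constant gap still suffices there.

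\emph{Efficiency and the PRS conclusion.} Given the per-$f$ bound, the test rejects Haar states with probability $1-\mathrm{negl}$. It accepts the ensemble with probability at least its weight on $\CPW<n$, which is nonnegligible. The advantage is therefore nonnegligible, so the ensemble is not a PRS. The running time is $O(n^6)$ for Gaussian elimination on an $O(n^2)\times O(n^2)$ binary matrix.

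\emph{Unitaries.} Querying $U$ on half of $|\Phi\rangle$ prepares $|J(U)\rangle$, a $2n$-qubit state, and $w_J(U)<2n$ is exactly $\CPW(J(U))<2n$, so completeness carries over. For soundness, Haar-random $U$ gives Choi states that are not Haar on $2n$ qubits. I would instead compute $\E_U\langle J|^{\otimes2}Z_f|J\rangle^{\otimes2}$ by second-moment Weingarten calculus, show it is bounded away from one for $f\notin\mathrm{span}\{q\}\cup\{0\}$, and conclude with concentration on $U(2^n)$. A PRU indistinguishable from Haar must then be indistinguishable on this test, which contradicts the nonnegligible acceptance.
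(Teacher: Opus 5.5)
\textbf{Verdict: same architecture as the paper, but the key per-$f$ anticoncentration step is not established.} The rank test accepting when $\dim K_M>1$, completeness via the Clifford-relabelled partial swap, and the Choi reduction for unitaries all match the paper. Your independence argument, that $F_CE_{q_A}F_C^{-1}\notin\{0,I\}$, is correct. Soundness via the Haar second moment, sphere concentration and a union bound over $2^{D_2(n)}$ quadratics also matches.

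\textbf{Where it fails.} Your formula is right: $\E_\Psi\E_{p_\Psi}(-1)^{f}=[\mathcal{W}(f)+\mathcal{W}(f+q)]/(2N_n)$, with $\mathcal{W}(g)=\sum_v(-1)^{g(v)}$ and $N_n=2^{n-1}(2^n+1)$. But bounding the two Walsh sums \emph{separately} cannot give a constant gap.
\begin{itemize}
\item With your bound $2^{2n}\cdot2^{-1/2}$ on each sum, the right-hand side is about $\sqrt2$, which is vacuous.
\item Even the sharp single-form bound $|\mathcal{W}(g)|\le2^{2n-1}$ for non-affine $g$ only gives $\E(-1)^f\le2^n/(2^n+1)$. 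That is a violation probability of order $2^{-n}$, which $O(n^2)$ samples cannot detect.
\item Separately, $f=1$ is constant, so no bias bound for nonconstant forms covers it. There you need the sign, $\mathcal{W}(1)+\mathcal{W}(1+q)<0$.
\end{itemize}

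\textbf{The missing idea: rank complementarity.} The polarizations of $f$ and $f+q$ cannot both have low rank, because $B_f+B_{f+q}=B_q$ is nondegenerate. Hence $\rank B_f+\rank B_{f+q}\ge 2n$. Write $\rank B_g=2r_g$ and use $|\mathcal{W}(g)|\in\{0,2^{2n-r_g}\}$; this gives $r_f+r_{f+q}\ge n$. The cases then go as follows.
\begin{itemize}
\item If both $f$ and $f+q$ are non-affine, $\mathcal{W}(f)+\mathcal{W}(f+q)\le 2^{2n-r_f}+2^{2n-r_{f+q}}\le 2^{2n-1}+2^{n+1}$. The extremum is at ranks $\{1,n-1\}$, for example $f=x_1z_1$.
\item If one of them is nonconstant affine, its Walsh sum vanishes. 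The other then has full rank and contributes magnitude only $2^n$.
\item The two constant cases, $f=1$ and $f=1+q$, give negative sums.
\end{itemize}
Consequently every $f\notin\Span\{q\}$ has Haar-mean violation probability at least $\alpha_n=(2^{n-1}-1)/[2(2^n+1)]\to 1/4$. This is Lemma~\ref{sm:quadricweight}. Your example $f=q+q_B$ is one instance of this complementarity: there $f+q=q_B$ has rank two, but $f$ has rank $2(n-1)$. The point is that it must be proved for every $f$. Once it is, your concentration and union-bound steps go through as in the paper.

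\textbf{The unitary case has the same gap.} Your Weingarten sketch needs the same lemma, applied on $2n$ qubits. Schur's lemma makes the mean Choi Bell probability constant on each swap-parity sector. It is at least $1/(d^2N_+)$ on every allowed label, with $N_+=d(d+1)/2$. This yields $\E_U\tau_f\ge(d^2-2)/[4d(d+1)]$, as in Proposition~\ref{sm:haarchoigap}. Without rank complementarity, your claim that the mean is ``bounded away from one'' is not established there either.
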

Here, negligible means decaying faster than any inverse polynomial.
A hidden cut supplies an exact equation independent of the full swap $q$, which survives any number of samples. For Haar-random states, we prove $\gamma_2=1/4-o(1)$, so only $q$ survives $O(n^2)$ samples with overwhelming probability. This extends product constraints from hidden cuts of physical qubits~\cite{BoulandEtAl2025} to cuts within arbitrary Clifford frames. For Haar-unitary Choi states, we separately prove $\gamma_2(J(U))=1/4-o(1)$; hence pseudorandom unitaries must have maximal Choi width except with negligible probability (\SM{}~\ref{sm:haarunitary}).

\prlsection{Discussion}
We have shown that product quantum systems can be recovered even after a Clifford scrambles them globally. The resulting states can have extensive magic, volume-law entanglement and classical sampling hardness, but their quadratic Bell equations retain a complete record of the hidden product structure. This gives efficient tomography controlled by Clifford-product width and allows weak hidden entanglement to be neglected within the target error. 
The width measures the size of correlations that cannot be separated by a Clifford, rather than the total amount of magic or entanglement across physical cuts.

This distinction is particularly clear for Clifford-scrambled $T$-states $C|T\rangle^{\otimes n}$, whose width remains one while magic is extensive and entanglement can obey a volume law. The same structure emerges with exponentially high probability in random circuits with $T$ gates interspersed with independent uniform global Cliffords, at any fixed $T$-gate density below one. 
For unitary processes, the same structure allows us to learn any circuit with a single parallel $T$ layer between arbitrary Clifford circuits. We recover an implementation that acts correctly on every input, showing that an extensive $T$ count and unrestricted Clifford depth are compatible with efficient circuit learning.

The decoder also gives a constructive disentangler for Clifford-augmented MPS representations. It recovers the smallest possible largest product block directly from copies, complementing local optimization and circuit-aware algebraic methods~\cite{FuxEtAl2025,LiuClark2026,MasotLlimaEtAl2026}. Ordering the recovered blocks consecutively gives a residual MPS with bond dimension $\chi\le 2^{\lfloor\CPW(\psi)/2\rfloor}$. 
For a supplied Clifford-augmented MPS with residual bond dimension $\chi_{\rm in}$, Bell samples can be generated classically in $O(n\chi_{\rm in}^3+n^2)$ time per sample after preprocessing (\SM{}~\ref{sm:mps-bell}). %

Residual entanglement after Clifford optimization can quantify correlations beyond stabilizer simplification~\cite{HuangQianQin2025NsEE,BordaKuhlmannRincon2026}, such as the summed nonstabilizerness entanglement entropy (NsEE), which minimizes the sum of entanglement entropies over all consecutive cuts using one common Clifford. 
For pure states, this quantity is zero precisely when a global Clifford can turn the state into a product of single-qubit states, corresponding to $\CPW=1$ (\SM{}~\ref{sm:nsee}). For this width-one class, our support-gap assumption lets the algorithm find such a Clifford and attain the zero NsEE minimum. 

Our results also show that extensive magic alone is not enough for pseudorandomness. Earlier work established a linear $T$-gate requirement~\cite{GrewalEtAl2024Improved}. We impose an additional structural constraint: except with negligible probability, no nontrivial product cut may remain in any Clifford frame.

Our decoder is Clifford only and thus can be implemented without using magic. Further, across any fixed partition of the physical qubits, 
the learning can be done with local operations and classical communication (LOCC) only, making it suitable for distributed learning~\cite{LeoneEtAl2025LOCC}. The tomography step can be replaced by Clifford LOCC via measuring conjugated Pauli strings. %

Our hidden-cut distinguisher uses $O(n^2)$ copies and polynomial classical time, independently of the support gaps. For the uniform random Clifford orbit of $|T\rangle^{\otimes n}$, Bell estimates of squared Pauli correlations~\cite[Lemma~2.2]{king2025triply}, followed by a search over all Pauli labels, give an $O(n)$-copy distinguisher with exponential classical processing~\cite{bittel2026operational}. At least $\Omega(\sqrt n)$ copies are necessary for constant distinguishing advantage for this ensemble~\cite{bittel2026operational}. %

Finally, it would be interesting to extend our algorithm to nearly Pauli-stabilized subsystems, general Clifford-augmented MPSs whose residual states remain entangled across blocks, and noisy inputs near the promised class.

\begin{acknowledgments}
\begin{samepage}
\prlsection{Acknowledgments}
We thank Zhenhuan Liu for insightful discussions. The algorithm is implemented in Python on GitHub~\cite{haug2026productlearner}.
Generative artificial-intelligence tools were used to assist with derivations, language editing, organization, and manuscript preparation.
\par
\end{samepage}
\end{acknowledgments}

\bibliography{references}
\clearpage
\onecolumngrid

\let\addcontentsline\oldaddcontentsline

\appendix
\setcounter{section}{0}
\setcounter{equation}{0}
\setcounter{theorem}{0}
\setcounter{figure}{0}
\setcounter{table}{0}
\renewcommand{\thesection}{\Alph{section}}
\renewcommand{\thetheorem}{S\arabic{theorem}}
\renewcommand{\thefigure}{S\arabic{figure}}
\renewcommand{\theHtheorem}{supp.\arabic{theorem}}
\renewcommand{\theHsection}{supp.\Alph{section}}
\setcounter{secnumdepth}{2}
\begin{center}
{\large\bfseries Supplemental Material}
\end{center}
\noindent

\makeatletter
\@starttoc{toc}
\makeatother

\section{Mathematical background and notation}
\label{sm:math-background}

\subsection{System size and binary arithmetic}
\label{sm:notation}

We define $n$ as the number of qubits of the input state $\ket{\psi}$ and write
\begin{equation}
 \nu=n-\dim\mathsf S_\psi
 \label{sm:nullity-notation}
\end{equation}
for its stabilizer nullity, using the same symbol as in the main text. Thus $n-\nu$ is the number of independent Pauli stabilizers, and removing them leaves $\nu$ qubits with no nonidentity Pauli stabilizer. We use $\nu$ for the residual size throughout the structural and approximate-learning arguments. When a statement applies to a generic register, we denote its size by $N$.

Pauli labels belong to a vector space over $\F_2=\{0,1\}$, with addition and multiplication modulo two. In particular, $1+1=0$, addition is bitwise XOR, and subtraction equals addition. Matrix ranks, kernels, spans, and eigenvectors in the Pauli-label arguments are computed over this field. %

\subsection{Affine equations and Boolean quadratics}
\label{sm:boolean-background}

A Boolean function takes a binary vector $v=(v_1,\ldots,v_d)$ to one bit. Every such function has a unique multilinear polynomial representation over $\F_2$, because $v_i^2=v_i$ on binary inputs. We write $\RM(j,d)$ for the vector space of Boolean functions of degree at most $j$; this is the usual Reed--Muller notation. The cases needed here are
\begin{align}
 h(v)&=c+\sum_i\ell_i v_i,
 &h&\in\RM(1,d),\nonumber\\
 f(v)&=c+\sum_i\ell_i v_i+\sum_{i<j}b_{ij}v_i v_j,
 &f&\in\RM(2,d).
 \label{sm:boolean-expansion}
\end{align}
An affine function is a linear function plus a constant $c$. The equation $h(v)=0$ is a parity check: the selected bits sum to the prescribed value $c$. For example, $1+v_1+v_3=0$ requires $v_1$ and $v_3$ to differ. Similarly, an affine map $g(v)=Fv+t$ consists of a linear change of coordinates followed by a fixed translation $t$. Substitution by an invertible affine map preserves polynomial degree.

The support of a distribution $p$ is the set of labels with $p(v)>0$. An exact equation is a function that vanishes on this entire support. Its empirical counterpart vanishes on the observed samples only. Evaluating the monomials $1,v_i,v_iv_j$ on each sample gives a binary feature matrix. Multiplying this matrix by the coefficient vector of $f$ evaluates $f$ on every sample, so its nullspace consists exactly of the empirical equations. This explains why learning equations is a linear-algebra task even when the equations themselves are quadratic.

To extract the quadratic part, define the polarization
\begin{equation}
 \mathcal B_f(u,v)=f(u+v)+f(u)+f(v)+f(0)
 =\sum_{i<j}b_{ij}(u_i v_j+u_j v_i).
 \label{sm:polarization-background}
\end{equation}
All constant and linear terms cancel. Thus $\mathcal B_f$ is bilinear and can be written as $u^T B_fv$, where $(B_f)_{ij}=(B_f)_{ji}=b_{ij}$ and $(B_f)_{ii}=0$. Such a form is called alternating: $\mathcal B_f(v,v)=0$ for every $v$. Over $\F_2$, an alternating matrix is symmetric with zero diagonal; symmetry alone does not imply alternation. Two quadratics have the same polarization precisely when their difference is affine. For example, $f(x,z)=xz$ has
$B_f=\bigl(\begin{smallmatrix}0&1\\1&0\end{smallmatrix}\bigr)$, while adding $x$, $z$, or a constant leaves $B_f$ unchanged.

\subsection{Symplectic spaces and subsystem projectors}
\label{sm:symplectic-background}

On $N$ qubits, write a Pauli label as $v=(x,z)\in V=\F_2^{2N}$. The bilinear form
\begin{equation}
 [u,v]=u_x\cdot v_z+u_z\cdot v_x=u^TJv,
 \qquad J=\begin{pmatrix}0&I_N\\I_N&0\end{pmatrix}
 \label{sm:background-form}
\end{equation}
encodes commutation: two Paulis commute when $[u,v]=0$ and anticommute when $[u,v]=1$. This form is alternating and nondegenerate, meaning that a vector commuting with every label must be zero. A space equipped with such a form is called symplectic. Here $J^T=J$ and $J^2=I$.

For a subspace $U$, its symplectic orthogonal complement is
$U^\perp=\{v:[u,v]=0\text{ for all }u\in U\}$. A subspace is isotropic when its vectors commute pairwise, as happens for stabilizer labels. A subspace is nondegenerate when $U\cap U^\perp=\{0\}$. It then has even dimension and admits a symplectic basis $e_1,f_1,\ldots,e_t,f_t$ satisfying
\begin{equation}
 [e_i,e_j]=[f_i,f_j]=0,\qquad [e_i,f_j]=\delta_{ij}.
 \label{sm:symplectic-basis-background}
\end{equation}
Each pair can serve as the $X$ and $Z$ labels of one encoded qubit. Such a basis is constructed by choosing $e\ne0$, finding $f$ with $[e,f]=1$, and continuing in the orthogonal complement of their span. This is symplectic Gram--Schmidt. A full symplectic basis specifies a Clifford change of coordinates, up to Pauli phases. Consequently, a decomposition of the full Pauli-label space into mutually symplectically orthogonal, nondegenerate subspaces identifies quantum subsystems: a Clifford transformation maps each subspace to the Pauli labels of a disjoint group of qubits.

To convert a quadratic into a linear map, set $E_f=JB_f$. This definition is equivalent to
\begin{equation}
 [E_fu,v]=\mathcal B_f(u,v)=[u,E_fv].
 \label{sm:self-adjoint-background}
\end{equation}
The equality on the two sides is called self-adjointness with respect to the symplectic form; it means $E_f^TJ=JE_f$. %
A projector additionally satisfies $E^2=E$. It then fixes vectors in its image, annihilates vectors in its kernel, and gives the decomposition $v=Ev+(I+E)v$. If it is symplectically self-adjoint, its image and kernel are orthogonal nondegenerate spaces and therefore describe complementary subsystems. For a physical subset $A$, the quadratic $q_A(x,z)=\sum_{i\in A}x_iz_i$ gives exactly the projector retaining the $x_i,z_i$ coordinates for $i\in A$. Theorem~\ref{thm:structure} establishes the converse for exact Bell equations after stabilizer removal.

A family of commuting projectors can be diagonalized simultaneously by splitting into their joint zero and one eigenspaces. When it is also closed under addition and multiplication and contains $I$, it is a Boolean algebra of projectors. These operations represent symmetric difference, intersection, and, through $I+E$, complement of the selected subsystems. A minimal nonzero projector selects a subsystem that the algebra cannot split further. The classification proof shows why these algebraic pieces are precisely the irreducible product blocks of the state.

\subsection{Pauli expansions and reduced-state purity}
\label{sm:operator-background}

The Hermitian Paulis form an orthogonal operator basis:
\begin{equation}
 \Tr(W_uW_v)=2^N\delta_{uv},\qquad
 \rho=2^{-N}\sum_v\Tr(\rho W_v)W_v.
 \label{sm:pauli-expansion-background}
\end{equation}
Thus a Pauli commuting with every nonzero term in this expansion also commutes with $\rho$. For a rank-one state, commuting with $\rho=|\psi\rangle\langle\psi|$ means preserving its one-dimensional image, so $|\psi\rangle$ is a Pauli eigenvector. This is the link between the span of nonzero Pauli correlations and the absence of stabilizers used below.

Finally, a pure bipartite state has a Schmidt decomposition
$|\psi\rangle=\sum_i\sqrt{\lambda_i}|i\rangle_A|\widetilde i\rangle_{\bar A}$, with $\lambda_i\ge0$ and $\sum_i\lambda_i=1$. The nonzero eigenvalues of either reduced state are the $\lambda_i$. Its purity is $\Tr\rho_A^2=\sum_i\lambda_i^2$, which equals one exactly when a single Schmidt coefficient is nonzero, that is, when the state is a product across $A|\bar A$. We write $\|M\|_2^2=\Tr(M^\dagger M)$ for the Hilbert--Schmidt norm in the swap bound. %

\section{Bell measurements and stabilizer removal}
\label{sm:bell}

We now derive the connection between Bell samples, Pauli stabilizers, and product cuts. For an $N$-qubit system, use the conventions of Sec.~\ref{sm:math-background}: $v=(x,z)\in\F_2^{2N}$ and
\begin{equation}
 [u,v]=u^TJv,\qquad
 J=\begin{pmatrix}0&I_N\\ I_N&0\end{pmatrix},\qquad
 q(v)=x\cdot z.
 \label{sm:symplectic}
\end{equation}
The Hermitian Paulis are $W_v=\bigotimes_{j=1}^N i^{x_jz_j}X_j^{x_j}Z_j^{z_j}$, with ordinary integer exponents in the phase. Complex conjugation changes the sign of each $Y$ component, giving $W_v^*=W_v^T=(-1)^{q(v)}W_v$. We measure two identical copies in
the Bell basis. With $d=2^N$ and
$|\Phi_d\rangle=d^{-1/2}\sum_j|j,j\rangle$, the basis states and
sampling probabilities are
\begin{equation}
 |B_v\rangle=(I\otimes W_v)|\Phi_d\rangle,\qquad
 p_\psi(v)=d^{-1}|\langle\psi^*|W_v|\psi\rangle|^2.
 \label{sm:bell-distribution}
\end{equation}
We write $\xi\sim p_\psi$ for a random Bell sample and
$\xi^{(1)},\ldots,\xi^{(M)}$ for independent Bell samples.
To relate these samples to a product cut, consider a subset
$A\subseteq[N]$ and define $q_A(v)=\sum_{i\in A}x_iz_i$. The partial
swap is diagonal in the Bell basis. The relation
$\mathrm{SWAP}_A|B_v\rangle=(-1)^{q_A(v)}|B_v\rangle$ gives
\begin{equation}
 \E_{p_\psi}(-1)^{q_A(\xi)}=\Tr\rho_A^2,
 \qquad
 \Pr[q_A(\xi)=1]=\frac{1-\Tr\rho_A^2}{2}.
 \label{sm:swap-purity}
\end{equation}
Thus the frequency of $q_A=1$ measures the impurity of the reduced
state. For the full system, purity implies that every Bell label with nonzero probability satisfies the full-swap equation $q=0$.

To recover cuts hidden by a Clifford, we need its action on the measured
labels. A Clifford $C$ acts on Pauli labels by a symplectic matrix $F_C$,
whereas its action on Bell labels includes a translation:
\begin{equation}
 (C\otimes C)|B_v\rangle=e^{i\theta_v}|B_{g_C(v)}\rangle,
 \qquad g_C(v)=F_Cv+t_C.
 \label{sm:affine-action}
\end{equation}
To derive the shift, use $(A\otimes I)|\Phi_d\rangle=(I\otimes A^T)|\Phi_d\rangle$, so the matrix labeling the transformed Bell state is
\begin{equation}
 C W_v C^T=(CW_vC^\dagger)(CC^T).
 \label{sm:bell-shift-derivation}
\end{equation}
Complex conjugation changes only Pauli signs, so $C^*$ has the same symplectic action $F_C$ as $C$. Consequently $C^T=(C^*)^\dagger$ has action $F_C^{-1}$, and $CC^T$ has trivial symplectic action. A Clifford with trivial action on labels is a Pauli up to phase. Calling its label $t_C$, multiplication in Eq.~\eqref{sm:bell-shift-derivation} adds $t_C$ to $F_Cv$, giving the affine map in Eq.~\eqref{sm:affine-action}.

For example, the single-qubit phase gate $C=\operatorname{diag}(1,i)$ has $CC^T=Z$, so
\begin{equation}
 g_C(x,z)=(x,z+x+1).
 \label{sm:phase-gate-affine}
\end{equation}
In particular, it maps the Bell label $(0,0)$ to $(0,1)$; a linear map alone could not do this. Nevertheless, $q(g_C(v))=q(v)$ because $C\otimes C$ commutes with the full swap. Clifford decoding can therefore be applied directly to the samples by an affine relabeling. Under this invertible change of variables, a support equation becomes $f\circ g_C^{-1}$, with the same degree and violation probability in the transformed distribution.

We express an equation $f(v)=0$ as the Bell-diagonal observable
$Z_f=\sum_v(-1)^{f(v)}|B_v\rangle\langle B_v|$. Its expectation is $1-2\Pr[f(\xi)=1]$, so an exact equation has expectation one. To identify the affine equations, first note that
\begin{equation}
 (W_a\otimes W_a^*)|B_v\rangle=(-1)^{[a,v]}|B_v\rangle.
 \label{sm:affine-eigenvalue}
\end{equation}
Indeed, $(W_a\otimes W_a^*)|\Phi_d\rangle=|\Phi_d\rangle$, and commuting $W_a^*$ through the second-register $W_v$ supplies the sign $(-1)^{[a,v]}$. Since every linear functional is $[a,\cdot]$ for a unique $a$, every affine equation is $h(v)=c+[a,v]$. Equation~\eqref{sm:affine-eigenvalue} gives
\begin{equation}
 Z_h=(-1)^c W_a\otimes W_a^*,\qquad
 \langle Z_h\rangle_{\psi^{\otimes2}}
 =(-1)^{c+q(a)}|\langle W_a\rangle_\psi|^2.
 \label{sm:affine-character}
\end{equation}
For $a\ne0$, this expectation equals one precisely when $|\langle W_a\rangle_\psi|=1$ and $c=q(a)$. Since a Hermitian Pauli has eigenvalues $\pm1$, unit expectation magnitude means $W_a|\psi\rangle=\pm|\psi\rangle$. Thus nonzero exact affine equations are in one-to-one correspondence with nonidentity stabilizer labels. The case $a=0$ contributes only the zero function as an exact equation. The squared expectation in Eq.~\eqref{sm:affine-character} also explains why Bell samples do not reveal the stabilizer eigenvalue signs.

As a simple example, a Bell sample from $|0\rangle$ is $(x,z)=(0,0)$ or $(0,1)$, each with probability $1/2$. Its exact affine equation is $x=0$, corresponding to the stabilizer label of $Z$. The state $|1\rangle$ has the same Bell distribution, although its stabilizer is $-Z$. A measurement in the decoded computational basis supplies the missing sign.

Exact affine equations also generate quadratic equations that
need not describe product cuts. For example, for
$|\psi\rangle=|0\rangle\otimes|\phi\rangle$,
every Bell sample satisfies $x_1=0$. Hence
$f(v)=x_1x_2$ vanishes on the Bell support, independently
of $|\phi\rangle$. In coordinates $v=(x_1,x_2,z_1,z_2)$,
its polarization gives
\[
E_fv=(0,0,x_2,x_1).
\]
Thus $E_f$ has rank two but satisfies $E_f^2=0\ne E_f$:
it is not a projector and therefore does not specify
a product cut. Removing the stabilized qubit eliminates
this equation before we reconstruct the remaining subsystems.

\begin{lemma}[Stabilizer removal]
\label{sm:quotient}
Let $\mathsf S_\psi$ be the full Pauli-stabilizer label space of an
$n$-qubit pure state, with nullity $\nu=n-\dim\mathsf S_\psi$.
There is a Clifford decoder such that
\begin{equation}
 D_{\rm stab}^{\dagger}|\psi\rangle
 =|b\rangle_{1\ldots n-\nu}\otimes|\psi_0\rangle,
 \label{sm:quotient-state}
\end{equation}
where the residual state $|\psi_0\rangle$ has no nonidentity Pauli
stabilizer. If $|\psi\rangle\in\CTP_{n,k}$, then
$|\psi_0\rangle\in\CTP_{\nu,k}$. Relabeling Bell samples by
$g_{D_{\rm stab}^{\dagger}}$ and deleting the first $n-\nu$ qubit labels
gives Bell samples of $|\psi_0\rangle$ without postselection.
\end{lemma}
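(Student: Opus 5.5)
We prove the three claims in order: existence of the decoder, preservation of the class $\CTP_{\nu,k}$, and the Bell-sample relabeling.

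\emph{Existence of the decoder.} First, $\mathsf S_\psi$ is isotropic. Any two stabilizers $\pm W_a,\pm W_b$ of $|\psi\rangle$ satisfy $W_aW_b|\psi\rangle=W_bW_a|\psi\rangle\neq 0$, which forces $[a,b]=0$. Since $\mathsf S_\psi$ is an isotropic subspace of dimension $m=n-\nu$, standard symplectic Gram--Schmidt (Sec.~\ref{sm:symplectic-background}) extends a basis $e_1,\dots,e_m$ of it to a full symplectic basis. Let $D_{\rm stab}$ be a Clifford whose tableau maps $Z_1,\dots,Z_m$ to $\pm W_{e_1},\dots,\pm W_{e_m}$.

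Then $D_{\rm stab}^\dagger|\psi\rangle$ is stabilized by $\pm Z_1,\dots,\pm Z_m$. Hence it equals $|b\rangle\otimes|\psi_0\rangle$ for some bit string $b$, since the joint eigenspace of these commuting operators is $|b\rangle\otimes(\mathbb C^2)^{\otimes\nu}$.

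Stabilizer-freeness of $|\psi_0\rangle$ follows by contradiction. If some $\pm W_c$ stabilized $|\psi_0\rangle$ with $c\neq0$, then $I\otimes W_c$ would stabilize the decoded state. Mapping back through $D_{\rm stab}$ gives a stabilizer label of $\psi$ outside $\Span\{e_i\}=\mathsf S_\psi$, contradicting the dimension count.

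\emph{Class preservation.} Write $|\psi\rangle=C\bigotimes_\alpha|\phi_\alpha\rangle$. The main obstacle is showing that $|\psi_0\rangle$ is itself a Clifford image of a product of blocks of size at most $k$.

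First, reduce each block separately. Apply the stabilizer-removal construction above within block $A_\alpha$: a local Clifford $L_\alpha$ gives $L_\alpha^\dagger|\phi_\alpha\rangle=|b_\alpha\rangle\otimes|\chi_\alpha\rangle$ with $\chi_\alpha$ stabilizer-free and $|\chi_\alpha\rangle$ of size at most $k$.

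Next, show that the stabilizer group of a product of stabilizer-free states is trivial. Let $|\chi\rangle=\bigotimes_\alpha|\chi_\alpha\rangle$, and suppose $W_a$ with $a=\oplus_\alpha a_\alpha$ stabilizes it up to sign. Then $1=|\langle W_a\rangle|=\prod_\alpha|\langle W_{a_\alpha}\rangle|$, so each factor has modulus one, forcing every $a_\alpha=0$.

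It follows that $|\psi\rangle=C'(|b'\rangle\otimes|\chi\rangle)$ with $C'=C\bigotimes_\alpha L_\alpha$ composed with a qubit permutation. Its stabilizer space is $F_{C'}$ applied to the $Z$ labels of the first qubits, and its dimension $n-\nu'$ must equal $n-\nu$. Therefore $|\chi\rangle$ has exactly $\nu$ qubits.

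Now compare the two decoders. Both $D_{\rm stab}^\dagger C'$ and the identity map the stabilizer group of $|b'\rangle\otimes|\chi\rangle$ to $\langle\pm Z_1,\dots,\pm Z_m\rangle$. So $G=D_{\rm stab}^\dagger C'$ preserves the isotropic space $Z_{[m]}$ and hence also its symplectic complement $Z_{[m]}^\perp=Z_{[m]}\oplus V_{\rm res}$. The quotient $Z_{[m]}^\perp/Z_{[m]}$ is identified with the residual $\nu$-qubit phase space, and $G$ induces a symplectic map $\bar G$ on it.

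Acting on states in $|b\rangle\otimes\mathcal H_{\rm res}$, the induced map is realized by a $\nu$-qubit Clifford $C_0$ up to Paulis on the first register. This gives $|\psi_0\rangle=C_0|\chi\rangle$ up to phase, so $|\psi_0\rangle\in\CTP_{\nu,k}$.

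I would verify this last step concretely. One route is to decompose $G$ as a product of Cliffords of the form ($Z$-type controlled Paulis from the first $m$ qubits, times $X$-frame corrections) with $I\otimes C_0$, and check that it sends $|b'\rangle\otimes|\chi\rangle$ to $|b\rangle\otimes C_0|\chi\rangle$. The alternative route is to argue directly that $\langle b|\otimes I$ applied to $G(|b'\rangle\otimes|\chi\rangle)$ already has full norm.

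\emph{Bell-sample relabeling.} By Eq.~\eqref{sm:affine-action}, measuring the Bell basis on $\psi^{\otimes2}$ and applying $g_{D_{\rm stab}^\dagger}$ produces samples from $p_{D_{\rm stab}^\dagger\psi}$. For the product state $|b\rangle\otimes|\psi_0\rangle$, the two-copy Bell distribution factorizes as $p_{|b\rangle}\times p_{\psi_0}$, because the Bell basis is a tensor product over qubit pairs. Marginalizing, which here simply means deleting the first $n-\nu$ labels, therefore yields exact samples of $p_{\psi_0}$, and no postselection is needed.
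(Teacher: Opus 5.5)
Your proposal follows the paper's proof essentially step for step. It uses symplectic elimination of the isotropic stabilizer space, blockwise stabilizer removal together with the fact that the stabilizer space of a product is the direct sum of the local ones, comparison of the two decoders through the induced symplectic map on $S_0^\perp/S_0$, and factorization of the Bell distribution over the decoded blocks.

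The step you flag for verification closes directly. $G$ conjugates the signed group $\langle(-1)^{b'_i}Z_i\rangle$ onto $\langle(-1)^{b_i}Z_i\rangle$, so it maps $|b'\rangle\otimes\mathcal H_{\rm res}$ onto $|b\rangle\otimes\mathcal H_{\rm res}$. Its restriction sends each residual Pauli to a residual Pauli up to sign, because the image commutes with all $Z_i$, $i\le m$, and its $Z$-type part on the first register acts on $|b\rangle$ only as a sign. Hence the restriction is a $\nu$-qubit Clifford.
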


\begin{proof}
We separate the stabilized qubits by mapping a basis of their isotropic
label space to the labels of $Z_1,\ldots,Z_{n-\nu}$. Symplectic elimination
constructs this Clifford and gives Eq.~\eqref{sm:quotient-state}.
The residual state cannot have another Pauli stabilizer, since
$\mathsf S_\psi$ is complete. The Bell distribution factorizes over
the decoded blocks. Deleting the stabilized coordinates
therefore gives the residual Bell distribution.

We next show that this operation preserves the block-size promise.
Choose a hidden representation
$C^\dagger|\psi\rangle=\bigotimes_i|\phi_i\rangle$ with blocks of
size at most $k$. A product Pauli has expectation of magnitude one
exactly when each local Pauli does. The full stabilizer space is thus
the direct sum of the local stabilizer spaces. Within each block,
complete a basis of the isotropic stabilizer space to a symplectic
basis. The corresponding block-local Clifford removes its stabilized
qubits, leaving blocks of size at most $k$ and the same total number
$n-\nu$ of fixed qubits. After adjusting the stabilizer signs, both this
decoder and the learned decoder identify the original stabilizer code
with
$|0\rangle^{\otimes(n-\nu)}\otimes(\mathbb C^2)^{\otimes \nu}$. Their relative Clifford preserves this code. To identify its action on the residual qubits, let $S_0$ be the span of the fixed $Z$ labels. Paulis preserving the code have labels in $S_0^\perp$, and labels differing by an element of $S_0$ act identically on the code, up to a sign. The quotient $S_0^\perp/S_0$ is therefore the Pauli-label space of the $\nu$ residual qubits. It has dimension $2\nu$ and inherits a nondegenerate symplectic form. The relative Clifford induces a symplectic map on this quotient and hence a Clifford on the residual register. Thus the two residual states differ by a $\nu$-qubit Clifford, and the learned residual state also belongs to $\CTP_{\nu,k}$.
\end{proof}

To implement this first stage, we use the standard affine-nullspace
method for stabilizer learning~\cite{montanaro2017learning,GrewalEtAl2025,HinscheEtAl2026}.
We form a binary matrix with rows $(1,(\xi^{(t)})^T)$ for $t=1,\ldots,M$.
Each nullspace vector $(c,\ell)$ specifies an affine equation
$c+\ell^Tv=0$ and hence a candidate stabilizer label
$a=J\ell$, by Eq.~\eqref{sm:affine-character}.
Measuring the fixed qubits of one fresh decoded copy determines the
signs $b$. We return these qubits as single-qubit blocks and learn the
remaining subsystems from the residual samples. If $\nu=0$, the entire state is stabilized and there is no residual reconstruction problem. Otherwise, the following sections use $\nu$ for the residual number of qubits. If the state is supplied as an MPS, the samples can be generated classically by the sampler in Sec.~\ref{sm:mps-bell}.

\section{Every exact quadratic equation is a product cut}
\label{sm:classification}

We now prove Theorem~\ref{thm:structure}. Throughout this section,
$|\psi\rangle$ is the stabilizer-free residual state on $\nu$ qubits,
where $\nu$ is the stabilizer nullity of the original state. We use the
binary linear algebra introduced in Sec.~\ref{sm:math-background}.
The proof has three steps. First, an exact quadratic Bell equation
gives a projector on the Pauli-label space. Second, the image of that
projector corresponds to a subsystem with a pure reduced state, and
therefore to a product cut. Finally, all such cuts are compatible;
their common refinement gives the unique irreducible decomposition.

Let $V=\F_2^{2\nu}$ and let
$K_\psi\subseteq\RM(2,2\nu)$ be the vector space of Boolean
polynomials of degree at most two that vanish on
$\supp p_\psi$. For $f\in\RM(2,2\nu)$, define its polarization
matrix $B_f$ and the associated linear map $E_f$ by
\begin{equation}
 u^TB_fv=f(u+v)+f(u)+f(v)+f(0),\qquad E_f=JB_f.
 \label{sm:polarization}
\end{equation}
The right-hand side keeps only the quadratic coefficients of $f$.
It is bilinear, symmetric, and zero when $u=v$, so $B_f$ is an
alternating matrix. Since $B_f^T=B_f$ and $J^2=I$, the map satisfies
\begin{equation}
 E_f^TJ=B_f=JE_f,\qquad [E_fu,v]=[u,E_fv].
 \label{sm:polarization-self-adjoint}
\end{equation}
Thus $E_f$ is self-adjoint with respect to the symplectic form.
For the partial-swap polynomial $q_A$, it is the projector
$E_{q_A}=P_A$ onto the Pauli coordinates of $A$.

This construction respects a change of Clifford frame. Under
the affine relabeling $g_C(v)=F_Cv+t_C$ from
Eq.~\eqref{sm:affine-action}, the translation changes only the
constant and linear terms. Hence
\begin{equation}
 B_{f\circ g_C^{-1}}=F_C^{-T}B_fF_C^{-1},\qquad
 E_{f\circ g_C^{-1}}=F_CE_fF_C^{-1}.
 \label{sm:polarization-covariance}
\end{equation}
Polarization therefore turns a scrambled partial swap into the
corresponding conjugated projector. It also distinguishes exact
equations: if $E_f=E_g$ for $f,g\in K_\psi$, then $f+g$ is an
affine equation vanishing on the Bell support. Stabilizer removal
eliminates every nonzero such equation by
Eq.~\eqref{sm:affine-character}, so $f=g$. The map
$f\mapsto E_f$ is consequently injective on $K_\psi$.

\subsection{An exact quadratic equation gives a projector}

\begin{lemma}[Projectors from quadratic equations]
\label{sm:rigidity}
Let $|\psi\rangle$ be a stabilizer-free pure state. Every
$f\in K_\psi$ has idempotent polarization, $E_f^2=E_f$.
\end{lemma}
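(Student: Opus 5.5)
The argument converts the exact equation into an operator identity on two copies and then into a multiplicative constraint on Pauli correlations.

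\textbf{Step 1: the Bell-diagonal involution fixes the two copies.} Since $f$ vanishes on $\supp p_\psi$, the involution $Z_f=\sum_v(-1)^{f(v)}|B_v\rangle\langle B_v|$ satisfies $Z_f|\psi\rangle^{\otimes2}=|\psi\rangle^{\otimes2}$.

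\textbf{Step 2: conjugating a Pauli by $Z_f$.} I will compute $Z_f(W_a\otimes I)Z_f$ in the Bell basis. The operator $W_a\otimes I$ shifts Bell labels, $(W_a\otimes I)|B_v\rangle\propto|B_{v+a}\rangle$, up to phases. Conjugation therefore multiplies each matrix element by $(-1)^{f(v)+f(v+a)}$. Using polarization,
\[
f(v+a)+f(v)=f(a)+f(0)+u^TB_f a\big|_{u=v}=\text{const}+[v,E_fa].
\]
So the extra factor is a constant sign times $(-1)^{[E_fa,v]}$. By Eq.~\eqref{sm:affine-eigenvalue}, this factor is implemented by $W_{E_fa}\otimes W_{E_fa}^*$. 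Hence, up to phase,
\[
Z_f(W_a\otimes I)Z_f\propto W_{a+E_fa}\otimes W^*_{E_fa}.
\]

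\textbf{Step 3: the rigidity identity.} Taking the expectation in $\psi^{\otimes2}$ and using invariance from Step 1 gives $\langle W_a\rangle$ equal in modulus to $\langle W_{(I+E_f)a}\rangle\,\overline{\langle W_{E_fa}\rangle}$. This is Eq.~\eqref{eq:rigidity}:
\[
R(a)=R(E_fa)\,R((I+E_f)a).
\]

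\textbf{Step 4: extremal argument.} Let $\Omega=\{a:R(a)>0\}$. Suppose some $a\in\Omega$ has $E_f^2a\ne E_fa$, and choose such an $a$ with $R(a)$ maximal.
\begin{itemize}
\item If $E_fa=0$ or $E_fa=a$, then $E_f^2a=E_fa$ holds automatically. So both $b=E_fa$ and $c=(I+E_f)a$ are nonzero.
\item Stabilizer-freeness gives $R(b),R(c)<1$. Since $R(a)=R(b)R(c)>0$, we get $R(b),R(c)>R(a)$ and both lie in $\Omega$.
\item Idempotence defect: $E_f^2b+E_fb=E_f^3a+E_f^2a$, and $E_f^2c+E_fc=E_f^2a+E_fa+E_f^3a+E_f^2a$. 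The two defects sum to $E_f^2a+E_fa\ne0$, so at least one of them is nonzero.
\end{itemize}
That label violates idempotence and has strictly larger $R$ than $a$, contradicting maximality. Thus $E_f^2=E_f$ on $\Omega$.

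\textbf{Step 5: $\Omega$ spans $V$.} This is the step I expect to be the main obstacle. Suppose $\Span\Omega\ne V$. Nondegeneracy of the symplectic form gives a nonzero $w$ with $[w,a]=0$ for all $a\in\Omega$. Then $W_w$ commutes with every term of the Pauli expansion of $\rho=|\psi\rangle\langle\psi|$ (Sec.~\ref{sm:operator-background}), so $|\psi\rangle$ is an eigenvector of $W_w$. This contradicts stabilizer-freeness. Hence $E_f^2=E_f$ on all of $V$ by linearity.

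Step 2 also needs care: the phases and the constant sign must be tracked so that taking moduli is legitimate.
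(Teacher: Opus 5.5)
Your proposal is correct and follows the paper's proof essentially step for step: the conjugation identity $Z_f(W_a\otimes I)Z_f\propto W_{(I+E_f)a}\otimes W^*_{E_fa}$ yields $R(a)=R(E_fa)R((I+E_f)a)$, then the extremal argument on a maximal-$R$ violator of idempotence, then the fact that labels with nonzero expectation span $V$. The phase bookkeeping you flag in Step 2 is harmless, because only moduli enter and $W_b^*=(-1)^{q(b)}W_b$ has a real expectation whose magnitude equals that of $W_b$.
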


\begin{proof}
We first derive an identity for Pauli expectation values. Set
$E=E_f$, $A_v=W_v\otimes I$, and
$R_\psi(v)=|\langle W_v\rangle_\psi|$.
The operator $A_v$ maps each Bell vector $|B_y\rangle$ to a
phase times $|B_{y+v}\rangle$. Applying $Z_f$, then $A_v$,
then $Z_f$ to this vector multiplies that phase by
$(-1)^{f(y)+f(y+v)}$. Therefore
\begin{equation}
 \begin{split}
 Z_fA_vZ_f&=A_vZ_{h_v},\\
 h_v(y)&=f(y+v)+f(y)=c_v+[Ev,y],\\
 c_v&=f(v)+f(0).
 \end{split}
 \label{sm:conjugation}
\end{equation}
The second line follows directly from polarization:
$f(y+v)+f(y)=f(v)+f(0)+v^TB_fy$, with
$v^TB_fy=[Ev,y]$. In particular, this finite difference is
affine, even when $f$ is quadratic.

Equation~\eqref{sm:affine-character} now gives
$Z_{h_v}=(-1)^{c_v}W_{Ev}\otimes W_{Ev}^*$.
Multiplying its first Pauli by $W_v$ adds their binary labels,
so Eq.~\eqref{sm:conjugation} becomes
\begin{equation}
 Z_f(W_v\otimes I)Z_f
 =\omega_v W_{(I+E)v}\otimes W_{Ev}^*,
 \qquad |\omega_v|=1.
 \label{sm:conjugation-paulis}
\end{equation}
In fact, $[v,Ev]=v^TB_fv=0$, so the two Paulis in the first
copy commute and $\omega_v$ is a real sign. This sign is
immaterial below. Because $f$ vanishes on the full
Bell support, $Z_f|\psi\rangle^{\otimes2}=|\psi\rangle^{\otimes2}$.
Taking the expectation of Eq.~\eqref{sm:conjugation-paulis}, and
then its absolute value, yields
\begin{equation}
 R_\psi(v)=R_\psi(Ev)R_\psi((I+E)v).
 \label{sm:rigidity-identity}
\end{equation}
Here $W_a^*=(-1)^{q(a)}W_a$, so complex conjugation does not
change the absolute value of the Pauli expectation. The two-copy
expectation factorizes because the measured state is
$|\psi\rangle\otimes|\psi\rangle$.

We next show that this identity forces $E^2=E$ on every label
with nonzero Pauli expectation. Define
\begin{equation}
 \Delta=E^2+E,\qquad
 T=\{v\in V:R_\psi(v)>0\}.
 \label{sm:idempotence-defect}
\end{equation}
Since addition and subtraction agree over $\F_2$, idempotence is
equivalent to $\Delta=0$. Suppose that $\Delta v\ne0$ for some
$v\in T$, and choose such a label maximizing $R_\psi(v)$.
This maximum exists because the label space is finite.
Neither $Ev$ nor $(I+E)v$ can be zero. Indeed, $Ev=0$ gives
$E^2v=0$, whereas $(I+E)v=0$ gives $Ev=v$ and then $E^2v=v$;
both cases would imply $\Delta v=0$.

Equation~\eqref{sm:rigidity-identity} implies that both Pauli
expectations on its right-hand side are positive. They are also
strictly below one: a Hermitian Pauli has expectation of magnitude
one on a pure state only when that state is its eigenvector, and
the residual state has no nonidentity Pauli stabilizer. Consequently,
\begin{equation}
 R_\psi(Ev)>R_\psi(v),\qquad
 R_\psi((I+E)v)>R_\psi(v).
 \label{sm:strict-correlation-increase}
\end{equation}
At least one of these two labels still violates idempotence,
because linearity gives
$\Delta(Ev)+\Delta((I+E)v)=\Delta v\ne0$.
It belongs to $T$ and has larger expectation than the chosen
maximizer, a contradiction. Thus $\Delta$ vanishes on $T$.

To extend this conclusion to every label, we prove that $T$ spans
$V$. Write the Pauli expansion of the rank-one state as
\begin{equation}
 \rho=|\psi\rangle\langle\psi|
 =2^{-\nu}\sum_{v\in T}\langle W_v\rangle_\psi W_v.
 \label{sm:pauli-support-expansion}
\end{equation}
If $\Span T$ were a proper subspace, nondegeneracy of the
symplectic form would give a nonzero
$a\in(\Span T)^\perp$. Then $[a,v]=0$ for every $v\in T$,
so $W_a$ commutes with every term in
Eq.~\eqref{sm:pauli-support-expansion}, and hence with $\rho$.
Commutation with a rank-one projector preserves its
one-dimensional image, implying
$W_a|\psi\rangle=\pm|\psi\rangle$. This is a forbidden
nonidentity Pauli stabilizer. Therefore $\Span T=V$.
Since the linear map $\Delta$ vanishes on a spanning set,
$\Delta=0$ on $V$, as claimed.
\end{proof}

\subsection{The projector identifies a product subsystem}

An idempotent linear map separates a vector space into its image
and kernel. Self-adjointness gives the additional property needed
here: these two spaces describe commuting sets of Pauli operators.
For $E^2=E$ with $E^TJ=JE$, every $v\in V$ has the unique
decomposition
\begin{equation}
 v=Ev+(I+E)v,\qquad
 V=\operatorname{im}E\oplus\ker E,
 \qquad
 \ker E=(\operatorname{im}E)^\perp.
 \label{sm:symplectic-projector-split}
\end{equation}
For example, if $u=Ex$ and $w\in\ker E$, then
$[u,w]=[Ex,w]=[x,Ew]=0$. Conversely, if $w$ is orthogonal
to $\operatorname{im}E$, then $[x,Ew]=0$ for all $x$, forcing
$Ew=0$. The restrictions of the symplectic form to the image
and kernel are nondegenerate: a vector in either space that is
orthogonal to that whole space is orthogonal to both spaces,
and must be zero.

Both spaces therefore have even dimension and admit symplectic
bases. Choose one within each space and concatenate them. The
resulting change of Pauli coordinates is implemented by a Clifford,
and sends $E$ to $P_A$ for a subsystem of
$|A|=\rank(E)/2$ qubits. This identifies a candidate subsystem;
we must still prove that its reduced state is pure. The following
bound does so and also provides the quantitative statement needed
for approximate learning.

\begin{lemma}[Purity bound for swaps with Pauli operators]
\label{sm:twisted-swap}
Let $\rho$ be pure on $A\bar A$. For any Pauli $W_a$,
\begin{equation}
 \left|\Tr[(\rho\otimes\rho)\mathrm{SWAP}_A
       (W_a\otimes W_a^*)]\right|\le\Tr\rho_A^2.
 \label{sm:twisted-bound}
\end{equation}
If $E_f^2=E_f$ and
$\Pr_{p_\psi}[f(\xi)=1]\le\tau\le1/2$, the subsystem associated
with $\operatorname{im}E_f$ consequently has purity at least
$1-2\tau$ in an adapted Clifford frame.
\end{lemma}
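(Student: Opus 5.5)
The plan is to reduce the twisted-swap expectation to a Hilbert--Schmidt inner product of two partial traces on $A$, bound it by Cauchy--Schwarz, and control each norm through the Schmidt decomposition. The consequence will then follow by rewriting $f$, in the adapted Clifford frame, as a partial swap times a Pauli.

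\emph{The inequality.} Since $\Tr$ is cyclic, I first move $W_a\otimes W_a^*$ to the front:
\[
\Tr[(\rho\otimes\rho)\mathrm{SWAP}_A(W_a\otimes W_a^*)]=\Tr[(W_a\rho\otimes W_a^*\rho)\,\mathrm{SWAP}_A].
\]
For arbitrary operators $M,N$ on $A\bar A$, tracing out both copies of $\bar A$ gives the standard identity $\Tr[(M\otimes N)\mathrm{SWAP}_A]=\Tr_A(M_AN_A)$, where $M_A=\Tr_{\bar A}M$ and $N_A=\Tr_{\bar A}N$. So the quantity equals $\Tr_A(M_AN_A)$ with $M=W_a\rho$ and $N=W_a^*\rho$. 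Cauchy--Schwarz in the Hilbert--Schmidt inner product then bounds its modulus by $\|M_A\|_2\|N_A\|_2$.

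It remains to show $\|M_A\|_2^2\le\Tr\rho_A^2$, and the same argument covers $N$.
\begin{itemize}
\item Write $W_a=P\otimes Q$ with $P$ and $Q$ unitary Paulis on $A$ and $\bar A$. Then $M_A=P\,\Tr_{\bar A}[(I\otimes Q)\rho]$, and left multiplication by the unitary $P$ does not change the norm.
\item Take the Schmidt decomposition $|\psi\rangle=\sum_i\sqrt{\lambda_i}|i\rangle|\tilde i\rangle$. This gives $\Tr_{\bar A}[(I\otimes Q)\rho]=\sum_{ij}\sqrt{\lambda_i\lambda_j}\,\langle\tilde j|Q|\tilde i\rangle\,|i\rangle\langle j|$.
\item Hence $\|M_A\|_2^2=\sum_{ij}\lambda_i\lambda_jD_{ij}$ with $D_{ij}=|\langle\tilde j|Q|\tilde i\rangle|^2$.
\end{itemize}
This is the step I expect to be the main obstacle. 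Because $Q$ is unitary, $D$ is doubly substochastic when restricted to the Schmidt support: its rows and columns sum to at most one. Such a matrix has operator norm at most one, for example by Schur's test. Therefore $\lambda^TD\lambda\le\|\lambda\|^2=\sum_i\lambda_i^2=\Tr\rho_A^2$. For $N$, the same computation applies with $Q^*$, which is also unitary.

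\emph{The consequence.} Suppose $E_f^2=E_f$. As discussed after Eq.~\eqref{sm:symplectic-projector-split}, a Clifford $C$ sends $E_f$ to $P_A$. After relabeling by $g_C$, Eq.~\eqref{sm:polarization-covariance} shows that $f'=f\circ g_C^{-1}$ has $E_{f'}=P_A=E_{q_A}$. The violation probability is unchanged by this relabeling. Equal polarizations mean $f'+q_A$ is affine, say $c+[a,v]$. Both $Z_{q_A}$ and the affine observable are Bell-diagonal, so they commute. Using Eq.~\eqref{sm:affine-character}, this gives $Z_{f'}=\pm\,\mathrm{SWAP}_A(W_a\otimes W_a^*)$. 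Its expectation on the decoded state is $1-2\Pr[f'(\xi)=1]\ge1-2\tau\ge0$. By the first part, this expectation is at most $\Tr\rho_A^2$. Hence $\Tr\rho_A^2\ge1-2\tau$ for the subsystem associated with $\operatorname{im}E_f$.
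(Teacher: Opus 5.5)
Your proposal is correct and follows the paper's proof. Both arguments reduce the twisted-swap expectation to a Hilbert--Schmidt pairing of partial traces on $A$, bound it by Cauchy--Schwarz together with the Schmidt quadratic form $\sum_{ij}\lambda_i\lambda_jD_{ij}\le\sum_i\lambda_i^2$, and obtain the consequence from $f=q_A+c+[a,\cdot]$ in the adapted frame. The differences are cosmetic: the paper uses $W_a^*=\pm W_a$ to reduce to a single factor $|\Tr[(R_AM)^2]|\le\|M\|_2^2$ and pads the Schmidt basis so that $D$ is doubly stochastic with an AM--GM step, whereas you bound the two partial traces separately and use Bessel's inequality with Schur's test.
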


\begin{proof}
Write $W_a=R_A\otimes R_{\bar A}$ and define
$M=\Tr_{\bar A}[(I\otimes R_{\bar A})\rho]$.
The matrix $M$ is Hermitian: partial trace is cyclic for operators
acting only on the traced subsystem, so
$M^\dagger=\Tr_{\bar A}[\rho(I\otimes R_{\bar A})]=M$.
Since $W_a^*=(-1)^{q(a)}W_a$, the complex conjugation contributes
only an overall sign. Taking the partial trace over both copies
of $\bar A$ and using
$\Tr[(X\otimes Y)\mathrm{SWAP}_A]=\Tr(XY)$ gives
\begin{equation}
 \begin{split}
 \left|\Tr[(\rho\otimes\rho)\mathrm{SWAP}_A
       (W_a\otimes W_a^*)]\right|
 &=\left|\Tr[(R_AM)^2]\right|\\
 &\le\Tr[(R_AM)^\dagger(R_AM)]
 =\|M\|_2^2.
 \end{split}
 \label{sm:twisted-trace-bound}
\end{equation}
The inequality is the Hilbert--Schmidt Cauchy--Schwarz inequality,
and the last equality uses $R_A^\dagger R_A=I$.

To bound $\|M\|_2^2$, write a Schmidt decomposition
$|\psi\rangle=\sum_i\sqrt{\lambda_i}|i\rangle_A|i\rangle_{\bar A}$,
where $\lambda_i\ge0$ and $\sum_i\lambda_i=1$. Its partial trace gives
\begin{equation}
 M=\sum_{i,j}\sqrt{\lambda_i\lambda_j}
 \langle j|R_{\bar A}|i\rangle\,|i\rangle\langle j|.
 \label{sm:schmidt-matrix}
\end{equation}
Thus, with $D_{ij}=|\langle j|R_{\bar A}|i\rangle|^2$,
\begin{equation}
 \begin{split}
 \|M\|_2^2
 &=\sum_{i,j}\lambda_i\lambda_jD_{ij}\\
 &\le\frac12\sum_{i,j}(\lambda_i^2+\lambda_j^2)D_{ij}
 =\sum_i\lambda_i^2=\Tr\rho_A^2.
 \end{split}
 \label{sm:schmidt-bound}
\end{equation}
For the last equality, extend the Schmidt vectors to a complete
basis of $\bar A$ and pad the list of $\lambda_i$ with zeros.
Unitarity of $R_{\bar A}$ then ensures that every row and column
of $D$ sums to one. This step does not assume that the Schmidt
support is invariant under $R_{\bar A}$. Equations~\eqref{sm:twisted-trace-bound}
and \eqref{sm:schmidt-bound} prove Eq.~\eqref{sm:twisted-bound}.

Now suppose $E_f$ is idempotent. Use
Eq.~\eqref{sm:symplectic-projector-split} to choose a Clifford
frame in which $E_f=P_A$, and relabel $f$ accordingly.
The polynomials $f$ and $q_A$ have the same polarization,
so their difference is affine. Every linear functional on $V$
has the form $[a,\cdot]$ for a unique $a$, giving
$f=q_A+c+[a,\cdot]$. Since the corresponding observables are
diagonal in the same Bell basis,
\begin{equation}
 Z_f=(-1)^c\mathrm{SWAP}_A(W_a\otimes W_a^*).
 \label{sm:twisted-operator}
\end{equation}
The violation probability is preserved by Clifford relabeling,
and therefore
\begin{equation}
 1-2\tau\le\langle Z_f\rangle
 =1-2\Pr[f(\xi)=1]
 \le|\langle Z_f\rangle|\le\Tr\rho_A^2.
 \label{sm:projector-purity}
\end{equation}
For an exact equation, $\tau=0$, so $\Tr\rho_A^2=1$.
The Schmidt coefficients then obey
$\sum_i\lambda_i=\sum_i\lambda_i^2=1$, which is possible
only when exactly one coefficient is nonzero. Hence
$|\psi\rangle=|\psi_A\rangle\otimes|\psi_{\bar A}\rangle$.
This proves that the candidate subsystem is a product subsystem.
\end{proof}

For an exact equation in this frame, the affine term also vanishes.
Indeed, the product cut makes $q_A$ an exact Bell equation, so
$f+q_A$ is an exact affine equation. Stabilizer freedom forces
$f=q_A$. Thus an exact quadratic equation specifies the partial
swap itself after a suitable Clifford change of coordinates.

\subsection{Compatible cuts and the irreducible decomposition}

\begin{theorem}[Boolean algebra of exact cuts]
\label{sm:boolean}
For a stabilizer-free pure state, the space of polarizations
$\cL_\psi=\{E_f:f\in K_\psi\}$ is a Boolean algebra of symplectic
projectors. Its minimal nonzero projectors $P_1,\ldots,P_r$ are
unique up to permutation and satisfy
\begin{equation}
 \cL_\psi=\bigoplus_{\alpha=1}^r\F_2P_\alpha,
 \quad P_\alpha P_\beta=\delta_{\alpha\beta}P_\alpha,
 \quad\sum_\alpha P_\alpha=I.
 \label{sm:boolean-algebra}
\end{equation}
The projector images give the unique symplectic decomposition into
irreducible blocks. In the corresponding Clifford coordinates,
\begin{equation}
 K_\psi=\Span\{q_{A_1},\ldots,q_{A_r}\},\qquad
 \dim K_\psi=r,\qquad
 \CPW(\psi)=\max_\alpha\frac{\rank P_\alpha}{2}.
 \label{sm:exact-width}
\end{equation}
\end{theorem}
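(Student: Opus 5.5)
We build the Boolean-algebra structure from Lemma~\ref{sm:rigidity} and Lemma~\ref{sm:twisted-swap}, using that $f\mapsto E_f$ is linear and, by stabilizer freedom, injective on $K_\psi$. The plan is to establish the algebra relations, then reduce to a single frame where the product structure becomes visible.

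\textbf{Step 1: closure and commutativity.} $K_\psi$ is a linear space, so for $f,g\in K_\psi$ the sum $f+g\in K_\psi$, and $E_{f+g}=E_f+E_g$ is idempotent by Lemma~\ref{sm:rigidity}. Expanding $(E_f+E_g)^2=E_f+E_g$ gives $E_fE_g+E_gE_f=0$, hence $E_fE_g=E_gE_f$ over $\F_2$.

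Closure under products is the first nontrivial point. By Lemma~\ref{sm:twisted-swap}, each $E_f$ is a product cut $A_f$, and the discussion after that lemma shows $f=q_{A_f}$ in an adapted frame. Since $E_f$ and $E_g$ are commuting symplectically self-adjoint projectors, $V$ splits into the four joint eigenspaces $\operatorname{im}E_fE_g$, $\operatorname{im}E_f(I+E_g)$, and so on. These are mutually orthogonal and nondegenerate, since each is the image of a self-adjoint projector. Choosing symplectic bases in each gives one Clifford frame in which $E_f=P_A$ and $E_g=P_B$ for \emph{physical} qubit subsets $A,B$.

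In that frame, $|\psi\rangle$ is a product across $A|\bar A$ and across $B|\bar B$. A pure state that is a product across two physical cuts is a product across $A\cap B$; this follows from Schmidt decompositions or subadditivity of purity. So $q_{A\cap B}$ vanishes on the Bell support and lies in $K_\psi$, with polarization $E_fE_g$. The same argument shows $I=E_q\in\cL_\psi$ via the full swap. Hence $\cL_\psi$ is a commutative algebra of projectors containing $I$, that is, a Boolean algebra.

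\textbf{Step 2: atoms.} A finite Boolean algebra of commuting projectors is generated by its atoms. Take $P_\alpha$ to be the nonzero products $\prod_f E_f^{\epsilon_f}$, with $\epsilon_f\in\{1,I+\cdot\}$, over a basis of $\cL_\psi$. These are orthogonal, sum to $I$, and span $\cL_\psi$, which gives Eq.~\eqref{sm:boolean-algebra}. Uniqueness up to permutation holds because the atoms of a Boolean algebra are intrinsic.

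Passing to a common frame in which every $P_\alpha=P_{A_\alpha}$, injectivity of $f\mapsto E_f$ gives $K_\psi=\Span\{q_{A_\alpha}\}$ with $\dim K_\psi=r$. Each block $A_\alpha$ is a product cut by Lemma~\ref{sm:twisted-swap}, so the state factorizes over the blocks.

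\textbf{Step 3: irreducibility and the width formula.} I expect this to be the main obstacle. Irreducibility holds because any further Clifford splitting of block $\alpha$ would give a swap equation whose projector lies strictly below $P_\alpha$, contradicting minimality.

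For $\CPW(\psi)=\max_\alpha\rank P_\alpha/2$, the upper bound is immediate. For the lower bound, take any representation $C\bigotimes_\beta|\phi_\beta\rangle$ with stabilizers already removed, as in Lemma~\ref{sm:quotient}, which preserves block sizes. Each block $\beta$ gives an exact equation $q_{A_\beta}\circ g_C^{-1}\in K_\psi$, so its projector $Q_\beta\in\cL_\psi$ is a sum of atoms. Every atom therefore lies inside some $Q_\beta$, giving $\rank P_\alpha\le 2|A_\beta|\le 2k$.

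The care needed here is in handling representations whose blocks retain local stabilizers. I would first apply the block-local removal from the proof of Lemma~\ref{sm:quotient}, then compare via the quotient $S_0^\perp/S_0$.
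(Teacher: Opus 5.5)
Your proposal is correct and follows essentially the same route as the paper. The common steps are: commutativity from idempotence of $E_f+E_g$; closure under products via a joint Clifford frame and purity of the $A\cap B$ marginal; atoms as the nonzero joint products over a basis; and the width lower bound, obtained by writing each block projector of an arbitrary decomposition as a sum of atoms. The local-stabilizer issue you flag in Step~3 does not arise. A local Pauli stabilizer of any block $|\phi_\beta\rangle$ would conjugate under $C$ to a nonidentity Pauli stabilizer of the stabilizer-free $|\psi\rangle$, and in any case Lemma~\ref{sm:rigidity} needs only global stabilizer freedom, so no comparison through $S_0^\perp/S_0$ is required.
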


Here a nonzero projector is minimal if no other nonzero projector in
the algebra has an image strictly contained in its image. Such a
projector describes a block that no further exact cut can divide.

\begin{proof}
We first establish that any two exact cuts commute. The map
$f\mapsto E_f$ is linear, so $\cL_\psi$ is a vector space.
For $E,F\in\cL_\psi$, Lemma~\ref{sm:rigidity} applies to
$E$, $F$, and $E+F$. Expanding the last idempotence relation gives
\begin{equation}
 (E+F)^2=E+F+EF+FE=E+F,
 \qquad EF=FE.
 \label{sm:commuting-cuts}
\end{equation}
The full-swap equation $q\in K_\psi$ gives $I\in\cL_\psi$.
Thus the complement $I+E$ of each exact cut is also in the space.

Commutation allows simultaneous splitting into physical subsystems.
For two projectors, define their joint eigenspaces by
\begin{equation}
 V_{ab}=\{v\in V:Ev=av,\ Fv=bv\},\qquad a,b\in\F_2.
 \label{sm:joint-cut-spaces}
\end{equation}
Because $E$ and $F$ commute, $F$ preserves both the image and
kernel of $E$. Splitting these two spaces once more with $F$
gives $V=V_{11}\oplus V_{10}\oplus V_{01}\oplus V_{00}$.
Distinct summands are symplectically orthogonal: if two vectors
have different eigenvalues under either self-adjoint projector,
Eq.~\eqref{sm:polarization-self-adjoint} forces their pairing
to vanish. Each nonzero summand is consequently nondegenerate.
Choosing symplectic bases identifies them with four disjoint
qubit subsystems in a single Clifford frame.

We next prove closure under multiplication; commutation alone
does not yet establish this property. Lemma~\ref{sm:twisted-swap}
shows that the state factors across the $E$ cut, so in the joint
frame it can be written as
\begin{equation}
 |\psi\rangle
 =|\alpha\rangle_{11,10}\otimes|\beta\rangle_{01,00}.
 \label{sm:two-cut-product}
\end{equation}
The $F$ cut is also a product cut, so its $F=1$ reduction is pure.
Using Eq.~\eqref{sm:two-cut-product}, this reduction is
$\rho_{11}\otimes\rho_{01}$, whose purity is
$\Tr\rho_{11}^2\Tr\rho_{01}^2=1$.
Each purity is at most one, so both equal one. Purity of
$|\alpha\rangle$ then makes it a product on $11|10$, and
purity of $|\beta\rangle$ makes it a product on $01|00$.
The state therefore factors over all four joint subsystems
(with zero-dimensional label spaces omitted).

In particular, $V_{11}$ is a product subsystem, and $EF$ is
its projector. Its partial-swap polynomial is an exact Bell
equation in the joint frame. Pulling that equation back through
the affine Clifford relabeling gives an element of $K_\psi$
with polarization $EF$. Thus $EF\in\cL_\psi$. We have proved
that $\cL_\psi$ contains $I$, is closed under addition and
multiplication, and consists of commuting idempotents, which
is the Boolean-algebra statement.

To describe all its blocks explicitly, choose a vector-space
basis $E_1,\ldots,E_\ell$ of $\cL_\psi$. For every binary
string $\boldsymbol\epsilon\in\F_2^\ell$, form
\begin{equation}
 P_{\boldsymbol\epsilon}
 =\prod_{j=1}^{\ell}
 \begin{cases}
 E_j,&\epsilon_j=1,\\
 I+E_j,&\epsilon_j=0.
 \end{cases}
 \label{sm:joint-projectors}
\end{equation}
These products belong to the algebra by the closure just proved.
They project onto the joint eigenspaces with eigenvalues
$\boldsymbol\epsilon$. Two different products multiply to zero,
because at an index where their strings differ the product
contains $E_j(I+E_j)=0$. Expanding
$\prod_j(E_j+(I+E_j))=I$ shows that they sum to the identity.
Discard the zero products and denote the rest by
$P_1,\ldots,P_r$.

Every $E\in\cL_\psi$ is a linear combination of the $E_j$,
so on each of these joint eigenspaces it acts as either zero
or the identity. Consequently every $E$ is a sum of a subset
of the $P_\alpha$. No nonzero element of the algebra can
split the image of a $P_\alpha$, proving minimality.
Conversely, a sum containing two or more $P_\alpha$ is not
minimal. This characterizes the minimal projectors without
reference to the chosen basis $E_j$, and proves their
uniqueness up to permutation. Their mutually disjoint images
also make the projectors linearly independent, giving
Eq.~\eqref{sm:boolean-algebra}.

Each $P_\alpha$ defines an exact product cut. Successively
factoring these disjoint cuts gives
$|\psi\rangle=\bigotimes_\alpha|\phi_\alpha\rangle_{A_\alpha}$
in their common Clifford frame. A block cannot have a further
Clifford product decomposition: its cut projector would extend
by zero on the other blocks to a nonzero element of
$\cL_\psi$ strictly below $P_\alpha$, contradicting minimality.
The blocks are therefore irreducible.

Conversely, any Clifford product decomposition supplies
partial-swap equations and hence projectors in this same
algebra. Since each such projector is a sum of the
$P_\alpha$, every alternative decomposition groups these
irreducible symplectic subspaces. Its largest block must
therefore be at least as large as the largest irreducible block,
while the irreducible decomposition itself attains that size.
An image of dimension $\rank P_\alpha$ describes
$\rank P_\alpha/2$ qubits, proving the width formula in
Eq.~\eqref{sm:exact-width}.

Finally, each block-swap equation $q_{A_\alpha}$ lies in
$K_\psi$ and has polarization $P_\alpha$. These projectors
span $\cL_\psi$, and polarization is injective on $K_\psi$.
Thus the block-swap equations span all of $K_\psi$ and are
linearly independent. This proves the remaining identities
in Eq.~\eqref{sm:exact-width}.
\end{proof}

Together, Lemmas~\ref{sm:rigidity} and \ref{sm:twisted-swap}
and Theorem~\ref{sm:boolean} prove Theorem~\ref{thm:structure}
of the main text. The unique objects are the symplectic subspaces
of the irreducible blocks. A choice of Pauli coordinates within
each subsystem, the corresponding Clifford phases, and the
order of the blocks remain free.

For the original $n$-qubit state, Lemma~\ref{sm:quotient}
separates $n-\nu$ stabilized qubits from the $\nu$-qubit residual
state. Restoring the stabilized qubits as single-qubit blocks
gives width $\max\{1,\CPW(\psi_0)\}$ when any stabilized
qubits are present. This is optimal because stabilizer removal
preserves any block-size promise. If $\nu=0$, the state is fully
stabilized and has width one; if $\nu=n$, its width is the
residual width derived above.

\section{Exact reconstruction from finitely many samples}
\label{sm:exact-learning}

We now determine how many Bell samples suffice to recover the exact
equations and construct the decoder. The relevant gap is the smallest
probability that a false equation is violated. For $j=1,2$, let
$K_\psi^{(j)}$ be the exact degree-at-most-$j$ kernel and define
\begin{equation}
 \gamma_j(\psi)=\min_{f\in\RM(j,2n)\setminus K_\psi^{(j)}}
                    \Pr_{p_\psi}[f(\xi)=1].
 \label{sm:gap}
\end{equation}
Affine Clifford relabeling preserves both gaps, and
$\gamma_1\ge\gamma_2$. The affine gap has a direct Pauli
interpretation: for a stabilizer-free residual state on $\nu>0$ qubits,
\begin{equation}
 \gamma_1(\psi)=\frac{1-\max_{v\ne0}R_\psi(v)^2}{2}.
 \label{sm:affine-gap}
\end{equation}
Thus it measures separation from an additional Pauli stabilizer.
The gap definitions impose no lower bound on individual nonzero Bell
probabilities.

To find the equations, we evaluate all multilinear monomials of degree
at most $j$ on the measured samples. There are $D_1(n)=2n+1$ affine
features and $D_2(n)=1+2n+\binom{2n}{2}$ quadratic features. The
binary nullspace of the resulting matrix is the empirical kernel
$K_M^{(j)}$. Every true equation belongs to this kernel. If
$\gamma_j(\psi)\ge\gamma>0$, a fixed false equation holds on all $M$
samples with probability at most $e^{-M\gamma}$. A union bound over
the $2^{D_j(n)}$ Boolean functions gives
\begin{equation}
 \Pr[K_M^{(j)}\ne K_\psi^{(j)}]
 \le 2^{D_j(n)}e^{-M\gamma}.
 \label{sm:exact-kernel-prob}
\end{equation}
The kernel is therefore exact with probability at least $1-\delta$
once $M\ge[D_j(n)\ln2+\ln(1/\delta)]/\gamma$. Each Bell sample
uses two copies of the state.

The decoder is chosen from the data, so reusing the same samples after
decoding requires a guarantee that holds uniformly over possible
choices. Exact kernel recovery supplies this guarantee.

\begin{lemma}[Preservation under adaptive decoding]
\label{sm:inheritance}
Suppose the original degree-$j$ empirical kernel is exact. After any
learned Clifford relabeling and projection onto an exact pure block,
the degree-$j$ empirical kernel of the projected samples is also exact.
The block state's  degree-$j$ gap is at least that of the original state.
\end{lemma}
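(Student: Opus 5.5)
The plan is to show that every degree-$j$ Boolean function on the projected block labels lifts to a degree-$j$ function on the original labels, with the same violation probability. Both claims then reduce to exactness of the original kernel.

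Fix the learned Clifford $D$ and the exact pure block $A$, so that $D^\dagger|\psi\rangle=|\phi_A\rangle\otimes|\phi_{\bar A}\rangle$. By Eq.~\eqref{sm:affine-action}, relabeling by $g_{D^\dagger}$ is an invertible affine map. It therefore sends the Bell distribution of $\psi$ to that of $D^\dagger|\psi\rangle$ and maps the samples bijectively. Because the decoded state is a product, its Bell distribution factorizes, $p(v_A,v_{\bar A})=p_{\phi_A}(v_A)\,p_{\phi_{\bar A}}(v_{\bar A})$. The projected samples, obtained by deleting the $\bar A$ coordinates, are then exactly the images of the original samples under the affine map $\pi_A\circ g_{D^\dagger}$, and they are distributed as $p_{\phi_A}$.

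For any $h\in\RM(j,2|A|)$, I define the lift $\tilde h=h\circ\pi_A\circ g_{D^\dagger}$. Composition with an affine map does not raise degree, so $\tilde h\in\RM(j,2n)$. Since $\pi_A\circ g_{D^\dagger}$ pushes $p_\psi$ forward to $p_{\phi_A}$, we have $\Pr_{p_\psi}[\tilde h=1]=\Pr_{p_{\phi_A}}[h=1]$. This single identity gives both claims.

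\emph{Kernel exactness.} Suppose $h$ vanishes on all projected samples. Then $\tilde h$ vanishes on all original samples, so $\tilde h\in K_M^{(j)}=K_\psi^{(j)}$. Hence $\tilde h$ has zero violation probability, and so does $h$, which gives $h\in K_{\phi_A}^{(j)}$. Conversely, every exact equation trivially survives in the empirical kernel.

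\emph{Gap inheritance.} If $h\notin K_{\phi_A}^{(j)}$, its violation probability is positive, so $\tilde h\notin K_\psi^{(j)}$. Therefore $\Pr[h=1]=\Pr[\tilde h=1]\ge\gamma_j(\psi)$, and minimizing over $h$ gives $\gamma_j(\phi_A)\ge\gamma_j(\psi)$.

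The main obstacle is the adaptivity: $D$ is computed from the same samples that are then reused. This is handled because the argument is deterministic given the event $K_M^{(j)}=K_\psi^{(j)}$. The lifting holds simultaneously for every Clifford $D$ and every exact pure block, so no fresh union bound or independence is needed.

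A second point to check is that deleting coordinates is legitimate, which requires the block to be exactly pure in the frame $D$. That purity is what produces the factorization of the Bell distribution, and it is the hypothesis of the lemma.
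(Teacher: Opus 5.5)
Your proposal is correct and follows the paper's proof. The paper likewise extends each projected polynomial to be independent of the deleted coordinates and pulls it back through the affine Clifford relabeling, which is exactly your lift $\tilde h=h\circ\pi_A\circ g_{D^\dagger}$; this preserves degree and both the empirical and true violation probabilities, so kernel exactness and the gap bound follow, and adaptivity is handled because the original exactness event holds for all polynomials simultaneously.
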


\begin{proof}
We compare each projected equation with an equation on the original
samples. Extend its polynomial to be independent of the deleted
coordinates, and pull it back through the affine Clifford action.
This preserves its degree and both its empirical and true
violation probabilities. Exactness of the original kernel then implies
exactness of the projected kernel. Every positive violation probability
is at least the original gap, which also proves the gap bound. Since
the original event holds for all polynomials simultaneously, the
decoder and retained subsystem can be chosen using these same samples.
\end{proof}

\subsection{Explicit reconstruction from Bell samples}
\label{sm:algorithm}

We now give a concrete implementation. All matrix operations in Algorithm~S1 are over $\F_2$. Write $\mathcal N(A)$ for a matrix whose columns form a basis of $\ker A$, obtained by Gaussian elimination, and $J_d=\left(\begin{smallmatrix}0&I_d\\I_d&0\end{smallmatrix}\right)$. The algorithm receives only the Bell samples; it does not require the preparation circuit or the block states. Its structural guarantee holds when the empirical degree-two kernel is exact, as ensured with high probability by Eq.~\eqref{sm:exact-kernel-prob} under the support-gap assumption.

\medskip
\noindent\textbf{Algorithm S1: Recover the Clifford frame and product cuts.}

\noindent\textbf{Input:} $M$ Bell samples $\xi^{(t)}=(x^{(t)},z^{(t)})\in\F_2^{2n}$, with all $x$ coordinates followed by all $z$ coordinates.\par
\noindent\textbf{Output:} a decoder $D^\dagger$, the nullity $\nu$, and decoded blocks $A_\alpha$, together with $n-\nu$ stabilized single-qubit blocks. If an algebraic consistency check fails, return \emph{insufficient data}.\par

\begin{enumerate}[leftmargin=*,itemsep=5pt]
 \item \textbf{Recover candidate stabilizer labels.}
 Form $\Phi_1$ with row $t$ equal to $(1,(\xi^{(t)})^T)$. For each column $(c_i,\ell_i)$ of $\mathcal N(\Phi_1)$, set $a_i=J_n\ell_i$. Let $m=\dim\ker\Phi_1$ and $\nu=n-m$. Check $m\le n$, $[a_i,a_j]=0$ for all $i,j$, and $c_i=q(a_i)$ for every $i$. The labels $a_i$ span the candidate Pauli-stabilizer space.

 \item \textbf{Remove stabilized coordinates.}
 By symplectic elimination, construct $G_{\rm stab}$ mapping this span to the labels of $Z_1,\ldots,Z_m$; explicit gate operations are given below. For each sample, compute $g_{G_{\rm stab}}(\xi^{(t)})$ using Eq.~\eqref{sm:affine-action} and retain only the $x,z$ coordinates of qubits $m+1,\ldots,n$, obtaining $v^{(t)}\in\F_2^{2\nu}$. If $\nu=0$, return $D^\dagger=G_{\rm stab}$ and $n$ stabilized single-qubit blocks.

 \item \textbf{Find the residual quadratic equations.}
 Form the feature matrix $\Phi_2$ with rows
 \[
  \left(1,v_1^{(t)},\ldots,v_{2\nu}^{(t)},
        (v_i^{(t)}v_j^{(t)})_{1\le i<j\le2\nu}\right).
 \]
 A column $c$ of $\mathcal N(\Phi_2)$ specifies
 $f_c(v)=c_0+\sum_i c_i v_i+\sum_{i<j}c_{ij}v_iv_j$.
 Set $(B_c)_{ij}=(B_c)_{ji}=c_{ij}$ and $(B_c)_{ii}=0$, and compute $E_c=J_\nu B_c$. Denote the resulting maps by $E_1,\ldots,E_\ell$.

 \item \textbf{Check the candidate projector algebra.}
 If $\ell>\nu$, return \emph{insufficient data} before checking products of the maps. Otherwise, check that the $E_i$ are linearly independent, that $E_i^2=E_i$, that $E_iE_j=E_jE_i$, and that $I_{2\nu}\in\Span\{E_i\}$. These checks hold for the exact kernel by Theorem~\ref{sm:boolean}, which also gives $\ell\le\nu$. Self-adjointness follows directly from $B_c^T=B_c$.

 \item \textbf{Split into joint eigenspaces.}
 Initialize a list of column-basis matrices $\mathcal V=\{I_{2\nu}\}$. For $i=1,\ldots,\ell$, replace every $V\in\mathcal V$ by the nonzero-column matrices
 \[
   V\,\mathcal N\!\left((E_i+\lambda I_{2\nu})V\right),
   \qquad\lambda\in\{0,1\}.
 \]
 After the last map, write $\mathcal V=\{V_1,\ldots,V_r\}$. Check $r=\ell$ and that each $V_\alpha$ has even column dimension and nondegenerate restricted form $V_\alpha^TJ_\nu V_\alpha$. Set $b_\alpha=\rank V_\alpha/2$.

 \item \textbf{Choose Pauli coordinates within each block.}
 Apply symplectic Gram--Schmidt to the columns of each $V_\alpha$: choose a nonzero vector $a$, find $b$ in the current space with $[a,b]=1$, retain this pair, and replace every remaining vector $u$ by
 \[
   u+[u,b]a+[u,a]b.
 \]
 Keep a basis of the resulting space and repeat until it is zero. Concatenate the pairs block by block, yielding $a_1,\ldots,a_\nu,b_1,\ldots,b_\nu$, and set $F=(a_1\cdots a_\nu\ b_1\cdots b_\nu)$. Then $F^TJ_\nu F=J_\nu$.

 \item \textbf{Synthesize the residual Clifford decoder.}
 Construct $G_{\rm res}$ with unsigned symplectic action $F^{-1}$, so
 \[
  G_{\rm res}W_{a_j}G_{\rm res}^\dagger=\pm X_j,
  \qquad
  G_{\rm res}W_{b_j}G_{\rm res}^\dagger=\pm Z_j.
 \]
 A gate synthesis is described below. The signs can be chosen through any valid Clifford lift; local Pauli differences do not change the cuts. Assign consecutive sets $\widetilde A_\alpha$ of $b_\alpha$ residual qubits in the same order as the pairs.

 \item \textbf{Return the decoded blocks.}
 Set $D^\dagger=(I_m\otimes G_{\rm res})G_{\rm stab}$ and $A_\alpha=\{m+j:j\in\widetilde A_\alpha\}$. Return these blocks and the stabilized singletons $\{1\},\ldots,\{m\}$. On exact recovery,
 \[
  D^\dagger|\psi\rangle
   =|b\rangle_{1\ldots m}\otimes\bigotimes_{\alpha=1}^r|\phi_\alpha\rangle_{A_\alpha},
  \qquad
  \CPW(\psi)=\max\{1,b_1,\ldots,b_r\}.
 \]
 Optionally, the value of $b\in\{0,1\}^{m}$ can be read out by measuring the first $m$ qubits of one fresh decoded copy. 
 Finally, learning the block states requires the subsequent tomography described in Sec.~\ref{sm:tomography-validation}.
\end{enumerate}

\noindent\emph{Affine relabeling in step 2.---}
The samples must be transformed with the affine, rather than just the symplectic, Clifford action. Gate by gate, $H_i$ swaps $x_i,z_i$; $S_i$ and $S_i^\dagger$ give $z_i\leftarrow z_i+x_i+1$; and $\mathrm{CNOT}_{i\to j}$ gives $x_j\leftarrow x_j+x_i$, $z_i\leftarrow z_i+z_j$. SWAP exchanges both pairs of coordinates. For ordinary Pauli labels the same rules apply without the constant $1$ in the phase-gate update. These rules allow every sample to be relabeled directly from the gate list of $G_{\rm stab}$.

\noindent\emph{Gate synthesis in steps 2 and 7.---}
To isolate the stabilizers, process their independent labels in order. At step $i$, add previously processed $Z_j$ generators to remove any $Z_j$ components with $j<i$. Commutation already excludes $X_j$ components there. On the remaining sites, use $H$ to turn $X$ into $Z$ and $S$ followed by $H$ to turn $Y$ into $Z$, ignoring signs. Swap a site in the support to site $i$. For each other $Z_j$ in the support, apply $\mathrm{CNOT}_{j\to i}$ to cancel it. Update every generator after each gate. This maps the current generator to $Z_i$ while preserving the previously isolated generators.

For the residual frame, update every column of $F$ after each gate and process its symplectic pairs in order. To map the current $a_i$ to $X_i$, turn each supported $Z$ into $X$ with $H$ and each $Y$ into $X$ with $S$. Swap one supported site to $i$, then clear the other $X_j$ components with $\mathrm{CNOT}_{i\to j}$. Its partner $b_i$ now has $z_i=1$. On sites $j>i$, turn its nonidentity components into $Z_j$ and cancel them with $\mathrm{CNOT}_{j\to i}$, which preserves $X_i$. If $b_i$ still has $x_i=1$, apply $H_i$, then $S_i$, then $H_i$ to map it to $Z_i$ while preserving $X_i$. Previously processed pairs remain fixed because all later pairs are symplectically orthogonal to them. The accumulated gate list maps $F$ to the standard frame and implements $G_{\rm res}$.

Passing the algebraic checks does not by itself certify exact recovery: too few samples can produce additional equations that form a consistent but incorrect cut algebra. The uniform bound in Eq.~\eqref{sm:exact-kernel-prob} supplies the statistical guarantee. Duplicate samples may be discarded before elimination, since they do not change either kernel.

There are at most $\nu$ nonzero joint eigenspaces because each has even
dimension. The algorithm uses finite-field linear algebra and
Clifford-tableau operations throughout. It requires neither a search
over cuts or Clifford circuits nor recursive reconstruction of new
kernels. Straightforward elimination, for example, uses
$O(Mn^4+n^6)$ binary operations. The classical processing is therefore
polynomial in $n$ and $M$.

For the sample complexity, apply Eq.~\eqref{sm:exact-kernel-prob} with
$j=2$ to the original $n$-qubit samples. Exact recovery of this kernel
also gives the exact affine kernel. Lemma~\ref{sm:inheritance} then
guarantees that the residual quadratic kernel is exact, and
Theorem~\ref{sm:boolean} identifies the irreducible blocks and the Clifford-product
width. This proves the structure-recovery claim of
Theorem~\ref{thm:exact}, using
\begin{equation}
 M=O\!\left(\frac{n^2+\log(1/\delta)}{\gamma_2(\psi)}\right).
 \label{sm:exact-samples}
\end{equation}

\section{Tomography after frame recovery}
\label{sm:tomography-validation}

An exact decomposition gives pure block states, allowing us to improve the
tomography cost by using efficient pure-state estimators for every individual block. 
The gate-efficient pure-state estimator of Ref.~\cite[Appendix~C, Theorem~C.1]{HaahKothariODonnellTang2023} learns a $d$-dimensional pure state to infidelity $\epsilon_{\mathrm{inf}}$ using $O(d/\epsilon_{\mathrm{inf}})$ copies at constant success probability. Take the single-run success probability above $2/3$ and repeat $O(\log(1/\zeta))$ times. Select a candidate with more than half of the estimates within twice the single-run target trace-distance radius. With probability at least $1-\zeta$, a majority of the candidates are accurate; such a candidate exists and is within three target radii of the true state. Counting nearby estimates therefore amplifies confidence with only a constant-factor change in the target accuracy. Thus
$O((d/\epsilon_{\mathrm{inf}})\log(1/\zeta))$ copies suffice. Distances between candidate pure states can be evaluated from their classical descriptions, so this amplification also has polynomial classical cost.
To distribute the error across blocks, assign block $\alpha$
the infidelity budget
$\epsilon_{\mathrm{inf},\alpha}
=\varepsilon^2d_\alpha/\sum_\beta d_\beta$
and failure probability $\delta/r$. 
The global fidelity is the product
of the local fidelities, so these budgets ensure
\begin{equation}
 1-\prod_\alpha F_\alpha\le\sum_\alpha(1-F_\alpha)
 \le\varepsilon^2.
\end{equation}
We run all block tomography procedures in parallel on the decoded global copies.
The dimension-weighted error budgets give the same sufficient copy count
for every block, yielding
\begin{equation}
 m_{\rm tomo}^{\rm exact}
 =O\!\left(\frac{\sum_\alpha d_\alpha}{\varepsilon^2}
                  \log\frac r\delta\right)
 \le O\!\left(r2^k\varepsilon^{-2}\log\frac r\delta\right).
 \label{sm:exact-tomography-cost}
\end{equation}

\section{Approximate learning without a quadratic-gap assumption}
\label{sm:approx}

Exact reconstruction requires observing the rare Bell samples that violate
false quadratic equations. For approximate learning, these samples need not
be resolved if the additional empirical equations describe only weakly
entangled cuts. Here, we prove that this is the case for states with small
Clifford-product width and a nonzero affine gap, establishing
Theorem~\ref{thm:approx}. The affine gap separates the residual state from
acquiring another Pauli stabilizer. Throughout this section, we use normalized
trace distance, $\Dtr(\rho,\sigma)=\tfrac12\|\rho-\sigma\|_1$; its square
for two pure states is one minus their squared overlap.

\subsection{Accuracy of the empirical equations and stability of the projectors}

We first analyze the $\nu>0$ residual qubits after stabilizer removal, and denote their state by $|\psi\rangle\in\CTP_{\nu,k}$. This state has no nonidentity
Pauli stabilizer. To quantify its separation from acquiring a stabilizer,
write $R(v)=|\langle\psi|W_v|\psi\rangle|$ and assume
\begin{equation}
 \gamma_\star\le\gamma_1(\psi)=\frac{1-\max_{v\ne0}R(v)^2}{2},
 \qquad 0<\gamma_\star\le\tfrac12.
 \label{sm:approx-affine-margin}
\end{equation}
We draw $M$ independent Bell samples and compute the nullspace $K_M
\subseteq\RM(2,2\nu)$ of their quadratic feature matrix. This retains every
exact support equation, but may also retain equations violated by labels
that were not observed. The following bound controls all such equations
simultaneously. As in the main text, let $\tau>0$ be a uniform tolerance and write $\tau_f:=\Pr_{\xi\sim p_\psi}[f(\xi)=1]$ for an individual violation probability, suppressing the state argument. We write $D_2(\nu)=1+2\nu+\binom{2\nu}{2}$ for the number of
quadratic features.

\begin{lemma}[Uniform bound on the violation probability]
\label{sm:approx-soundness}
If
\begin{equation}
 M\ge\frac{D_2(\nu)\ln2+\ln(1/\delta)}{\tau},
 \label{sm:approx-sample-condition}
\end{equation}
then, with probability at least $1-\delta$, every $f\in K_M$ has violation
probability $\tau_f\le\tau$. For every sample realization,
$K_M$ also contains all exact quadratic equations.
\end{lemma}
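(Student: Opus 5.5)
The proof is a standard union bound over the finitely many Boolean quadratics, mirroring the derivation of Eq.~\eqref{sm:exact-kernel-prob}. The second claim is deterministic and comes first. If $f$ is an exact quadratic equation, it vanishes on every label in $\supp p_\psi$. Every sample $\xi^{(t)}$ lies in this support with probability one. The feature row of each sample multiplied by the coefficient vector of $f$ returns $f(\xi^{(t)})=0$, so $f\in K_M$ for every realization of the samples.

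For the probabilistic claim, call $f\in\RM(2,2\nu)$ \emph{bad} if $\tau_f>\tau$. A bad $f$ belongs to $K_M$ only if all $M$ independent samples satisfy $f(\xi^{(t)})=0$. This event has probability $(1-\tau_f)^M<(1-\tau)^M\le e^{-M\tau}$. The space $\RM(2,2\nu)$ has exactly $2^{D_2(\nu)}$ elements, since its monomial basis has $D_2(\nu)$ elements. A union bound over all bad functions therefore gives
\begin{equation}
 \Pr[\exists f\in K_M:\ \tau_f>\tau]\le 2^{D_2(\nu)}e^{-M\tau}.
\end{equation}
This is at most $\delta$ exactly when $M\tau\ge D_2(\nu)\ln2+\ln(1/\delta)$, which is condition~\eqref{sm:approx-sample-condition}. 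On the complementary event, every $f\in K_M$ has $\tau_f\le\tau$.

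The only point needing care is that the bound must hold uniformly rather than for a fixed $f$, because $K_M$ is chosen from the data. The union bound over the whole finite space handles this, since it controls all candidates simultaneously. If the residual samples are produced by a data-dependent decoder, we would instead apply the same bound on the original $n$-qubit samples with $D_2(n)$ features and pull equations back through the affine relabeling, as in Lemma~\ref{sm:inheritance}. This pullback preserves degree and violation probability, so the conclusion transfers to the residual state with at most a change of constants.
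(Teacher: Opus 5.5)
Your proof is correct and follows the paper's argument: exact equations lie in $K_M$ deterministically, a fixed equation with $\tau_f>\tau$ survives all $M$ samples with probability at most $e^{-M\tau}$, and a union bound over the $2^{D_2(\nu)}$ quadratics gives the failure bound $\delta$. Your closing remark on data-dependent decoding goes beyond what the lemma needs; the paper avoids the issue by using independent sample batches for the affine and quadratic stages, although your pull-back via Lemma~\ref{sm:inheritance} would also work once the affine kernel is recovered exactly.
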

\begin{proof}
A fixed equation with $\tau_f>\tau$ holds on all $M$ samples with
probability at most $e^{-M\tau}$. Taking a union bound over the
$2^{D_2(\nu)}$ quadratic Boolean functions gives the stated probability.
An exact equation holds on every label with nonzero probability and therefore always
belongs to $K_M$.
\end{proof}

Small violation probability alone does not identify a product cut. We next
show that the affine gap and the small-block promise force the polarization
of every sufficiently accurate equation to be an exact projector.

\begin{lemma}[Projectors from approximate equations]
\label{sm:approx-idempotence}
Under Eq.~\eqref{sm:approx-affine-margin}, a quadratic $f$ with
\begin{equation}
 \tau_f<\frac{\gamma_\star^3}{8\,2^k}
 \label{sm:approx-idempotence-threshold}
\end{equation}
has idempotent polarization: $E_f^2=E_f$.
\end{lemma}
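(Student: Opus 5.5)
The plan is to make the exact argument of Lemma~\ref{sm:rigidity} robust, using two quantitative ingredients. The first is an approximate version of the rigidity identity Eq.~\eqref{sm:rigidity-identity}. The second is a lower bound, derived from the block-size promise, on the Pauli correlation of some label that violates idempotence. Comparing the two should give a contradiction below the threshold Eq.~\eqref{sm:approx-idempotence-threshold}.

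\emph{Step 1 (approximate rigidity).} Since $\langle Z_f\rangle=1-2\tau_f$ and $Z_f^2=I$, we have $\|Z_f|\psi\rangle^{\otimes2}-|\psi\rangle^{\otimes2}\|^2=2-2\langle Z_f\rangle=4\tau_f$. Writing $|\Psi\rangle=|\psi\rangle^{\otimes 2}$ and $O=W_v\otimes I$, which has operator norm one, two applications of the triangle inequality give
\[
\bigl|\langle\Psi|Z_fOZ_f|\Psi\rangle-\langle\Psi|O|\Psi\rangle\bigr|\le4\sqrt{\tau_f}.
\]
Inserting Eq.~\eqref{sm:conjugation-paulis}, which still holds for any quadratic $f$, and taking absolute values yields
\[
\bigl|R(v)-R(Ev)R((I+E)v)\bigr|\le4\sqrt{\tau_f}\qquad\text{for every }v,
\]
where $E=E_f$.

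\emph{Step 2 (large correlations span $V$).} Fix $t=\sqrt{2\gamma_\star/2^k}$, or a constant multiple of it. I will show that $T_t=\{v:R(v)\ge t\}$ spans $V$. Suppose not, and pick a nonzero $a\in(\Span T_t)^\perp$. Work in a hidden product frame $\psi=C\bigotimes_\alpha\phi_\alpha$; the relabeling by $F_C$ preserves both $R$ and $[\cdot,\cdot]$. Choose a block $\alpha$ on which the component $a_\alpha$ is nonzero.

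On that block, use the pure-state characteristic distribution $P_\alpha(u)=2^{-k_\alpha}R_\alpha(u)^2$, which sums to one and satisfies $\E_{P_\alpha}(-1)^{[a_\alpha,u]}=R_\alpha(a_\alpha)^2$. Block-local labels $u$ have $R(u)=R_\alpha(u)$. Since $a_\alpha$ is itself a nonzero label, Eq.~\eqref{sm:approx-affine-margin} gives $R_\alpha(a_\alpha)^2\le1-2\gamma_\star$. Hence $P_\alpha$ places mass at least $\gamma_\star$ on block labels with $[a_\alpha,u]=1$.

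There are at most $4^{k}$ such labels, so one of them has $2^{-k}R(u)^2\ge\gamma_\star 4^{-k}$, that is, $R(u)^2\ge\gamma_\star/2^k$. This label is not orthogonal to $a$, so it lies outside $T_t$ only if $t$ exceeds this value. Getting the constant to match $\sqrt{2\gamma_\star/2^k}$ may need a slightly finer count, for instance excluding the identity label or using the exact $4^{k_\alpha}-1$. This is bookkeeping only. The contradiction shows $\Span T_t=V$. Therefore, if $\Delta=E^2+E\ne0$, some $v\in T_t$ has $\Delta v\ne0$.

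\emph{Step 3 (maximality contradiction).} Among violators in $T_t$, choose $v$ maximizing $R(v)$. As in Lemma~\ref{sm:rigidity}, both $Ev$ and $(I+E)v$ are nonzero, so each has $R\le\sqrt{1-2\gamma_\star}\le1-\gamma_\star$. Step 1 then gives
\[
R(Ev)\ge\frac{R(v)-4\sqrt{\tau_f}}{1-\gamma_\star},
\]
and likewise for $(I+E)v$. This exceeds $R(v)$ whenever $4\sqrt{\tau_f}<\gamma_\star R(v)$. Because $R(v)\ge t$, it suffices that $\tau_f<\gamma_\star^2t^2/16=\gamma_\star^3/(8\cdot2^k)$, which is the stated threshold. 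Both labels then lie in $T_t$ with strictly larger correlation than $v$. Since $\Delta(Ev)+\Delta((I+E)v)=\Delta v\ne0$, one of them violates idempotence, contradicting the maximality of $v$. Hence $E_f^2=E_f$.

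I expect Step 2 to be the main obstacle. The exact proof only needed the set of positive correlations to span $V$, whereas here a spanning set with a uniform lower bound is required. This is exactly where the block-size promise and the affine gap enter, through the local characteristic distribution.
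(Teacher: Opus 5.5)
Your proof is correct and is essentially the paper's argument. It uses the same approximate rigidity bound, the same block-local Pauli-energy identity with the affine gap giving a correlation of size $\sqrt{2\gamma_\star/2^k}$, and the same maximality contradiction. The paper merely phrases this as an upper bound $R_*\le4\sqrt{\tau_f}/\gamma_\star$ on the largest violator, plus a directly exhibited violator (every $v\in V_i$ with $[v,b]=1$, where $[\cdot,b]$ is pulled back through $\Delta|_{V_i}$), instead of your robust spanning of $T_t$.

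For the exact constant, the factor of two comes from the fact that exactly $4^{k_\alpha}/2$ block labels satisfy $[a_\alpha,u]=1$, because $u\mapsto[a_\alpha,u]$ is a nonzero linear functional. Excluding the identity or counting $4^{k_\alpha}-1$ labels, as you suggest, would not recover it.
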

\begin{proof}
We first relate the violation probability to Pauli correlations. Set
$E=E_f$, $\Delta=E^2+E$, and $|\chi\rangle=|\psi\rangle^{\otimes2}$.
The identity $\|(Z_f-I)|\chi\rangle\|_2=2\sqrt{\tau_f}$ implies that
conjugation by $Z_f$ changes the expectation of any observable with unit operator norm
by at most $4\sqrt{\tau_f}$. For the observable $W_v\otimes I$, which
translates Bell labels by $v$, we have
\begin{equation}
 Z_f(W_v\otimes I)Z_f
 \ \propto\ W_{(I+E)v}\otimes W_{Ev}^{*},
\end{equation}
where $f(y+v)+f(y)$ has linear part $[Ev,y]$, and an affine Bell character
with label $b$ is proportional to $W_b\otimes W_b^*$. Taking absolute values of the
expectations gives the approximate correlation identity
\begin{equation}
 \big|R(v)-R(Ev)R((I+E)v)\big|
 \le4\sqrt{\tau_f}.
 \label{sm:approx-conjugation}
\end{equation}

Suppose that idempotence fails, so $\Delta\ne0$. Choose a label $v$ with
$\Delta v\ne0$ that maximizes $R(v)$, and denote this maximum by $R_*$.
Both $Ev$ and $(I+E)v$ are nonzero, since otherwise $\Delta v=0$.
Their sum is $v$, so at least one also has nonzero $\Delta$-image and
hence expectation at most $R_*$. The affine gap bounds the expectation
of the other by $\sqrt{1-2\gamma_\star}$. Equation~\eqref{sm:approx-conjugation}
therefore gives
\begin{equation}
 R_*\le R_*\sqrt{1-2\gamma_\star}+4\sqrt{\tau_f},
 \qquad R_*\le\frac{4\sqrt{\tau_f}}{\gamma_\star},
 \label{sm:approx-max-descent}
\end{equation}
where $1-\sqrt{1-2\gamma_\star}\ge \gamma_\star$.

The small-block promise supplies a lower bound on the same correlation.
Consider any exact width-$k$ decomposition $V=\bigoplus_i V_i$ of the
state. We use this decomposition only in the proof; the learner does not
need to know it. Since $\Delta\ne0$, its restriction to some $V_i$ is
nonzero. Choose a linear functional on the image of that restriction
with nonzero pullback to $V_i$. By nondegeneracy of the local symplectic
form, the pullback is $[v,b]$ for some $0\ne b\in V_i$. Thus every
$v\in V_i$ with $[v,b]=1$ satisfies $\Delta v\ne0$.
Writing $d_i=2^{|A_i|}\le2^k$ and $R_i(v)=|\Tr(\rho_iW_v)|$, purity of
the exact block gives
\begin{equation}
 \sum_{[v,b]=1}R_i(v)^2
 =\frac{d_i}{2}\bigl[1-R_i(b)^2\bigr]\ge d_i \gamma_\star.
 \label{sm:approx-pauli-energy}
\end{equation}
To derive this identity, expand in Paulis to obtain
$\Tr(\rho_iW_b\rho_iW_b)=d_i^{-1}\sum_v(-1)^{[v,b]}R_i(v)^2$;
the left side equals $R_i(b)^2$ for a pure block, while
$d_i^{-1}\sum_vR_i(v)^2=1$. Subtracting these identities gives
Eq.~\eqref{sm:approx-pauli-energy}. The affine gap applies to $b$
because a Pauli supported on one exact block has the same expectation
in the full state. Since there are $d_i^2/2$ summands, at least one
violating label has a correlation satisfying
\begin{equation}
 R_*\ge\sqrt{\frac{2\gamma_\star}{d_i}}\ge\sqrt{\frac{2\gamma_\star}{2^k}}.
\end{equation}
This lower bound and Eq.~\eqref{sm:approx-max-descent} contradict
Eq.~\eqref{sm:approx-idempotence-threshold}. Hence $\Delta=0$ and
$E_f^2=E_f$.
\end{proof}

We can now recover compatible subsystems directly from the empirical
equations. Condition on the event in Lemma~\ref{sm:approx-soundness} and
choose $\tau<\gamma_\star^3/(8\,2^k)$. Every empirical polarization is idempotent.
Since the polarization space is linear, idempotence of $E+F$ also gives
$EF=FE$. Moreover, $\tau<\gamma_\star$ excludes every nonzero affine equation from
$K_M$, so polarization is injective on this space.

To obtain the subsystems, successively split $V$ into the zero and one
eigenspaces of a basis of these commuting maps. 
We use the eigenspace splitting and Clifford synthesis of
Algorithm~S1, but omit the check $r=\ell$ in step~5.
Unlike the exact cut algebra, the empirical polarization
space need not be closed under multiplication, so the
number of joint eigenspaces can exceed its dimension.
This gives a joint
decomposition $V=\bigoplus_{\alpha=1}^{r}V_\alpha$. Symplectic
self-adjointness makes distinct joint eigenspaces orthogonal and each
nonzero $V_\alpha$ nondegenerate, implying $r\le \nu$. Every true block
projector belongs to the empirical polarization space because its
partial-swap equation is exact. It therefore acts as a scalar on each
joint eigenspace. The true projectors are orthogonal and sum to $I$, so
each $V_\alpha$ lies within one true block. Thus the recovered blocks
have at most $k$ qubits, even when the empirical kernel contains
additional equations. Computing symplectic bases of these subspaces
gives a Clifford decoder by polynomial-time binary linear algebra.

\subsection{From compatible approximate cuts to a product approximation}

The recovered subsystems can split a true block along a weakly entangled
cut, so their reduced states need not be pure. To control the global
approximation error, we first bound the entropies across the cuts defined
by the empirical equations and then combine those bounds for the resulting blocks.

Write $d_K=\dim K_M$ and choose a basis $f_1,\ldots,f_{d_K}$ of $K_M$. In the decoded coordinates,
$E_{f_j}$ projects onto a union $A_j$ of recovered blocks, and $f_j$
differs from $q_{A_j}$ by an affine function. We can therefore apply the
swap inequality~\eqref{sm:twisted-bound} to obtain
\begin{equation}
 1-2\tau\le\langle Z_{f_j}\rangle
 \le|\langle Z_{f_j}\rangle|\le\Tr\rho_{A_j}^2.
 \label{sm:approx-cut-purity}
\end{equation}
For $\tau\le1/4$, this bounds the largest eigenvalue of $\rho_{A_j}$
from below by $1-2\tau$ and hence bounds its entropy in bits by
\begin{equation}
 S(A_j)\le h_2(2\tau)+2\nu\tau=:b_\nu(\tau).
 \label{sm:approx-cut-entropy}
\end{equation}
Here, $h_2$ is the binary entropy. Purity of the global state gives the
same bound for $A_j^c$. The polarization space acts diagonally on the
$r$ joint blocks and polarization is injective, so $d_K\le r\le \nu$.

Each recovered block is the intersection of at most $d_K$ sets chosen from
$A_j,A_j^c$. Strong subadditivity, applied to the disjoint subsystems
$A\setminus B$, $A\cap B$, and $B\setminus A$, gives
\begin{equation}
 S(A)+S(B)\ge S(A\cap B)+S(A\cup B).
\end{equation}
Since $S(A\cup B)\ge0$, we obtain $S(A\cap B)\le S(A)+S(B)$ for any two unions of the blocks. Applying this inequality successively gives
\begin{equation}
 \sum_{\alpha=1}^{r}S(\rho_\alpha)
 \le r d_K\,b_\nu(\tau)\le \nu^2b_\nu(\tau)=:B.
 \label{sm:approx-total-entropy}
\end{equation}
Thus low entropy across the empirical basis cuts controls the total local
entropy of the recovered blocks. The projectors onto the joint eigenspaces need not
themselves belong to the empirical polarization space.

To construct a nearby pure product, let $|\phi_\alpha\rangle$ be an eigenvector of $\rho_\alpha$ with largest eigenvalue $\lambda_\alpha$. The
corresponding local rank-one projectors commute and obey
$I-\prod_\alpha\Pi_\alpha\le\sum_\alpha(I-\Pi_\alpha)$. Combining this
inequality with $1-\lambda_\alpha\le S(\rho_\alpha)$ yields
\begin{equation}
 1-\left|\left\langle D^\dagger\psi\middle|
       \bigotimes_\alpha\phi_\alpha\right\rangle\right|^2
 \le\sum_\alpha(1-\lambda_\alpha)\le B.
 \label{sm:approx-product-overlap}
\end{equation}

We choose a tolerance that controls both idempotence and approximation
error:
\begin{equation}
 \tau=\min\left\{\frac{\gamma_\star^3}{32\,2^k},
                  \frac{\varepsilon^4}{1024\nu^4}\right\}.
 \label{sm:approx-tau}
\end{equation}
For $0<\varepsilon<1$, the bound $h_2(x)\le2\sqrt{x}$ gives
$B\le\varepsilon^2(\sqrt2/16+1/512)<\varepsilon^2/4$.
The recovered frame therefore contains a pure product within trace
distance $\varepsilon/2$ of the input. By
Lemma~\ref{sm:approx-soundness}, a sufficient measurement cost is
\begin{equation}
 M=O\!\left[(\nu^2+\log\delta^{-1})
     \max\left\{\frac{2^k}{\gamma_\star^3},\frac{\nu^4}{\varepsilon^4}\right\}\right]
 \label{sm:approx-structural-cost}
\end{equation}
Bell samples. Computing the empirical kernel and splitting the commuting
projectors both take time polynomial in $M$ and $\nu$. Thus this stage has
no dependence on the quadratic gap.

For an original input on $n$ qubits, recovering its $n-\nu$ exact Pauli stabilizers first requires $O((n+\log\delta^{-1})/\gamma_\star)$ Bell samples. The $\nu$-qubit residual inherits the affine gap and the width bound. We use independent batches for affine and quadratic recovery and divide the failure budget between the stages. Since $\nu\le n$, replacing $\nu$ by $n$ in Eq.~\eqref{sm:approx-structural-cost} bounds the total structural cost, including this preprocessing. Measuring the fixed qubits of a fresh decoded copy determines the stabilizer signs. If $\nu=0$, this completes the learning problem.

\subsection{Tomography of the approximate block states}
\label{sm:approx-tomography}

The frame recovery proves that a nearby pure product exists. To learn
that product, we must estimate its block states from the generally mixed
decoded marginals. Pure-state tomography guarantees do not apply directly
to these inputs. Instead, we estimate each marginal and extract an eigenvector with largest eigenvalue, which gives an explicit polynomial bound.

Write $d_\alpha=2^{|A_\alpha|}$, $d_{\max}\le2^k$, and
$\eta=\varepsilon^2/(8\nu)$. We estimate every local nonidentity Pauli mean
to additive accuracy $\eta/\sqrt{d_\alpha}$. The resulting Hermitian
Pauli-expansion estimate $\widetilde\rho_\alpha$ satisfies
\begin{equation}
 \|\widetilde\rho_\alpha-\rho_\alpha\|_\infty
 \le\|\widetilde\rho_\alpha-\rho_\alpha\|_2\le\eta,
 \label{sm:approx-local-estimate}
\end{equation}
where $\|\cdot\|_2$ denotes the Hilbert--Schmidt norm. Orthogonality of
the $d_\alpha^2$ Pauli matrices gives the last inequality. The raw
estimate need not be positive for this bound to hold.

We choose an eigenvector $|\widehat\phi_\alpha\rangle$ of the estimate with largest eigenvalue
as the learned pure block state. By the variational principle and
Eq.~\eqref{sm:approx-local-estimate}, its overlap with the true marginal
obeys
\begin{equation}
 \langle\widehat\phi_\alpha|\rho_\alpha|
           \widehat\phi_\alpha\rangle\ge\lambda_\alpha-2\eta.
\end{equation}
Applying the commuting-projector bound from
Eq.~\eqref{sm:approx-product-overlap} to these learned block states gives
infidelity at most $B+2r\eta<\varepsilon^2/2$. Thus the computed state
$D\bigotimes_\alpha|\widehat\phi_\alpha\rangle$ belongs to
$\CTP_{\nu,k}$ and approximates the input to trace distance at most
$\varepsilon$.

To determine the copy cost, apply Hoeffding's  inequality and a union
bound to all local Pauli means. This gives
\begin{equation}
 m_{\rm tomo}^{\rm approx}
 =O\!\left(\nu^2 2^{3k}\varepsilon^{-4}
                 \log\frac{\nu2^k}{\delta}\right).
 \label{sm:approx-tomography-cost}
\end{equation}
Each Pauli mean requires $O(d_\alpha\eta^{-2}
\log(\nu  d_{\max}^2/\delta))$ repetitions. We schedule at most
$d_{\max}^2$ settings, measuring across all blocks in parallel. One decoded
global copy supplies a copy of every marginal. The union bound does not
require independent outcomes across different blocks of that copy.
Constructing the measurement schedule and diagonalizing the estimates
take time polynomial in $\nu,2^k$ and the requested precision. Dividing
the failure budget among affine recovery, the empirical kernel, and
tomography completes the proof of Theorem~\ref{thm:approx}. Reattaching the $n-\nu$ fixed qubits and undoing the stabilizer decoder gives an $n$-qubit hypothesis in $\CTP_{n,k}$ with the same trace-distance error; its total cost is bounded by replacing $\nu$ with $n$ in the residual costs.

We have therefore obtained an approximate learner whose cost is
independent of $\gamma_2$, while retaining the affine-gap assumption.
The original input is exactly pure and belongs to $\CTP_{n,k}$. Removing the
dependence on $\gamma_1$, or allowing arbitrary noisy inputs near this
class, requires a further stability argument.

The argument for approximate equations also gives a direct interpretation of a
small quadratic gap: under the same promises, it certifies proximity to
an additional product cut.

\begin{corollary}[Small quadratic gap certifies an approximate cut]
\label{sm:approx-small-gap-cut}
Under Eq.~\eqref{sm:approx-affine-margin}, a state with
$\gamma_2(\psi)<\gamma_\star^3/(8\,2^k)$ has a nontrivial Clifford-frame bipartition
with reduced purity at least $1-2\gamma_2(\psi)$. Across that bipartition,
it is within trace distance $\sqrt{2\gamma_2(\psi)}$ of a product state.
\end{corollary}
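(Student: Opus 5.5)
The plan is to apply Lemma~\ref{sm:approx-idempotence} and Lemma~\ref{sm:twisted-swap} to a single near-minimizing equation. By definition of $\gamma_2$ as a minimum over the finite set $\RM(2,2\nu)\setminus K_\psi$, there is a quadratic $f\notin K_\psi$ with $\tau_f=\gamma_2(\psi)>0$. The hypothesis $\gamma_2<\gamma_\star^3/(8\,2^k)$ is exactly the threshold of Eq.~\eqref{sm:approx-idempotence-threshold}, so $E_f^2=E_f$. By Eq.~\eqref{sm:symplectic-projector-split} we pick a Clifford frame with $E_f=P_A$. Lemma~\ref{sm:twisted-swap} then gives $\Tr\rho_A^2\ge1-2\gamma_2$; this needs only $\tau_f\le1/2$, which holds since $\gamma_2<\gamma_\star^3/8\le1/64$.

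The main obstacle is showing that the bipartition is nontrivial, i.e.\ $A\neq\emptyset$ and $A\neq[\nu]$. If $E_f=0$, then $f$ is affine. Since $f\notin K_\psi$, it is a nonexact affine equation, so $\tau_f\ge\gamma_1\ge\gamma_\star$, contradicting $\tau_f<\gamma_\star^3/8<\gamma_\star$. If $E_f=I$, then $f+q$ is affine, and $q\in K_\psi$ makes $\Pr[(f+q)=1]=\tau_f$. Again $f+q$ is either a nonexact affine equation, giving the same contradiction, or exact, which forces $f+q=0$ by stabilizer-freeness and hence $f=q\in K_\psi$, a contradiction. So $0<\rank E_f<2\nu$, and $A$ is a proper nonempty subsystem.

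The trace-distance statement follows from the Schmidt decomposition across $A|\bar A$. With largest Schmidt coefficient $\lambda_1$, we have $\lambda_1\ge\sum_i\lambda_i^2=\Tr\rho_A^2\ge1-2\gamma_2$. The product state $|1\rangle_A|\tilde1\rangle_{\bar A}$ has squared overlap $\lambda_1$ with $|\psi\rangle$. The pure-state trace distance is therefore $\sqrt{1-\lambda_1}\le\sqrt{2\gamma_2}$.

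Finally, we should remark that a minimizing $f$ may add an affine twist to $q_A$, i.e.\ $f=q_A+c+[a,\cdot]$. This is harmless, because the twisted swap bound in Eq.~\eqref{sm:twisted-bound} already accounts for the twist.
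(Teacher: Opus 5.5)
Your proposal is correct and follows the paper's proof step for step. You choose a quadratic attaining $\gamma_2$, get $E_f^2=E_f$ from Lemma~\ref{sm:approx-idempotence}, rule out $E_f\in\{0,I\}$ via the affine gap (since $f$ or $f+q$ would then be a non-exact affine equation violated with probability $\tau_f<\gamma_\star$), and finish with the twisted swap bound and the largest Schmidt coefficient, while also spelling out details the paper leaves implicit, such as the check $\tau_f\le 1/2$ and the exact case of $f+q$.
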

\begin{proof}
Choose a quadratic with violation probability $\gamma_2>0$ and apply
Lemma~\ref{sm:approx-idempotence}. Its polarization is an idempotent
different from $0$ and $I$. Indeed, these two cases would make the
quadratic affine or affine plus the full-swap equation, giving
nonzero violation probability at least $\gamma_\star$. The swap bound then
gives a nontrivial cut with the stated purity. Its largest squared Schmidt
coefficient is at least that purity, which bounds the trace distance to
the corresponding product state.
\end{proof}

\section{Explicit examples and process learning}
\label{sm:applications}

\subsection{Exact gap for products of qubits}
\label{sm:qubitgaps}

We first determine the quadratic gap for Clifford-scrambled products of qubits. This gives explicit families for which the cost of exact learning can be evaluated. A pure qubit with Bloch vector $(r_x,r_y,r_z)$ has Bell probabilities
\begin{equation}
 p_I=\frac{1-r_y^2}{2},\qquad p_X=\frac{1-r_z^2}{2},\qquad
 p_Z=\frac{1-r_x^2}{2},\qquad p_Y=0.
 \label{sm:qubitbell}
\end{equation}
For a nonstabilizer qubit, all three allowed Bell labels have nonzero probability. Restricted to these labels, $1,x,z$ form a basis of the Boolean functions. We can therefore select any one label with an affine indicator.

\begin{proposition}[Qubit-product gap]
\label{sm:qubitgap}
Let $\psi=C\bigotimes_{i=1}^n\phi_i$, where $C\in\Cliff_n$ and each $\phi_i$ is a nonstabilizer qubit. Denote the smallest nonzero Bell probability of $\phi_i$ by $\eta_i$, and order these probabilities as $\eta_{(1)}\le\eta_{(2)}\le\cdots$. The quadratic gap is
\begin{equation}
 \gamma_2(\psi)=
 \begin{cases}\eta_1,&n=1,\\ \eta_{(1)}\eta_{(2)},&n\ge2.\end{cases}
 \label{sm:qubitgapformula}
\end{equation}
\end{proposition}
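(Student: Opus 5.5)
The plan is to work in the decoded frame and reduce everything to the product support. Affine Clifford relabeling $g_C$ preserves degree and violation probabilities (Sec.~\ref{sm:bell}), so $\gamma_2(\psi)=\gamma_2(\bigotimes_i\phi_i)$ and I may take $C=I$. The Bell distribution then factorizes, $p_\psi(v)=\prod_i p_{\phi_i}(v_i)$. By Eq.~\eqref{sm:qubitbell} and the nonstabilizer assumption, its support is $S=\prod_i S_i$, where $S_i=\{(0,0),(1,0),(0,1)\}$ carries three nonzero probabilities with minimum $\eta_i$. A quadratic $f$ lies outside $K_\psi$ exactly when $f|_S\ne0$, so $\gamma_2$ is the minimum of $\Pr[f=1]$ over quadratics that are nonzero somewhere on $S$. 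On $S$ we have $x_iz_i=0$. Hence every such restriction is a sum of local affine terms plus bilinear terms coupling two different qubits.

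\emph{Upper bound.} As noted after Eq.~\eqref{sm:qubitbell}, for each qubit there is an affine $\delta_i(x_i,z_i)\in\{1,x_i,z_i,\dots\}$ that equals the indicator of its least likely label on $S_i$. For $n=1$, take $f=\delta_1$, which gives $\Pr=\eta_1$. For $n\ge2$, pick qubits $i\ne j$ realizing $\eta_{(1)},\eta_{(2)}$ and set $f=\delta_i\delta_j$. This $f$ is quadratic and nonzero on $S$. By independence, $\Pr[f=1]=\eta_{(1)}\eta_{(2)}$.

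\emph{Lower bound.} I would first prove an affine lemma: a nonzero affine function $h=c+\sum_j h_j(v_j)$ on $S$ has $\Pr[h=1]\ge\min_j\eta_j$. If every $h_j$ is constant on $S_j$, then $h$ is a nonzero constant and the probability is $1$. Otherwise, choose $j$ with $h_j$ nonconstant and condition on all other qubits. For each fixed value of the rest, some label of qubit $j$ makes $h=1$, which gives conditional probability at least $\eta_j$. The case $n=1$ of the proposition follows the same way: a function nonzero on the three-point set $S_1$ has probability at least $\eta_1$.

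For $n\ge2$, I would induct on $n$ by peeling off qubit $1$. On $S$, write $f=a(w)+x_1b(w)+z_1c(w)$, where $w$ denotes the other qubits, $a$ is quadratic, and $b,c$ are affine; no $x_1z_1$ term survives. At the three values of qubit $1$, $f$ equals $a$, $a+b$, or $a+c$. There are two cases.
\begin{itemize}[leftmargin=*]
\item If $b$ and $c$ both vanish on the rest support, then $a$ is nonzero there. By induction, $\Pr[f=1]=\Pr[a=1]$ is at least the pair bound over the remaining qubits, or at least a single $\eta_j$ when one qubit remains. Both are at least $\eta_{(1)}\eta_{(2)}$.
\item Otherwise, at every $w$ with $(b(w),c(w))\ne(0,0)$, one of $a,a+b,a+c$ equals $1$, so the conditional probability is at least $\eta_1$. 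Since $(b,c)\ne0$ means $b\ne0$ or $c\ne0$, the affine lemma applied to that function gives $\Pr_w[(b,c)\ne0]\ge\min_{j\ne1}\eta_j$. Therefore $\Pr[f=1]\ge\eta_1\min_{j\ne1}\eta_j\ge\eta_{(1)}\eta_{(2)}$.
\end{itemize}

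The main obstacle is this inductive conditioning step. One must check that the decomposition of $f$ on $S$ really has affine coefficients $b,c$, which holds because $f$ is quadratic and $x_1z_1$ vanishes on $S_1$. One must also make the three-case comparison rigorous: $a$, $a+b$, $a+c$ cannot all vanish unless $b=c=0$. Everything else is the affine lemma together with the upper-bound construction.
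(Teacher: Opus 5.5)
Your proof is correct and follows essentially the paper's route. The upper bound is the same product of affine indicators of the least likely local labels. For the lower bound you use the same conditioning argument: expand in the local basis $\{1,x_1,z_1\}$, note that the coefficient of a nonconstant element is a nonzero function of lower degree, and multiply the resulting bound by the minimal local probability $\eta_1$. The only difference is packaging. The paper states a general degree-$\ell$ lemma for independent finite-support variables and inducts on degree, always choosing a variable that actually occurs, so your Case 1 never arises; you instead fix qubit $1$ and induct on the number of qubits.
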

\begin{proof}
The lower bound follows from a property of independent variables with finite support. For each variable, consider the smallest positive probability of any of its values. A nonzero function of degree at most $\ell$ in these variables is nonzero with probability at least the product of the $\ell$ smallest of these probabilities, using all variables when there are fewer than $\ell$. Here, degree counts the distinct variables with nonconstant basis elements in a product-basis monomial.

To prove this property, choose a variable in a monomial of highest degree and the nonconstant basis element appearing on that variable in the chosen monomial. Its coefficient is a nonzero function of degree at most $\ell-1$ on the other variables. By induction, this coefficient is nonzero with at least the required product probability on the remaining variables. Whenever it is nonzero, the full function is nonzero on at least one possible value of the chosen variable. Multiplying these two probability bounds proves the claim.

A Boolean quadratic has degree at most two in these variables, which gives the lower bound in Eq.~\eqref{sm:qubitgapformula}. We attain the bound by multiplying affine indicators of the least-probable Bell labels on the two qubits with smallest $\eta_i$. For $n=1$, a single indicator suffices. Finally, the affine relabeling induced by a Clifford preserves both the polynomial degree and the violation probability.
\end{proof}

For $|T\rangle=(|0\rangle+e^{i\pi/4}|1\rangle)/\sqrt2$, the three nonzero probabilities are $1/2,1/4,1/4$. Thus $\gamma_2(C|T\rangle^{\otimes n})=1/16$ for every Clifford and $n\ge2$. Independent Haar-random qubits instead give
\begin{equation}
 \Pr[\eta_i\le y]=3(1-\sqrt{1-2y})=3y+O(y^2),\qquad 0\le y\le\tfrac14.
 \label{sm:qubithaarcdf}
\end{equation}
To obtain this distribution, note that each Bloch coordinate is uniform on $[-1,1]$. For $y\le1/4$, the three events $|r_\alpha|\ge\sqrt{1-2y}$ are disjoint except at boundaries of measure zero. Standard order statistics then give
\begin{equation}
 9n^2\gamma_2\ \xrightarrow{\ d\ }\ E_1(E_1+E_2),
 \label{sm:qubithaarlimit}
\end{equation}
where $E_1,E_2$ are independent exponential variables with unit rate. At any fixed ensemble quantile, $O(n^4)$ copies therefore suffice for exact structure recovery. The scaling is set by the two qubits closest to Pauli eigenstates. The cost is controlled by these rare local Bell samples.

\subsection{Learning random T-doped Clifford outputs}
\label{sm:random-t-learning}

We prove Corollary~\ref{cor:random-t-learning} by combining the known Clifford-product reduction for random doped circuits~\cite{FuxEtAl2025,LiuClark2026} with a constant Bell support gap. The structural reduction is prior work; the consequence here is efficient recovery from copies of the output without the preparation circuit. A circuit is sampled once, and all copies supplied to the learner are of that same pure state.

Let $T_1=\operatorname{diag}(1,e^{i\pi/4})$ on qubit one and define
\begin{equation}
 |\psi_j\rangle=T_1C_j|\psi_{j-1}\rangle,
 \qquad |\psi_0\rangle=|0\rangle^{\otimes n},
 \qquad C_j\overset{\rm iid}{\sim}\mathrm{Unif}(\Cliff_n).
 \label{sm:random-t-ensemble}
\end{equation}
An additional final Clifford does not affect the width or support gaps.

\paragraph{Probability of the product normal form.}
Consider the event $\mathcal G_j$ that each of the first $j$ rotations anticommutes with at least one remaining Pauli stabilizer. The disentangling construction of Refs.~\cite{FuxEtAl2025,LiuClark2026} then gives, inductively,
\begin{equation}
 |\psi_j\rangle=D_j\left(|T\rangle^{\otimes j}
 \otimes|0\rangle^{\otimes(n-j)}\right),
 \qquad D_j\in\Cliff_n,
 \label{sm:random-t-normal-form}
\end{equation}
up to global phase. To see the step explicitly, condition on $\mathcal G_j$ and the preceding circuit. The next $T$ gate acts in the decoded frame as $e^{-i\pi P/8}$, with $P=D_j^\dagger C_{j+1}^\dagger Z_1C_{j+1}D_j$. Anticommutation with a remaining stabilizer means that $P$ has $X$ or $Y$ on a qubit $i$ in $|0\rangle$. A local Clifford fixing $|0\rangle$ puts $P=X_i\otimes Q$, with $Q$ a Hermitian Pauli on the other qubits. The controlled-Pauli Clifford
\begin{equation}
 V_Q=|0\rangle\langle0|_i\otimes I
       +|1\rangle\langle1|_i\otimes Q
 \label{sm:random-t-controlled-pauli}
\end{equation}
satisfies, for any state $|\varphi\rangle$ of the other qubits,
\begin{equation}
 V_Qe^{-i\pi(X_i\otimes Q)/8}
       (|0\rangle_i\otimes|\varphi\rangle)
 =\left(\cos\frac\pi8|0\rangle-i\sin\frac\pi8|1\rangle\right)_i
       \otimes|\varphi\rangle.
 \label{sm:random-t-one-step}
\end{equation}
The new qubit is Clifford equivalent to $|T\rangle$, proving the induction.

Conditional on any such history, the unsigned label of $P$ is uniform over the $4^n-1$ nonidentity Paulis. The remaining $n-j$ independent stabilizers span an isotropic subspace of dimension $n-j$, whose symplectic orthogonal has dimension $n+j$. Exactly $2^{n+j}-1$ nonidentity labels commute with all of them. Thus
\begin{equation}
 \Pr(\mathcal G_{j+1}\mid\mathcal G_j)
 =1-p_j^{\rm fail},\qquad p_j^{\rm fail}=\frac{2^{n+j}-1}{4^n-1},
 \label{sm:random-t-step-probability}
\end{equation}
and, for $0\le t\le n$,
\begin{align}
 \Pr(\mathcal G_t)
 &=\prod_{j=0}^{t-1}\frac{4^n-2^{n+j}}{4^n-1}\notag\\
 &\ge1-\frac{2^n(2^t-1)-t}{4^n-1}
 \ge1-2^{t-n}.
 \label{sm:random-t-probability}
\end{align}
This sufficient event deliberately excludes rotations commuting with all remaining stabilizers, even when they preserve product structure. Its complement is therefore not a certificate of width greater than one. The bound tends to one when $n-t\to\infty$; in particular, $t\le n-c\log_2n$ gives exceptional probability at most $n^{-c}$. The condition $t<n$ alone does not give a probability tending to one through this bound. The calculation assumes independent uniform global Cliffords; it makes no corresponding assertion for arbitrary shallow local Clifford layers.

\paragraph{Bell gap and learning cost.}
On $\mathcal G_t$, Clifford invariance reduces the gap calculation to the product in Eq.~\eqref{sm:random-t-normal-form}. A $T$ qubit has nonzero Bell probabilities $1/2,1/4,1/4$, and a $|0\rangle$ qubit has two nonzero probabilities $1/2,1/2$. The finite-support argument in Proposition~\ref{sm:qubitgap} applies also to the latter two-point supports: affine functions span all functions on either local support, and restricting a Boolean quadratic gives a function of degree at most two in the independent single-qubit Bell labels. Every nonzero such function therefore has nonzero probability at least the product of the two smallest local minimum probabilities (or the single minimum if $n=1$). Products of affine indicators attain this bound. For $n\ge2$,
\begin{equation}
 \gamma_2(\psi_t)=
 \begin{cases}
  1/4,&t=0,\\
  1/8,&t=1,\\
  1/16,&2\le t\le n.
 \end{cases}
 \label{sm:random-t-gap}
\end{equation}
For $n=1$, the gaps are $1/2$ at $t=0$ and $1/4$ at $t=1$. In every case $\gamma_2\ge1/16$, so no additional support-gap promise is needed on $\mathcal G_t$.

Run the structural algorithm with a width-one cutoff: return failure if an algebraic consistency check fails or any recovered residual block has more than one qubit. This keeps the resource cost polynomial also on exceptional circuits. Apply Theorem~\ref{thm:exact} with failure budget $\delta/2$. It uses $O(n^2+\log\delta^{-1})$ Bell samples, each consuming two copies, to recover the stabilizer subspace and the $t$ residual single-qubit blocks. Together with the $n-t$ stabilized qubits, these give a Clifford decoder into $n$ single-qubit states.

Continuous-state tomography is unnecessary for this ensemble. By the stabilizer-removal argument in Sec.~\ref{sm:bell} and the uniqueness of the residual symplectic blocks in Theorem~\ref{thm:structure}, the decoded nonstabilizer qubits differ from the factors of $|T\rangle^{\otimes t}$ only by single-qubit Cliffords and a permutation. Their Bloch vectors are the twelve signed permutations of $(1/\sqrt2,1/\sqrt2,0)$. Any two distinct vectors differ in at least one coordinate by at least $1/\sqrt2$. Estimating each coordinate to error less than $1/(4\sqrt2)$ therefore identifies the corresponding vector uniquely. Measuring $X$, $Y$, and $Z$ on all decoded qubits in parallel and applying Hoeffding's inequality and a union bound achieves this accuracy with probability at least $1-\delta/2$, using $O(\log(n/\delta))$ fresh copies. Measuring $Z$ on the decoded stabilized qubits also determines their signs, which can be corrected by Pauli $X$ gates.

Absorb these sign corrections, the identified single-qubit Cliffords, and the permutation into the learned frame. The output is then a known Clifford $\widehat D$ satisfying
\begin{equation}
 |\psi_t\rangle=\widehat D\left(|T\rangle^{\otimes t}
 \otimes|0\rangle^{\otimes(n-t)}\right)
 \label{sm:random-t-exact-preparation}
\end{equation}
up to global phase. The procedure uses $O(n^2+\log\delta^{-1})$ copies in total and polynomial processing. Exact identification is possible because the local states lie in a finite, constantly separated set; it does not assert exact finite-copy tomography of arbitrary continuous states. Both stages succeed conditionally with probability at least $1-\delta$ for every circuit in $\mathcal G_t$. Averaging over the circuit gives
\begin{equation}
 \Pr[\text{successful learning}]
 \ge\Pr(\mathcal G_t)(1-\delta)
 \ge1-2^{t-n}-\delta,
 \label{sm:random-t-learning-success}
\end{equation}
which proves the corollary. Equation~\eqref{sm:random-t-exact-preparation} gives an exact preparation circuit for the output state. It does not identify the original random unitary or its action on other inputs.

\subsection{Exact learning of T-depth-one unitaries}
\label{sm:t-depth-one}

We prove Corollary~\ref{cor:t-depth-one}. The first step learns a Clifford preparation of the Choi state. The additional step is to recover a unitary circuit acting only on the output register. This conversion is essential: a Clifford preparation of a Choi state does not in general separate into input and output circuits.

We assume exact query access to the same unknown unitary $U$ on all $n$ qubits. The promise is that $U$ has the form in Eq.~\eqref{eq:t-depth-one}, with no initialized or discarded ancillas in that representation. The learner may use its own reference registers and known Clifford gates. Every query applies $U$ in the forward direction. All circuit equalities below are understood up to a global phase.

\paragraph{Step 1: recover an exact Clifford preparation of the Choi state.}
Set $N=2n$ and write
\begin{equation}
 |\chi_t\rangle=|T\rangle^{\otimes t}\otimes|0\rangle^{\otimes(N-t)}.
 \label{sm:t-seed}
\end{equation}
Moving the input Clifford to the reference register gives
\begin{equation}
 |J(U)\rangle=(C_{\rm in}^{T}\otimes C_{\rm out})
 \left[\bigotimes_{j=1}^t|J(T)\rangle_{R_jA_j}
 \otimes\bigotimes_{j=t+1}^n|J(I)\rangle_{R_jA_j}\right].
 \label{sm:t-choi-factorization}
\end{equation}
A CNOT from reference to output maps $|J(T)\rangle$ to $|T\rangle|0\rangle$ and $|J(I)\rangle$ to $|+\rangle|0\rangle$. Applying a Hadamard to each remaining $|+\rangle$ therefore exhibits an unknown Clifford $G_0$ such that $|J(U)\rangle=G_0|\chi_t\rangle$. In particular, $w_J(U)=1$ and $\nu(J(U))=t$.

The nonzero Bell probabilities of a $T$ qubit are $1/2,1/4,1/4$, while those of a fixed qubit are $1/2,1/2$. The finite-support argument of Proposition~\ref{sm:qubitgap}, including the fixed qubits as in Eq.~\eqref{sm:random-t-gap}, gives
\begin{equation}
 \gamma_2(J(U))\ge\frac1{16}.
 \label{sm:t-choi-gap}
\end{equation}
Theorem~\ref{thm:exact} thus recovers a Clifford decoder and all single-qubit blocks from $O(n^2+\log\delta^{-1})$ Bell samples. Each sample uses two forward queries. Stabilizer removal identifies $t$ without prior knowledge of the number of $T$ gates.

The uniqueness of the residual symplectic blocks in Theorem~\ref{thm:structure} implies that the decoded nonstabilizer qubits differ from the factors of $|T\rangle^{\otimes t}$ only by single-qubit Cliffords and a permutation. Their Bloch vectors are therefore the twelve signed permutations of $(1/\sqrt2,1/\sqrt2,0)$. Measuring $X,Y,Z$ on all decoded qubits in parallel to constant accuracy identifies these vectors exactly with probability at least $1-\delta/3$, using $O(\log(N/\delta))$ fresh Choi copies. The stabilized qubits can similarly be put in $|0\rangle$ with their measured signs. Absorbing the identified local Cliffords and permutation into the decoder yields a known Clifford $G$ with
\begin{equation}
 |J(U)\rangle=G|\chi_t\rangle.
 \label{sm:t-known-preparation}
\end{equation}
Here exact identification is possible because the local states belong to a finite, constantly separated set. It does not assert exact finite-copy tomography of arbitrary continuous states.

\paragraph{Step 2: move the learned rotation axes to the output register.}
The complete signed Pauli stabilizer group of the Choi state is generated by
\begin{equation}
 S_\ell=GZ_\ell G^\dagger,\quad t<\ell\le N.
 \label{sm:t-signed-stabilizers}
\end{equation}
The remaining operators $L_j=GZ_jG^\dagger$, $1\le j\le t$, specify the rotation axes of the learned magic qubits. We show that each can be multiplied by stabilizers to obtain a signed Pauli acting on the output register alone:
\begin{equation}
 L_j\prod_{\ell=t+1}^{N}S_\ell^{c_{j\ell}}
 =I_R\otimes Q_j,\qquad c_{j\ell}\in\F_2.
 \label{sm:t-output-representatives}
\end{equation}
The following argument establishes existence for the learned frame $G$, which need not equal the preparer's frame $G_0$.

Both $G_0$ and $G$ prepare the same state from $|\chi_t\rangle$. Thus $K=G_0^\dagger G$ preserves $|\chi_t\rangle$ and its complete stabilizer group $\langle Z_{t+1},\ldots,Z_N\rangle$. It induces a Clifford $K_{\rm log}$ on the $t$ logical qubits of the corresponding stabilizer code, and
\begin{equation}
 K_{\rm log}|T\rangle^{\otimes t}
 =e^{i\beta}|T\rangle^{\otimes t}.
 \label{sm:t-logical-symmetry}
\end{equation}
For $t>0$, the largest absolute expectation of a nonidentity Pauli on $|T\rangle^{\otimes t}$ is $1/\sqrt2$. The only labels attaining it are $X_j$ and $Y_j$, $1\le j\le t$: a Pauli with $w$ nonidentity factors from $\{X,Y\}$ has magnitude $2^{-w/2}$, and any $Z$ factor makes its expectation zero. Consequently, $K_{\rm log}$ permutes the set of these $2t$ labels. Within this set, each $X_j$ anticommutes only with $Y_j$. Preservation of commutation therefore permutes the pairs $\{X_j,Y_j\}$ and maps their sums, the $Z_j$ labels, to $Z_{\pi(j)}$. Including phases and fixed-qubit stabilizers gives
\begin{equation}
 KZ_jK^\dagger=\sigma_j Z_{\pi(j)}S_j^{(0)},
 \quad \sigma_j\in\{+1,-1\},\quad
 S_j^{(0)}\in\langle Z_{t+1},\ldots,Z_N\rangle.
 \label{sm:t-axis-permutation}
\end{equation}

For the explicit frame $G_0$ in Eq.~\eqref{sm:t-choi-factorization}, a magic-qubit $Z_j$ has a reference-register representative $C_{\rm in}^T Z_j C_{\rm in}^*$. The same logical operator has the output-register representative $C_{\rm out}Z_jC_{\rm out}^\dagger$, since their product is the stabilizer obtained from $Z_{R_j}Z_{A_j}$ of $|J(T)\rangle$. Equation~\eqref{sm:t-axis-permutation} shows that every axis of the learned frame belongs, up to sign, to one of these same stabilizer cosets. This proves Eq.~\eqref{sm:t-output-representatives}. For $t=0$, no axes need to be recovered.

The coefficients in Eq.~\eqref{sm:t-output-representatives} are computed without knowing $G_0$. Let $\lambda(P)\in\F_2^{2N}$ be the unsigned label of a Pauli $P$, and let $\pi_R$ restrict this label to the reference register. For each $j$, solve
\begin{equation}
 \sum_{\ell=t+1}^{N}c_{j\ell}\,\pi_R\lambda(S_\ell)
 =\pi_R\lambda(L_j)
 \label{sm:t-reference-cancellation}
\end{equation}
by binary elimination, then multiply the signed Paulis to obtain $Q_j$. The solution is unique: a nonidentity stabilizer with zero reference label would stabilize the maximally mixed output marginal of $|J(U)\rangle$, which is impossible. All $Q_j$ commute because the $L_j$ and $S_\ell$ commute. Their labels are independent because the $L_j$ are independent modulo the stabilizer group. Pauli phases must be retained in this step; they determine the signs of the rotation axes.

\paragraph{Step 3: reconstruct the remaining Clifford and the circuit.}
Define the known output-register unitary
\begin{equation}
 \mathcal R=\prod_{j=1}^t e^{-i\pi Q_j/8}.
 \label{sm:t-output-rotations}
\end{equation}
On the Choi state, $I_R\otimes Q_j$ acts as $L_j$, because their product is a stabilizer. These relations remain valid after any of the commuting rotations. Since $e^{+i\pi Z/8}|T\rangle$ is proportional to $|+\rangle$, it follows that
\begin{align}
 |J(\mathcal R^\dagger U)\rangle
 &=(I_R\otimes\mathcal R^\dagger)|J(U)\rangle\notag\\
 &\propto G\left(|+\rangle^{\otimes t}
       \otimes|0\rangle^{\otimes(N-t)}\right).
 \label{sm:t-clifford-choi}
\end{align}
The right-hand side is a known stabilizer state. The left-hand side is the Choi state of a unitary, so it is maximally entangled between reference and output. A maximally entangled stabilizer Choi state represents a Clifford unitary, denoted $C_*$.

For completeness, its circuit is obtained directly from its stabilizer tableau. The projection of the full $2n$-dimensional stabilizer-label space onto the reference labels is bijective: injectivity follows from the maximally mixed output marginal, and both spaces have dimension $2n$. For each reference generator $X_j$ or $Z_j$, solve for its unique signed stabilizer representative. The corresponding output Pauli is respectively $C_*X_jC_*^\dagger$ or $C_*Z_jC_*^\dagger$. These images specify the signed Clifford tableau and hence an efficiently synthesized Clifford circuit. This is a classical calculation; the learner need not physically apply $\mathcal R^\dagger$ to the unknown process.

Finally, extend the independent commuting labels of $Q_1,\ldots,Q_t$ to a symplectic basis and choose signs to synthesize a Clifford $A$ satisfying $AZ_jA^\dagger=Q_j$. Then
\begin{equation}
 U\propto\mathcal R C_*
 \propto A(T^{\otimes t}\otimes I)A^\dagger C_*.
 \label{sm:t-reconstructed-circuit}
\end{equation}
Returning $\widehat C_{\rm out}=A$ and $\widehat C_{\rm in}=A^\dagger C_*$ proves the circuit claim. The reconstructed Clifford layers need not equal the preparer's layers; their combined action with the recovered $T$ layer equals $U$ up to phase.

\paragraph{Resources and scope.}
Allocate the failure budget between exact structural recovery and the local Pauli measurements. Structural recovery uses $O(n^2+\log\delta^{-1})$ Bell samples and polynomial classical time. Local identification adds $O(\log(n/\delta))$ forward queries. Every later operation is binary linear algebra, signed Pauli multiplication, or Clifford synthesis on polynomial-size descriptions. Thus the total number of forward queries is $O(n^2+\log\delta^{-1})$, and all processing and known quantum operations have polynomial cost. No support-gap promise beyond the exact $T$-depth-one form is needed, because Eq.~\eqref{sm:t-choi-gap} is uniform over this class.

\subsection{Choi width and process predictions}
\label{sm:choi}

We next apply the state learner to unitary processes. A forward query on half of a maximally entangled state prepares the Choi state, allowing us to recover structure in an unknown process. For $d=2^n$, let $|\Phi_d\rangle=d^{-1/2}\sum_x|x\rangle_R|x\rangle_A$ and define
\begin{equation}
 |J(U)\rangle=(I_R\otimes U_A)|\Phi_d\rangle,\qquad
 \rho_J(U)=|J(U)\rangle\langle J(U)|,\qquad w_J(U)=\CPW(J(U)).
 \label{sm:choidefinition}
\end{equation}
One forward use of $U$ on register $A$, while retaining the reference $R$, gives one Choi copy. We therefore obtain one Bell sample with two queries. The protocol requires neither $U^\dagger$ nor controlled-$U$.

\begin{proposition}[Clifford-product-Clifford decomposition]
\label{sm:choireduction}
Let $U=C_{\rm out}(\bigotimes_\alpha U_\alpha)C_{\rm in}$, where the outer layers are Clifford and each $U_\alpha$ acts on at most $k$ qubits. Its Choi state satisfies
\begin{equation}
 |J(U)\rangle=(C_{\rm in}^{T}\otimes C_{\rm out})\bigotimes_\alpha|J(U_\alpha)\rangle,
 \qquad w_J(U)\le2k.
 \label{sm:choifactorization}
\end{equation}
Thus the minimum achievable largest block size $\kappa_{\rm layer}(U)$ of the product layer in such a decomposition obeys $\kappa_{\rm layer}(U)\ge\lceil w_J(U)/2\rceil$. Choi width is invariant under outer Clifford layers: $w_J(C_2UC_1)=w_J(U)$ for Clifford $C_1,C_2$.
\end{proposition}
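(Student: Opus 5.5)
The plan is to establish the factorization first and derive the width bounds from it. The starting point is the transpose trick $(I_R\otimes X)|\Phi_d\rangle=(X^T\otimes I_A)|\Phi_d\rangle$. We write $|J(U)\rangle=(I_R\otimes C_{\rm out}(\bigotimes_\alpha U_\alpha)C_{\rm in})|\Phi_d\rangle$ and move $C_{\rm in}$ onto the reference register as $C_{\rm in}^T$. Since $C_{\rm in}^T=(C_{\rm in}^*)^\dagger$ and complex conjugation maps Cliffords to Cliffords (it only changes Pauli signs, as already used in Sec.~\ref{sm:bell}), $C_{\rm in}^T\otimes C_{\rm out}$ is a Clifford on $2n$ qubits.

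The maximally entangled state factorizes over qubits as $|\Phi_d\rangle=\bigotimes_j|\Phi_2\rangle_{R_jA_j}$. Grouping the qubit pairs according to the blocks of the product layer gives $(I\otimes\bigotimes_\alpha U_\alpha)|\Phi_d\rangle=\bigotimes_\alpha|J(U_\alpha)\rangle_{R_\alpha A_\alpha}$. Each factor lives on at most $2k$ qubits, so $J(U)\in\CTP_{2n,2k}$ and hence $w_J(U)\le2k$.

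The lower bound on $\kappa_{\rm layer}(U)$ then follows by taking any decomposition of $U$ whose product layer has largest block size $k$. The factorization gives $w_J(U)\le2k$, so $k\ge\lceil w_J(U)/2\rceil$. Minimizing over all such decompositions yields $\kappa_{\rm layer}(U)\ge\lceil w_J(U)/2\rceil$.

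For invariance under outer Cliffords, the same transpose identity gives $|J(C_2UC_1)\rangle=(C_1^T\otimes C_2)|J(U)\rangle$, and $C_1^T\otimes C_2$ is Clifford. The class $\CTP_{2n,k'}$ is closed under left multiplication by Cliffords, because the hidden Clifford can absorb it. This gives $w_J(C_2UC_1)\le w_J(U)$. Applying the same argument with $C_1^{-1},C_2^{-1}$ gives the reverse inequality, so the two widths are equal.

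The only step needing care is confirming that $C^T$ is a Clifford unitary, not merely a unitary, which the earlier label computation showing that $C^*$ has the same symplectic action settles. Everything else is bookkeeping, so no genuine obstacle is expected.
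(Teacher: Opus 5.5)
Your proof is correct and follows the paper's argument. The transpose identity moves $C_{\rm in}$ onto the reference register as the Clifford $C_{\rm in}^T$, the qubitwise factorization of $|\Phi_d\rangle$ gives one Choi factor of at most $2k$ qubits per block, and the same identity shows that outer Cliffords act as a Clifford on $|J(U)\rangle$. You also spell out steps the paper leaves implicit: the integer rounding behind $\lceil w_J(U)/2\rceil$, and the two inequalities (using $C_1^{-1},C_2^{-1}$) that give equality of widths.
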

\begin{proof}
Using $(I\otimes A)|\Phi_d\rangle=(A^T\otimes I)|\Phi_d\rangle$, we move the input Clifford onto the reference register. Its transpose is again Clifford. Each block Choi state has twice as many qubits as the corresponding unitary acts on, proving Eq.~\eqref{sm:choifactorization} and the bound on $\kappa_{\rm layer}$. The same identity shows that outer Clifford layers act as a Clifford on the Choi state and hence preserve its width.
\end{proof}

Applying Theorems~\ref{thm:exact} and~\ref{thm:approx} to the $2n$ Choi qubits gives two learning guarantees. An inverse-polynomial quadratic gap allows exact recovery of the Clifford frame in polynomial time, using $O(\gamma_2(J(U))^{-1}(n^2+\log\delta^{-1}))$ forward queries. With the affine-gap promise, we can instead learn an approximate description independently of the quadratic gap. For logarithmic Choi width, both methods produce a compact hypothesis $\widehat\rho_J$ satisfying $\Dtr(\widehat\rho_J,\rho_J(U))\le\varepsilon$, with polynomial resources under their respective gap assumptions.

For these general block promises, the output is an \emph{improper process hypothesis}~\cite{WadhwaEtAl2025}: the learned Clifford on the $2n$ Choi qubits need not factor into input and output layers. This possibility persists even when the Choi state is represented exactly; it concerns the recovered frame, not whether the target process is unitary. In contrast, failure to represent a unitary channel concerns the finite-accuracy estimate: although the positive Choi operator $d\widehat\rho_J$ defines a completely positive map $\widehat\Phi$, this map need not preserve trace or be unitary. At zero state error, $\widehat\rho_J=\rho_J(U)$ necessarily represents the target unitary. Exact structural recovery in Theorem~\ref{thm:exact} does not remove the statistical error of the subsequent block tomography. Our guarantee applies to the normalized Choi state and implies the normalized superoperator Frobenius bound
\begin{equation}
 d_F(\mathcal U,\widehat\Phi)
 :=\frac1{\sqrt2}\|\rho_J(U)-\widehat\rho_J\|_{\rm F}
 \le\sqrt2\,\varepsilon.
 \label{sm:choimetric}
\end{equation}
When the hypothesis is $\rho_J(V)$ for a unitary $V$, its trace distance from the target is $[1-|\Tr(U^\dagger V)|^2/d^2]^{1/2}$. The squared distance equals $(d+1)(1-F_{\rm avg}(U,V))/d$. These bounds do not give a dimension-independent guarantee in diamond norm.

Choi width captures a different restriction from bounded Clifford extent~\cite{DuttEtAl2026} or a small $T$-gate count~\cite{LaiCheng2022,leone2024learning}. It allows an extensive number of independent non-Clifford blocks. It does not, however, bound the accumulation of arbitrary interacting non-Clifford gates in a circuit.

The learned description allows us to efficiently predict any requested Pauli-transfer coefficient,
\begin{equation}
 R_{Q,P}(U)=d^{-1}\Tr(QUPU^\dagger)
 =\Tr[\rho_J(U)(P^T\otimes Q)].
 \label{sm:paulitransfer}
\end{equation}
To evaluate this expectation, we conjugate $P^T\otimes Q$ by the learned Clifford and factor the resulting Pauli over the small blocks. A Choi hypothesis with trace-distance error $\varepsilon$ predicts every coefficient with additive error at most $2\varepsilon$.

\section{Quadratic anticoncentration and pseudorandomness}
\label{sm:pseudorandomness}

We now efficiently distinguish states with an exact hidden product cut from Haar-random states using polynomially many copies, without learning the cut or assuming a gap. A hidden cut supplies an extra quadratic equation. For Haar-random states, every equation outside the span of the full swap has violation probability at least $1/4-o(1)$ with high probability.

To quantify this distinction, let $Q_n=\{v\in\F_2^{2n}:q(v)=0\}$, where $q(x,z)=x\cdot z$, and write $N_n=|Q_n|=2^{n-1}(2^n+1)$. The Bell vectors labeled by $Q_n$ form a basis of the symmetric two-copy subspace. Thus every pure-state Bell distribution is supported on $Q_n$.

\begin{lemma}[Quadratic weight on the symmetric quadric]
\label{sm:quadricweight}
For $n\ge2$, the smallest nonzero quadratic weight on $Q_n$ is
\begin{equation}
 d_Q(n):=\min_{f\in\RM(2,2n)\setminus\Span\{q\}}
 |\{v\in Q_n:f(v)=1\}|=2^{n-2}(2^{n-1}-1),\qquad
 \alpha_n:=\frac{d_Q(n)}{N_n}=\frac{2^{n-1}-1}{2(2^n+1)}.
 \label{sm:quadricdistance}
\end{equation}
\end{lemma}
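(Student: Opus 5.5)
The plan is to convert the weight count into character sums and then use the classification of Boolean quadratic forms. For $g\in\RM(2,2n)$ write $S(g)=\sum_{v\in\F_2^{2n}}(-1)^{g(v)}$. Expanding the indicators $\tfrac12(1+(-1)^{q})$ of $Q_n$ and $\tfrac12(1-(-1)^{f})$ of $\{f=1\}$ gives
\begin{equation}
 |\{v\in Q_n:f(v)=1\}|=\tfrac14\bigl[4^n+S(q)-S(f)-S(f+q)\bigr],
 \qquad S(q)=2^n,
\end{equation}
where $S(q)=2^n$ follows from the single-qubit sum $\sum_{x,z}(-1)^{xz}=2$. As a check, $f=1$ yields $N_n$. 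Minimizing the weight is therefore the same as maximizing $S(f)+S(f+q)$ over $f\notin\Span\{q\}$.

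\emph{Standard input.} I would quote Dickson's theorem rather than reprove it. If the polarization $B_g$ from Eq.~\eqref{sm:polarization-background} has rank $2h$, then either $S(g)=0$ or $S(g)=\pm2^{2n-h}$; the nonzero case occurs exactly when the linear part of $g$ vanishes on $\ker B_g$. For affine $g$ this means $h=0$, so $S(g)=\pm4^n$ if $g$ is constant and $S(g)=0$ otherwise.

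\emph{Key constraint.} We have $B_{f+q}=B_f+J$, and $J$ has full rank $2n$. Hence
\begin{equation}
 \rank B_f+\rank(B_f+J)\ge 2n,\qquad h_1+h_2\ge n,
\end{equation}
where $2h_1=\rank B_f$ and $2h_2=\rank(B_f+J)$.

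\emph{Case analysis.}
\begin{itemize}
 \item If $h_1=0$, then $f$ is affine. The excluded constant $f=0$ leaves either $f=1$, which has weight $N_n$, or a nonconstant affine $f$. In the latter case $S(f)=0$ and $S(f+q)\le2^n$, so the weight is at least $4^{n-1}$. The case $h_2=0$ is symmetric: then $f+q$ is affine, $f=q$ is excluded, and the same bound follows.
 \item If $h_1=1$, then $h_2\ge n-1$, so
 \begin{equation}
  S(f)+S(f+q)\le2^{2n-1}+2^{n+1}.
 \end{equation}
 This gives weight at least $\tfrac14(2^{2n-1}-2^n)=2^{n-2}(2^{n-1}-1)$. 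The case $h_2=1$ gives the same bound by symmetry.
 \item If $h_1,h_2\ge2$, then with $h_1+h_2\ge n$ the sum $2^{2n-h_1}+2^{2n-h_2}$ is at most $2^{2n-2}+2^{n+2}$, attained at the extreme $h_1=2$, $h_2=n-2$. This does not exceed $2^{2n-1}+2^{n+1}$ because $2^{2n-2}\ge2^{n+1}$ for $n\ge3$, with equality at $n=2$. This case therefore never beats the bound.
\end{itemize}

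\emph{Attainment.} Take $f=x_1z_1$, the single-qubit swap $q_{\{1\}}$. Then $S(f)=2\cdot4^{n-1}=2^{2n-1}$ and $f+q=\sum_{i\ge2}x_iz_i$ has $S(f+q)=4\cdot2^{n-1}=2^{n+1}$, which gives exactly $d_Q(n)$. Dividing by $N_n=2^{n-1}(2^n+1)$ produces $\alpha_n$.

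\emph{Expected obstacle.} The main difficulty is getting the case split exactly right. I need to confirm that the zero-sum and negative-sign options only increase the weight, so it suffices to bound $S$ above by $2^{2n-h}$. I also need to handle the borderline cases $h_i\in\{0,1\}$ and the tie at $n=2$. The rank inequality $h_1+h_2\ge n$ is the essential ingredient that couples the two sums.
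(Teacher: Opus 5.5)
Your argument follows the paper's route: the same Walsh-sum identity for the weight on $Q_n$, the dichotomy $S(g)\in\{0,\pm2^{2n-h}\}$, the constraint $h_1+h_2\ge n$ from $B_{f+q}=B_f+J$, the same handling of the affine cases, and attainment by $f=x_1z_1$. The paper merges your last two cases into one, treating all $h_1,h_2\ge1$ at once: $2^{2n-h_1}+2^{2n-h_2}$ under $h_1+h_2\ge n$ is maximized at $\{h_1,h_2\}=\{1,n-1\}$.

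\textbf{The slip, in the case $h_1,h_2\ge2$.} You assert $2^{2n-2}\ge2^{n+1}$ ``with equality at $n=2$''. At $n=2$ this reads $4\ge8$, which is false; equality actually holds at $n=3$. So at $n=2$ your bound $2^{2n-2}+2^{n+2}=20$ exceeds the target $2^{2n-1}+2^{n+1}=16$. In addition, the ``extreme'' $h_2=n-2$ lies outside this case whenever $n\le3$.

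\textbf{The fix.} No rank constraint is needed here. The assumption $h_1,h_2\ge2$ gives directly $S(f)+S(f+q)\le2\cdot2^{2n-2}=2^{2n-1}<2^{2n-1}+2^{n+1}$ for every $n$.

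\textbf{A harmless imprecision.} Your nonvanishing criterion in Dickson's theorem is misstated. The Walsh sum is nonzero exactly when $g$ is constant on $\ker B_g$, not when its degree-one coefficients vanish there, because quadratic monomials can contribute a linear term on the radical. For example, $v_1v_2+v_1v_3+v_2v_3$ on $\F_2^3$ has zero linear part, yet $S=0$. You only use the criterion for affine $g$, where the two readings agree, so the proof is unaffected.
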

\begin{proof}
We bound the quadratic weight using the Walsh sum $\mathcal W(g)=\sum_v(-1)^{g(v)}$. Expanding the indicator of $Q_n$ gives
\begin{equation}
 |\{v\in Q_n:f(v)=1\}|=\frac{N_n}{2}
 -\frac{\mathcal W(f)+\mathcal W(f+q)}4.
 \label{sm:restrictedweight}
\end{equation}
For a quadratic $g$ whose alternating polar form has rank $2r_g$, the symplectic normal form gives either $\mathcal W(g)=0$ or $|\mathcal W(g)|=2^{2n-r_g}$. Set $r=r_f$ and $r'=r_{f+q}$. Since the polar form of $q$ has full rank, rank subadditivity requires $r+r'\ge n$. When both $r,r'\ge1$, we obtain
\begin{equation}
 \mathcal W(f)+\mathcal W(f+q)
 \le2^{2n-r}+2^{2n-r'}\le2^{2n-1}+2^{n+1}.
\end{equation}
The last expression is maximized at $\{r,r'\}=\{1,n-1\}$. If one rank vanishes, the corresponding function is affine. A nonconstant affine function has zero Walsh sum, and adding $q$ gives Walsh magnitude $2^n$. The remaining allowed cases with a constant function are $f=1$ and $f=1+q$; both have negative Walsh sums and satisfy the same upper bound. Inserting this bound into Eq.~\eqref{sm:restrictedweight} proves the claimed minimum. It is attained by $f=x_1z_1$, whose Walsh sums are $2^{2n-1}$ and $2^{n+1}$.
\end{proof}

\begin{proposition}[Haar-state quadratic gap]
\label{sm:haarpure}
A Haar-random $n$-qubit pure state $\Psi$ satisfies
\begin{equation}
 |\gamma_2(\Psi)-\alpha_n|\le c_{\rm gap}n2^{-n/2},
 \label{sm:haarpuregap}
\end{equation}
with probability at least $1-e^{-c n^2}$. Moreover, $M=c_{\rm samp}D_2(n)$ Bell samples give $K_M=\Span\{q\}$ with probability at least $1-e^{-c n^2}$. Here $c,c_{\rm gap},c_{\rm samp}>0$ are universal constants and $n$ is sufficiently large.
\end{proposition}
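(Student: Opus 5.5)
The plan is to compute the Haar average of every violation probability exactly, and then to use measure concentration together with a union bound over all $2^{D_2(n)}$ quadratics. For a fixed $f\in\RM(2,2n)$, write the violation probability as a two-copy expectation:
\[
\tau_f(\Psi)=\Pr_{p_\Psi}[f(\xi)=1]=\langle\Psi|^{\otimes2}\Pi_f|\Psi\rangle^{\otimes2},\qquad \Pi_f=\sum_{v:f(v)=1}|B_v\rangle\langle B_v| .
\]
The Haar average of $|\Psi\rangle\langle\Psi|^{\otimes2}$ is the normalized projector onto the symmetric subspace. As stated before Lemma~\ref{sm:quadricweight}, the Bell vectors with labels in $Q_n$ form a basis of this subspace, which has dimension $N_n$. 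Hence $\E\,p_\Psi(v)=1/N_n$ for $v\in Q_n$, and $\E\,p_\Psi(v)=0$ otherwise. It follows that
\[
\E\,\tau_f=\frac{|\{v\in Q_n:f(v)=1\}|}{N_n}.
\]
For $f\notin\Span\{q\}$, Lemma~\ref{sm:quadricweight} bounds this mean below by $\alpha_n$. The bound is attained by $f=x_1z_1$. For $f\in\Span\{q\}$, $\tau_f=0$ identically.

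Next I would apply Lévy's lemma on the unit sphere of $\mathbb C^{2^n}\cong\mathbb R^{2^{n+1}}$. The map $\Psi\mapsto\tau_f(\Psi)$ is a degree-$(2,2)$ polynomial with $\|\Pi_f\|\le1$, so its Lipschitz constant is at most an absolute constant (about $4$). This gives $\Pr[|\tau_f-\E\tau_f|\ge t]\le 2\exp(-c_02^nt^2)$. I would choose $t=c_{\rm gap}n2^{-n/2}$, so the tail becomes $2\exp(-c_0c_{\rm gap}^2n^2)$. The union bound costs a factor $2^{D_2(n)}\le 2^{2n^2+2n+1}$. Taking $c_{\rm gap}$ large enough, the bad event has probability at most $e^{-cn^2}$.

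On the good event, every $f\notin\Span\{q\}$ has $\tau_f\ge\alpha_n-t>0$. Consequently $K_\Psi=\Span\{q\}$ exactly, and $\gamma_2(\Psi)=\min_{f\notin\Span\{q\}}\tau_f\ge\alpha_n-t$. In the other direction, $\tau_{x_1z_1}\le\alpha_n+t$ on the same event, which proves Eq.~\eqref{sm:haarpuregap}.

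For the sample statement, I would condition on the good event. There $\gamma_2\ge\alpha_n-t\ge1/5$ for large $n$, since $\alpha_n\to1/4$. Equation~\eqref{sm:exact-kernel-prob} then gives
\[
\Pr[K_M\neq\Span\{q\}]\le 2^{D_2(n)}e^{-M/5},
\]
which is at most $e^{-cn^2}$ once $M=c_{\rm samp}D_2(n)$ with $c_{\rm samp}$ large enough. Summing the two failure probabilities and adjusting $c$ completes the argument.

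The main obstacle is balancing the concentration against the union bound. The Lipschitz constant must be dimension-free, which holds because $\|\Pi_f\|\le1$. The sphere dimension $2^{n+1}$ must then overwhelm the $2^{O(n^2)}$ candidate functions, and it does exactly at deviation scale $n2^{-n/2}$. I expect the constant-chasing in Lévy's lemma to be routine once the exact mean from Lemma~\ref{sm:quadricweight} is in hand.
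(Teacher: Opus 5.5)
Your proposal is correct and follows the paper's proof essentially step for step: the exact Haar mean via the symmetric-subspace projector and Lemma~\ref{sm:quadricweight}, $4$-Lipschitz sphere concentration with a union bound over the $2^{D_2(n)}$ quadratics at scale $t=c_{\rm gap}n2^{-n/2}$, and $f=x_1z_1$ for the upper bound. The only minor difference is in the sample statement: the paper reruns concentration at the constant deviation $t=\alpha_n/2$, with failure probability $\exp[-\Omega(2^n)]$, whereas you reuse the same good event, with failure probability $e^{-cn^2}$; this is equally sufficient for the claimed bound.
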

\begin{proof}
For each quadratic $f$, let $\Pi_f$ project onto the Bell vectors in $Q_n$ for which $f(v)=1$. Its violation probability is $\tau_f(\psi)=\langle\psi|^{\otimes2}\Pi_f|\psi\rangle^{\otimes2}$. The Haar second moment equals the normalized projector onto the symmetric subspace, yielding
\begin{equation}
 \E\tau_f(\Psi)=\frac{|\{v\in Q_n:f(v)=1\}|}{N_n}.
\end{equation}
We next control fluctuations around this mean. Tensor squaring is $2$-Lipschitz on unit vectors, and the expectation of a projector is also $2$-Lipschitz. Thus $\tau_f$ is $4$-Lipschitz in Euclidean norm. Sphere concentration and a union bound over the $2^{D_2(n)}$ quadratics give
\begin{equation}
 \Pr\!\left[\max_f|\tau_f-\E\tau_f|>t\right]
 \le2^{D_2(n)+1}\exp(-c_0 2^nt^2).
 \label{sm:haarpureunion}
\end{equation}
Choosing $t=c_{\rm gap}n2^{-n/2}$ and applying Lemma~\ref{sm:quadricweight} bounds the violation probability of every quadratic outside $\Span\{q\}$ from below. The minimizing equation $f=x_1z_1$ gives the corresponding upper bound, proving Eq.~\eqref{sm:haarpuregap}.

For the empirical kernel, choose the constant $t=\alpha_n/2$ instead. Except on an event of probability $\exp[-\Omega(2^n)]$, every quadratic outside $\Span\{q\}$ is violated with probability at least $\alpha_n/2$. The probability that any such equation holds on all $M$ samples is therefore bounded by
\begin{equation}
 \Pr[K_M\supsetneq\Span\{q\}]
 \le\exp[-\Omega(2^n)]+2^{D_2(n)}(1-\alpha_n/2)^M.
 \label{sm:haarempirical}
\end{equation}
Taking $c_{\rm samp}$ sufficiently large makes this probability at most $e^{-c n^2}$.
\end{proof}

\subsection{A distinguisher for hidden product states}
\label{sm:prs}

To distinguish hidden product states from Haar-random states, we collect $M=c_{\rm samp}D_2(n)$ Bell samples, form the quadratic feature matrix, and accept when $\dim K_M>1$. A nontrivial product cut contributes a partial-swap equation independent of the full swap. Clifford relabeling preserves its quadratic degree, and this exact equation survives every set of samples. Proposition~\ref{sm:haarpure} shows that Haar acceptance is negligible.

The feature matrix has $M$ rows and $D_2(n)=O(n^2)$ columns. Constructing it costs $O(Mn^2)$ bit operations. Ordinary Gaussian elimination computes its rank in $O(MD_2(n)^2)$ bit operations; storing the matrix uses $O(MD_2(n))$ bits. With $M=O(n^2)$, these bounds become $O(n^6)$ time and $O(n^4)$ memory. 

This proves the exact-cut state claim in Corollary~\ref{cor:prs}: any ensemble with nonnegligible weight on $\CPW(\psi)<n$ is distinguishable from Haar with $O(n^2)$ copies and polynomial classical processing. The test requires no gap assumption. It extends the distinguishability result from physical product cuts~\cite{BoulandEtAl2025} to product cuts hidden by an arbitrary Clifford.

The test also detects sufficiently small nonzero gaps. A quadratic with violation probability $\gamma_2>0$ holds on all $M$ samples with probability $(1-\gamma_2)^M$, which lower-bounds acceptance. Hence an ensemble with nonnegligible weight on $\gamma_2=O(\log n/n^2)$ also fails pseudorandom-state security~\cite{JiLiuSong2018}. %

\subsection{Concentration for Haar-unitary Choi states}
\label{sm:haarunitary}

To obtain the unitary distinguisher, we need anticoncentration for Choi states of Haar-random unitaries. These states have maximally mixed reference marginals and are not Haar-random states on $2n$ qubits. We therefore calculate their Bell probabilities separately.

\begin{proposition}[Haar-unitary quadratic gap]
\label{sm:haarchoigap}
Let $U$ be Haar random in $\mathrm U(d)$, with $d=2^n$. Its Choi state obeys
\begin{equation}
 \frac14-c_{\rm gap}n2^{-n}\le\gamma_2(J(U))\le\frac14,
 \label{sm:haarchoigapformula}
\end{equation}
with probability at least $1-e^{-c n^2}$. Furthermore, $M=c_{\rm samp}D_2(2n)=O(n^2)$ Bell samples give $K_M=\Span\{q\}$ except with probability $e^{-c n^2}$. The constants $c,c_{\rm gap},c_{\rm samp}>0$ are universal.
\end{proposition}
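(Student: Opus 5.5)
The plan follows the proof of Proposition~\ref{sm:haarpure}: an exact expectation computation, then a minimization over quadratics, then measure concentration with a union bound over all $2^{D_2(2n)}$ quadratics. Two things change. The Haar second moment on $2n$ qubits must be replaced by a Weingarten calculation for $|J(U)\rangle$. The concentration step must use the unitary group rather than the sphere.

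\emph{Upper bound.} This bound is deterministic. Take $f=q_{R_1}$, the swap equation of a single reference qubit. The reference marginal of $|J(U)\rangle$ is maximally mixed, so $\Tr\rho_{R_1}^2=1/2$. Eq.~\eqref{sm:swap-purity} then gives violation probability exactly $1/4$. Since $f\notin K_{J(U)}$, this shows $\gamma_2(J(U))\le 1/4$ for every $U$, with no probability needed.

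\emph{Exact expectation.} Write a Bell label as $v=(a,b)$, with $a$ on $R$ and $b$ on $A$. Using $(I\otimes U)|\Phi_d\rangle$ and the transpose trick, I expect
\[
p_{J(U)}(a,b)=4^{-n}d^{-2}\,\bigl|\Tr(W_a^TU^TW_bU)\bigr|^2 .
\]
This is a degree-$(2,2)$ polynomial in $(U,\bar U)$. The two-permutation Weingarten formula for $\E\,U^{\otimes2}\otimes\bar U^{\otimes2}$ then gives $\E p(a,b)$ explicitly. By Clifford invariance of the Haar measure on each register, I expect it to depend only on whether $a=0$, whether $b=0$, and on the values $q(a)$ and $q(b)$. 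Two cases are fixed exactly because $U$ is unitary: $p(a,0)=4^{-n}\delta_{a0}$ and $p(0,b)=4^{-n}\delta_{b0}$. The generic labels should carry weight $4^{-n}\bigl(1\pm O(d^{-1})\bigr)$ on the quadric $q(a)+q(b)=0$.

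\emph{Minimization.} Next, $\E\tau_f$ is the weighted count of violating labels. I would bound it from below exactly as in Lemma~\ref{sm:quadricweight}. Split the weight into a uniform measure on the $4n$-bit quadric $Q_{2n}$, which by that lemma has minimum relative weight $\alpha_{2n}=1/4-O(2^{-2n})$. Add correction terms supported on the affine subspaces $\{a=0\}$ and $\{b=0\}$, and an $O(d^{-1})$ reweighting. For the correction terms I would apply the Walsh-sum bound on restrictions of $f$ to those subspaces, since they remain quadratics there. The conclusion should be $\E\tau_f\ge 1/4-O(2^{-n})$ for every $f\notin\Span\{q\}$. This bookkeeping, together with the Weingarten computation, is the step I expect to be hardest. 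In particular, the subspace corrections have relative weight of order $2^{-n}$, so they must be shown to contribute with the right sign, or at most $O(2^{-n})$. If the direct Walsh bound is too lossy, I would instead evaluate exactly $\E\,\Tr[\rho_J^{\otimes2}\Pi_f]$ as a difference of two traces over the symmetric subspace twisted by the reference maximally mixed constraint.

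\emph{Concentration and conclusion.} The map $U\mapsto|J(U)\rangle=\mathrm{vec}(U)/\sqrt d$ is $d^{-1/2}$-Lipschitz in the Hilbert--Schmidt norm, and $\tau_f$ is $4$-Lipschitz in $|J\rangle$. Gromov--Milman concentration on $\mathrm U(d)$, or on $\mathrm{SU}(d)$ plus a phase, which does not affect $\tau_f$, therefore gives
\[
\Pr\bigl[|\tau_f-\E\tau_f|>t\bigr]\le 2\exp(-c_0d^2t^2).
\]
I would set $t=c_{\rm gap}n2^{-n}$ and take a union bound over the $2^{D_2(2n)}=2^{O(n^2)}$ quadratics. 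With $c_{\rm gap}$ large, the failure probability is $e^{-cn^2}$, which yields Eq.~\eqref{sm:haarchoigapformula}. For the empirical kernel, take the constant $t=1/8$ instead. Then, except with probability $\exp[-\Omega(4^n)]$, every quadratic outside $\Span\{q\}$ has violation probability at least $1/8$. The same estimate as Eq.~\eqref{sm:haarempirical}, with $M=c_{\rm samp}D_2(2n)$, gives $K_M=\Span\{q\}$ except with probability $e^{-cn^2}$.
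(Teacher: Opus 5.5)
Your overall architecture matches the paper's. The upper bound comes deterministically from a single reference-qubit swap (purity $1/2$, violation exactly $1/4$). An exact Haar mean is fed into Lemma~\ref{sm:quadricweight} on the $2n$ Choi qubits. Concentration uses the $4/\sqrt d$ Lipschitz bound on $\mathrm{SU}(d)$ together with phase invariance. A union bound over the $2^{D_2(2n)}$ quadratics with $t\propto n/d$ gives the gap, and a constant $t$ gives the empirical-kernel claim. Those steps are fine as you state them.

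The step you flag as hardest is where your sketch is wrong, and it is actually the easy step.

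\emph{What is wrong.} Your formula $p_{J(U)}(a,b)=4^{-n}d^{-2}|\Tr(W_a^TU^TW_bU)|^2$ is correct. At $b=0$ it gives $p(a,0)=4^{-n}d^{-2}|\Tr(W_a^TU^TU)|^2$, and $U^TU\neq I$ for a complex unitary. The identities $p(a,0)=4^{-n}\delta_{a0}$ and $p(0,b)=4^{-n}\delta_{b0}$ hold for the characteristic distribution $4^{-n}|\langle J|W_v|J\rangle|^2$, which involves $U^\dagger$. They do not hold for the Bell distribution, which involves $\langle J^*|$ and hence $U^T$. For Bell sampling, $(0,0)$ is not special: $\E p(0,0)=2/(d^3(d+1))$, and every allowed label has mean weight of order $d^{-4}$, not $4^{-n}$. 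Even the Clifford-invariance argument you invoke rules out special behaviour at $0$, since Cliffords act on Bell labels affinely; for example, the phase gate sends $(0,0)$ to $(0,1)$. So the correction terms on $\{a=0\}$ and $\{b=0\}$, whose sign you worry about, do not exist.

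\emph{How the paper does it.} The paper gets the exact mean without Weingarten calculus:
\begin{itemize}
\item The amplitude is $d^{-1}\langle B_b|U^{\otimes2}|\overline{B_a}\rangle$.
\item The two vectors are swap eigenvectors with eigenvalues $(-1)^{q(b)}$ and $(-1)^{q(a)}$.
\item $U^{\otimes2}$ acts irreducibly and inequivalently on the symmetric and antisymmetric subspaces, of dimensions $N_\pm=d(d\pm1)/2$.
\end{itemize}
Schur's lemma then gives $\E_Up_U(a,b)=1/(d^2N_\sigma)$ when $q(a)=q(b)$ (sector $\sigma$), and zero otherwise. This is uniform within each sector. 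Bounding every allowed weight below by $1/(d^2N_+)$ and applying Lemma~\ref{sm:quadricweight} yields
\[
\E\tau_f\ge \frac{d_Q(2n)}{d^2N_+}=\frac{d^2-2}{4d(d+1)}
\]
for all $f\notin\Span\{q\}$, with no sign bookkeeping. Your concentration and union-bound steps then finish the proof unchanged.
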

\begin{proof}
We group the two Choi copies into the pair of reference registers and the pair of output registers, with Bell labels $a,b\in\F_2^{2n}$. Contracting the Bell vectors with the two maximally entangled states gives
\begin{equation}
 p_U(a,b)=\frac1{d^2}
 \left|\langle B_b|U^{\otimes2}|\overline{B_a}\rangle\right|^2.
 \label{sm:choibellprobability}
\end{equation}
Let $N_\pm=d(d\pm1)/2$ denote the dimensions of the symmetric and antisymmetric subspaces. These subspaces carry inequivalent irreducible representations of $U\mapsto U^{\otimes2}$. The two Bell vectors have swap eigenvalues $(-1)^{q(a)}$ and $(-1)^{q(b)}$, so Eq.~\eqref{sm:choibellprobability} vanishes for opposite parities. Within a common parity sector $\sigma\in\{+,-\}$, Schur twirling gives the exact mean probability
\begin{equation}
 \E_U p_U(a,b)=\frac1{d^2N_\sigma}
 \quad\text{if }(-1)^{q(a)}=(-1)^{q(b)}=\sigma.
 \label{sm:choibellmean}
\end{equation}
Complex conjugation preserves the swap parity of a Bell vector. Thus the allowed labels form $Q_{2n}=\{(a,b):q(a)+q(b)=0\}$, and each has mean probability at least $1/(d^2N_+)$. Applying Lemma~\ref{sm:quadricweight}, we find that every $f\notin\Span\{q\}$ satisfies
\begin{equation}
 \E_U\tau_f(J(U))\ge\frac{d_Q(2n)}{d^2N_+}
 =\frac{d^2-2}{4d(d+1)}=\frac14-O(d^{-1}).
 \label{sm:choimeanlower}
\end{equation}

To control deviations from this mean, use $\|J(U)-J(V)\|_2=d^{-1/2}\|U-V\|_{\rm HS}$. Together with the $4$-Lipschitz bound for $\tau_f$, this makes $U\mapsto\tau_f(J(U))$ $4/\sqrt d$-Lipschitz in Hilbert--Schmidt distance. Applying concentration on $\mathrm{SU}(d)$~\cite[Proposition 2.2]{MeckesMeckes2013} to this function and its negative gives
\begin{equation}
 \Pr[|\tau_f-\E\tau_f|>t]\le2\exp(-c_0d^2t^2).
 \label{sm:choiconcentration}
\end{equation}
The same bound holds for Haar $\mathrm U(d)$: multiplying a Haar special unitary by an independent uniform phase gives a Haar unitary, and $\tau_f$ is invariant under this phase. We now take $t=c_{\rm gap}n/d$ and a union bound over the $2^{D_2(2n)}$ quadratics. This proves the lower bound in Eq.~\eqref{sm:haarchoigapformula} and excludes additional exact equations with the stated probability.

For the upper bound, each reference qubit is maximally mixed for every unitary $U$. Its partial-swap equation therefore has nonzero violation $(1-\Tr[(I/2)^2])/2=1/4$. To obtain the empirical-kernel statement, choose a constant $t$ in Eq.~\eqref{sm:choiconcentration} and apply the union bound in Eq.~\eqref{sm:haarempirical} to $2n$ qubits.
\end{proof}

\begin{corollary}[Hidden Choi cuts exclude pseudorandom unitaries]
\label{sm:pru}
Any efficiently generated ensemble of $n$-qubit unitaries with nonnegligible probability on $w_J(U)<2n$ can be distinguished from a Haar-unitary oracle using $O(n^2)$ forward queries and polynomial processing. In particular, the conclusion applies to ensembles with a product layer between two Clifford circuits when every block in that layer acts on at most $k<n$ qubits.
\end{corollary}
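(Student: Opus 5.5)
The distinguisher is the same quadratic-kernel rank test as in Sec.~\ref{sm:prs}, applied to the $2n$-qubit Choi state. With query access, we prepare $M=c_{\rm samp}D_2(2n)=O(n^2)$ Bell samples of $|J(U)\rangle$; each sample uses two forward queries on halves of maximally entangled states prepared by the learner. We form the quadratic feature matrix, compute its nullspace $K_M$ by Gaussian elimination, and accept when $\dim K_M>1$. The classical cost is $O(MD_2(2n)^2)=O(n^6)$ bit operations, which is polynomial.

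\emph{Completeness.} Suppose $w_J(U)<2n$. By definition of $\CPW$, there is a Clifford $C$ on the $2n$ Choi qubits and a nontrivial subset $A$ such that $C^\dagger|J(U)\rangle$ is a product across $A|\bar A$. Then $q_A\circ g_{C^\dagger}$ is an exact Bell equation for $J(U)$. Its polarization is $F_C^{-1}P_AF_C$ by Eq.~\eqref{sm:polarization-covariance}. This differs from both $0$ and $I$, so the equation lies outside $\Span\{q\}$, because the elements of that span have polarization $0$ or $I$. An exact equation survives every sample realization, so on such $U$ the test accepts with probability one. If the ensemble places nonnegligible weight $\epsilon(n)$ on $w_J(U)<2n$, the acceptance probability is therefore at least $\epsilon(n)$.

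\emph{Soundness.} For a Haar-random oracle, Proposition~\ref{sm:haarchoigap} shows that $K_M=\Span\{q\}$ except with probability $e^{-cn^2}$. The distinguishing advantage is therefore at least $\epsilon(n)-e^{-cn^2}$, which is nonnegligible. An efficient ensemble with this property thus cannot be pseudorandom. The only delicate point is to make sure the Choi-state statistics are exactly those used in Proposition~\ref{sm:haarchoigap}, namely forward queries with a learner-held reference. This matches Eq.~\eqref{sm:choidefinition}, so no $U^\dagger$ or controlled access is needed.

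\emph{The ``in particular'' claim.} Let $U=C_{\rm out}(\bigotimes_\alpha U_\alpha)C_{\rm in}$ with every block of size at most $k<n$. Proposition~\ref{sm:choireduction} writes $|J(U)\rangle$ as a Clifford applied to $\bigotimes_\alpha|J(U_\alpha)\rangle$, and each factor has at most $2k<2n$ qubits. Since the blocks have at most $k<n$ qubits, there are at least two blocks, so this exhibits a nontrivial product cut in a Clifford frame. Hence $w_J(U)\le 2k<2n$ holds deterministically on such ensembles, and the first part applies with weight one. I expect no step to be a real obstacle. Essentially all the difficulty sits in the anticoncentration Proposition~\ref{sm:haarchoigap}, which is assumed here. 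The remaining check is that a nontrivial cut really gives an equation outside $\Span\{q\}$, and that follows from the polarization argument above.
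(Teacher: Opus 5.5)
Your proposal is correct and follows the paper's proof: the same $\dim K_M>1$ test on Choi Bell samples from forward queries, completeness because an exact partial-swap equation outside $\Span\{q\}$ survives every sample, soundness from Proposition~\ref{sm:haarchoigap}, and the ``in particular'' part from Eq.~\eqref{sm:choifactorization}. (One harmless slip: by Eq.~\eqref{sm:polarization-covariance}, the polarization of $q_A\circ g_{C^\dagger}=q_A\circ g_C^{-1}$ is $F_CP_AF_C^{-1}$, not $F_C^{-1}P_AF_C$; either way it is conjugate to $P_A\notin\{0,I\}$, so your conclusion is unaffected.)
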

\begin{proof}
We prepare Choi copies with forward queries and apply the quadratic nullspace test to their $2n$ qubits. If $w_J(U)<2n$, an extra exact quadratic survives every sample, so the test accepts with certainty. Proposition~\ref{sm:haarchoigap} bounds its Haar acceptance probability by $e^{-\Omega(n^2)}$. For the stated Clifford-product-Clifford decompositions, Eq.~\eqref{sm:choifactorization} gives $w_J(U)\le2k<2n$. The observer applies $U$ only to a chosen register entangled with a retained reference; inverse and controlled oracles are unnecessary.
\end{proof}

\section{Classical Bell sampling from MPSs and Clifford-augmented MPSs}
\label{sm:mps-bell}

If the input is supplied as an MPS, the Bell samples used by our algorithms can be generated classically. We give a sequential sampler with cost $O(n\chi^3)$ per sample for an open-boundary MPS of maximum bond dimension $\chi$. The construction uses canonical-form sampling~\cite{FerrisVidal2012} and is closely related to perfect Pauli sampling of MPSs~\cite{LamiCollura2023Sampling}. Here the required distribution is specifically
\begin{equation}
 p_\psi(v)=2^{-n}|\langle\psi^*|W_v|\psi\rangle|^2,
 \label{sm:mps-bell-target}
\end{equation}
corresponding to Bell measurements on $|\psi\rangle\otimes|\psi\rangle$. For a complex state, this generally differs from the Pauli distribution $2^{-n}|\langle\psi|W_v|\psi\rangle|^2$. The distinction determines which conjugations appear in the contraction below. We then extend the sampler to a supplied Clifford-augmented MPS by an affine relabeling of the samples.

\subsection{An MPS for the Bell amplitudes}

Write a normalized open-boundary MPS as
\begin{equation}
 |\psi\rangle=\sum_{a_1,\ldots,a_n\in\{0,1\}}
 A_1^{a_1}A_2^{a_2}\cdots A_n^{a_n}|a_1\cdots a_n\rangle,
 \qquad A_j^a\in\mathbb C^{\chi_{j-1}\times\chi_j},
 \quad\chi_0=\chi_n=1.
 \label{sm:mps-input}
\end{equation}
We choose a right-canonical gauge,
\begin{equation}
 \sum_{a=0}^1 A_j^a(A_j^a)^\dagger=I_{\chi_{j-1}}.
 \label{sm:mps-right-canonical}
\end{equation}
A sweep of QR or singular-value decompositions puts the MPS in this form in $O(n\chi^3)$ operations without truncating the state. The canonical condition means that the unmeasured suffix states indexed by the left virtual bond are orthonormal.

At site $j$, the two copies have the tensor $A_j^a\otimes A_j^b$. For a single-site Bell label $v_j=(x_j,z_j)$, let
$\beta_v^{ab}=\langle B_v|a,b\rangle$. Our convention
$|B_v\rangle=(I\otimes W_v)(|00\rangle+|11\rangle)/\sqrt2$ gives
$\beta_v^{ab}=(W_v)_{ba}^*/\sqrt2=(W_v)_{ab}/\sqrt2$, since $W_v$ is Hermitian. Thus the Bell-amplitude tensors are
\begin{equation}
 \mathcal A_j^v=\sum_{a,b=0}^1\beta_v^{ab}
                    A_j^a\otimes A_j^b,
 \qquad
 \langle B_{v_1}\cdots B_{v_n}|\psi\rangle^{\otimes2}
 =\mathcal A_1^{v_1}\cdots\mathcal A_n^{v_n}.
 \label{sm:mps-bell-tensor}
\end{equation}
The amplitude MPS has virtual dimension at most $\chi^2$. Its local physical dimension is four, one for each single-pair Bell label. Completeness of the Bell basis and Eq.~\eqref{sm:mps-right-canonical} imply
\begin{equation}
 \sum_v\mathcal A_j^v(\mathcal A_j^v)^\dagger
 =\sum_{a,b}A_j^a(A_j^a)^\dagger\otimes
                    A_j^b(A_j^b)^\dagger
 =I_{\chi_{j-1}^2}.
 \label{sm:mps-bell-canonical}
\end{equation}
Consequently, summing over all single-pair Bell labels to the right of a chosen prefix contracts the suffix to the identity. This is what makes its conditional probabilities inexpensive to evaluate.

\subsection{Sequential conditional probabilities}

After drawing $v_1,\ldots,v_{j-1}$, store the normalized prefix amplitude as a matrix
$L_{j-1}\in\mathbb C^{\chi_{j-1}\times\chi_{j-1}}$, with
$\|L_{j-1}\|_{\rm F}=1$. Its two indices are the virtual indices of the two copies. Initially $L_0=(1)$. For each candidate single-pair Bell label $v$, contract the next site to obtain
\begin{equation}
 T_j^v=\sum_{a,b=0}^1\beta_v^{ab}
             (A_j^a)^T L_{j-1}A_j^b,
 \qquad
 \pi_j(v\mid v_1,\ldots,v_{j-1})=\|T_j^v\|_{\rm F}^2.
 \label{sm:mps-conditional}
\end{equation}
The transpose in this expression is an ordinary transpose, not an adjoint: both copies contribute the original ket tensors. Complex conjugation enters when taking the squared Frobenius norm. Equation~\eqref{sm:mps-bell-canonical} gives $\sum_v\pi_j(v\mid v_{<j})=1$.

For implementation, define the four matrices
$K_j^{ab}=(A_j^a)^T L_{j-1}A_j^b$. The candidate contractions are
\begin{align}
 T_j^I&=(K_j^{00}+K_j^{11})/\sqrt2,
 &T_j^Z&=(K_j^{00}-K_j^{11})/\sqrt2,\nonumber\\
 T_j^X&=(K_j^{01}+K_j^{10})/\sqrt2,
 &T_j^Y&=i(K_j^{10}-K_j^{01})/\sqrt2.
 \label{sm:mps-four-outcomes}
\end{align}
The labels $I,X,Z,Y$ correspond to $(0,0),(1,0),(0,1),(1,1)$, respectively. Draw $v_j$ from these four probabilities and update
\begin{equation}
 L_j=\frac{T_j^{v_j}}{\sqrt{\pi_j(v_j\mid v_{<j})}}.
 \label{sm:mps-prefix-update}
\end{equation}
A zero-probability branch is never selected. Repeat until $j=n$, and return the binary sample $v=(x_1,\ldots,x_n,z_1,\ldots,z_n)$. Reset $L_0$ and use fresh random draws for each independent sample.

To verify the distribution, let $U_j$ be the unnormalized prefix obtained by contracting the first $j$ tensors in Eq.~\eqref{sm:mps-bell-tensor}, reshaped as a matrix. The right-canonical suffix makes the probability of that prefix exactly $\|U_j\|_{\rm F}^2$. Equation~\eqref{sm:mps-conditional} is therefore the ratio of the probabilities of the extended and preceding prefixes. Their product telescopes:
\begin{equation}
 \prod_{j=1}^n\pi_j(v_j\mid v_{<j})
 =\left|\mathcal A_1^{v_1}\cdots\mathcal A_n^{v_n}\right|^2
 =p_\psi(v).
 \label{sm:mps-sampling-correctness}
\end{equation}
Thus the sampler produces the same distribution as the physical two-copy Bell measurement, without rejection sampling or a Markov chain.

\subsection{Cost and extension to Clifford-augmented MPSs}

Each matrix $K_j^{ab}$ requires matrix multiplications of dimensions at most $\chi$. Thus evaluating the four conditional probabilities and updating the prefix costs $O(\chi^3)$ arithmetic operations per site. %
One Bell sample costs $O(n\chi^3)$ arithmetic operations and $O(\chi^2)$ working memory, in addition to the $O(n\chi^2)$ storage for the input MPS and $O(n)$ for the sample. Canonicalizing the input once gives a total cost of $O((M+1)n\chi^3)$ for $M$ independent samples.

Now suppose that the supplied representation is
\begin{equation}
 |\psi\rangle=C|\varphi\rangle,
 \label{sm:camps-input}
\end{equation}
where $C$ is a known Clifford and $|\varphi\rangle$ is a normalized MPS of maximum bond dimension $\chi_{\rm in}$. Equation~\eqref{sm:affine-action} gives
\begin{equation}
 p_\psi(F_Cv+t_C)=p_\varphi(v),
 \label{sm:camps-bell-distribution}
\end{equation}
where $t_C$ is the Pauli label of $CC^T$. We therefore sample $v$ from the residual MPS using the procedure above and compute $F_Cv+t_C$, which takes $O(n^2)$ time. Repeating these two steps with independent samples gives the Bell distribution of the full Clifford-augmented state, without applying $C$ to its MPS tensors.

\section{Clifford-product width}
\label{sm:nsee}

Our decomposition identifies independent subsystems after a global Clifford transformation. We relate Clifford-product width to the zero set of the summed nonstabilizerness entanglement entropy (NsEE)~\cite{HuangQianQin2025NsEE}. For a pure $n$-qubit state, define
\begin{equation}
 E_{\mathrm{Ns}}(\psi)
 =\min_{C\in\Cliff_n}\sum_{j=1}^{n-1}S\bigl(\rho_{[j]}^C\bigr),
 \qquad
 \rho_{[j]}^C=\Tr_{[j]^c}\!\left(C|\psi\rangle\langle\psi|C^\dagger\right),
 \label{sm:nsee-definition}
\end{equation}
where $[j]=\{1,\ldots,j\}$ and $S(\rho)=-\Tr(\rho\log_2\rho)$. The same Clifford is used for all cuts: the minimization is outside the sum. The optimization includes qubit permutations, so the quantity does not depend on the initial ordering. The sum is empty for $n=1$.

\begin{proposition}[Vanishing NsEE and width one]
\label{sm:nsee-zero}
For every pure $n$-qubit state,
\begin{equation}
 E_{\mathrm{Ns}}(\psi)=0
 \quad\Longleftrightarrow\quad
 \CPW(\psi)=1.
 \label{sm:nsee-zero-equivalence}
\end{equation}
\end{proposition}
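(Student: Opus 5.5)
Both directions reduce to the definition of $\CPW$ in Eq.~\eqref{eq:cp-width} and the fact that von Neumann entropies are nonnegative, so a sum of them vanishes only if every term does. No learning theory is needed; the only inputs are the Schmidt-decomposition facts recalled in Sec.~\ref{sm:operator-background}.

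\emph{($\Leftarrow$)} Suppose $\CPW(\psi)=1$. Then there is a Clifford $C$ with $C^\dagger|\psi\rangle=\bigotimes_{i=1}^n|\phi_i\rangle$, with single-qubit factors; the partition into singletons can be taken in the natural order, since permutations are Cliffords. Evaluate the objective in Eq.~\eqref{sm:nsee-definition} at the Clifford $C^\dagger$. Every prefix marginal is then $\rho_{[j]}=\bigotimes_{i\le j}|\phi_i\rangle\langle\phi_i|$, which is pure and so has zero entropy. Hence the sum is zero, and since each term is nonnegative the minimum $E_{\mathrm{Ns}}(\psi)$ equals zero.

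\emph{($\Rightarrow$)} Suppose $E_{\mathrm{Ns}}(\psi)=0$. The minimum is attained because $\Cliff_n$ is finite modulo phases, and phases do not change the marginals. So there is a Clifford $C$ with $S(\rho_{[j]}^C)=0$ for every $j=1,\ldots,n-1$. Zero entropy of a marginal of the pure state $|\chi\rangle=C|\psi\rangle$ means that marginal is pure, so $|\chi\rangle$ is a product across each cut $[j]|[j]^c$.

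I then argue by induction on $j$ that $|\chi\rangle=|\phi_1\rangle\otimes\cdots\otimes|\phi_j\rangle\otimes|\omega_j\rangle$. For the inductive step, assume this form holds for $j-1$ and that $|\chi\rangle$ is also a product across $[j]$. The marginal on $[j]$ is then $|\phi_1\cdots\phi_{j-1}\rangle\langle\cdot|\otimes\rho_j$, where $\rho_j$ is the single-qubit marginal of $|\omega_{j-1}\rangle$. Its purity is $\Tr\rho_j^2$, so $\Tr\rho_j^2=1$. This is exactly the purity-multiplication argument used in the proof of Theorem~\ref{sm:boolean}. Therefore $|\omega_{j-1}\rangle$ factors off qubit $j$, completing the step.

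After $n-1$ steps the state is a product of $n$ single-qubit states, so $\psi=C^\dagger\bigotimes_i|\phi_i\rangle\in\CTP_{n,1}$ and $\CPW(\psi)\le1$. Equality holds because every block contains at least one qubit. For $n=1$ both sides hold trivially: the sum is empty and the width is one.

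The only step that needs care is the inductive refinement of the nested cuts, together with checking that pure marginals imply exact product structure; both follow from the Schmidt argument already in the paper. I expect no real obstacle beyond noting that the minimum is attained.
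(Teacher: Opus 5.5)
Your proposal is correct and follows essentially the same route as the paper's proof: you evaluate the objective at the decoding Clifford for $(\Leftarrow)$, and for $(\Rightarrow)$ you use attainment of the minimum over the finite Clifford group followed by an inductive peeling of single qubits across the nested cuts. The only cosmetic difference is that your inductive step uses multiplicativity of purity ($\Tr\rho_j^2=1$) where the paper argues via the zero entropy of the factored single-qubit marginal; the two are equivalent here.
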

\begin{proof}
If $\CPW(\psi)=1$, there is a Clifford $D$ such that $D^\dagger|\psi\rangle=\bigotimes_{i=1}^n|\phi_i\rangle$. Choosing $C=D^\dagger$ in Eq.~\eqref{sm:nsee-definition} makes every reduced state $\rho_{[j]}^C$ pure. All entropies vanish, and their nonnegativity gives $E_{\mathrm{Ns}}(\psi)=0$.

Conversely, the Clifford group is finite modulo global phases, so the minimum in Eq.~\eqref{sm:nsee-definition} is attained by some $C_*$. If this minimum is zero, each nonnegative summand is zero. Write $|\chi\rangle=C_*|\psi\rangle$. Purity of its first-qubit reduced state implies, by the Schmidt decomposition,
\begin{equation}
 |\chi\rangle=|\phi_1\rangle\otimes|\chi_{2\ldots n}\rangle.
 \label{sm:nsee-first-factor}
\end{equation}
Suppose inductively that the first $j-1$ qubits have been separated. The reduced state on $[j]$ is then the tensor product of these pure single-qubit states and the reduced state of qubit $j$ in the remaining pure state. Since its entropy is zero, that single-qubit reduced state is also pure. A further Schmidt decomposition separates qubit $j$. Iterating through $j=n-1$ gives $|\chi\rangle=\bigotimes_{i=1}^n|\phi_i\rangle$, and hence $\CPW(\psi)=1$. For $n=1$, both statements hold directly.
\end{proof}

\end{document}